\documentclass[a4paper,UKenglish]{lipics-v2016}

\usepackage{amsmath}
\usepackage{amsthm}
\usepackage{amssymb}
\usepackage{listings}
\lstset{backgroundcolor=\color[rgb]{0.95,0.95,0.95}}
        
\usepackage{tikz}
\usepackage{complexity}

\lstdefinelanguage{ML}{
  alsoletter={*},
  morekeywords={datatype, of, if, *},
  sensitive=true,
  morecomment=[s]{/*}{*/},
  morestring=[b]"
}

\lstdefinelanguage{scala}{
  alsoletter={@=>},
  morekeywords={nothing, abstract, case, catch, choose, class, def, do, else, extends, final, finally, for, if, implicit, import, match, new, null, object, let,
override, package, private, protected, requires, return, sealed, super, this, throw, trait, try, type, val, var, while, yield, domain, template, res, time,
postcondition, precondition,invariant, constraint, assert, each, _, return, @generator, ensure, require, ensuring, assuming, otherwise, asserting}
  sensitive=true,
  morecomment=[l]{//},
  morecomment=[s]{/*}{*/},
  morestring=[b]"
}

\newcommand{\codestyle}{\small\sffamily}

\lstset{
  language=scala,
  showstringspaces=false,
  columns=fullflexible,
  mathescape=true,
  numbers=none,
  numberstyle=\tiny,
  basicstyle=\codestyle
} 

 \usepackage{array}
\usepackage{todonotes}
\usepackage{float}
\usepackage{stmaryrd}
\usepackage{hyperref}
\usepackage{algorithm}
\usepackage{thmtools}
\usepackage{thm-restate}
\usepackage{tikz}
\usetikzlibrary{automata, arrows, calc}

\usepackage{microtype}%

\bibliographystyle{plainurl}

\declaretheorem[name=Theorem,numberwithin=subsection]{thm}
\declaretheorem[name=Lemma,numberwithin=subsection]{lem}

\begin{document}

\newcommand{\cL}{{\cal L}}
\newcommand*\circled[1]{\tikz[baseline=(char.base)]{
		\node[shape=circle,draw,inner sep=2pt] (char) {#1};}}

\newcommand{\circq}{\mbox{$\bigcirc$\kern-0.74em{\texttt{?}}\hspace{0.3em}}}

\newcommand{\method}[1]{\texttt{\textbf{#1}}}

\newcommand{\symbolstring}{\ensuremath{\mathrm{string}}}
\newcommand{\symbolint}{\ensuremath{\mathrm{int}}}
\newcommand{\symbolboolean}{\ensuremath{\mathrm{bool}}}

\newcommand\pto{\mathrel{\ooalign{\hfil$\mapstochar$\hfil\cr$\to$\cr}}}
\newcommand{\parentof}{\ensuremath{\rightarrow}}
\newcommand{\ancestorof}{\ensuremath{\rightsquigarrow}}
\newcommand{\parentofconcrete}{\ensuremath{\stackrel{c}{\rightarrow}}}
\newcommand{\ancestorofconcrete}{\ensuremath{\stackrel{c}{\rightsquigarrow}}}

\newcommand{\assignment}{\morph}

\newcommand{\nonterm}[1]{\ensuremath{\texttt{N}_{\texttt{#1}}}}
\newcommand{\term}[2]{\ensuremath{\texttt{T}_{\texttt{#1},\{#2\}}}}
\newcommand{\termX}[1]{\ensuremath{x_{#1}}}

\newcommand{\grule}[2]{\ensuremath{\nonterm{#1} \rightarrow #2}}

\newcommand*{\ruleset}[1]{%
	\begin{center}\begin{tabular}{lcl}%
			\ruleScan#1\relax\relax
		}
		\newcommand{\ruleScan}[2]{%
			\ifx\relax#1%
		\end{tabular}\end{center}%
		\else
		\nonterm{#1} & $\rightarrow$ & #2\\
		\expandafter\ruleScan
		\fi
	}
\newcommand{\rulesep}{\ensuremath{\vert}~}

\newcounter{numberofttodos}

\newcommand{\set}[1]{\{{#1}\}}
\newcommand{\ab}{\allowbreak}
\newcommand{\vtodo}[1]{\todo[size=\tiny, color=green!40]{#1}}
\newcommand{\ttodo}[1]{
    \stepcounter{numberofttodos}
    \todo[inline, color=blue!40]{
        (\thenumberofttodos{}) #1
    }
}
\newcommand{\ttodotext}[1]{
    \stepcounter{numberofttodos}
    \todo[inline, color=yellow!40]{
        (\thenumberofttodos{}) \colorbox{yellow!40}{[TEXT]} #1
    }
}
\newcommand{\ttodoproof}[1]{
    \stepcounter{numberofttodos}
    \todo[inline, color=red!40]{
        (\thenumberofttodos{}) \colorbox{red!40}{[PROOF]} #1
    }
}
\newcommand{\ttodoexample}[1]{
    \stepcounter{numberofttodos}
    \todo[inline, color=green!40]{
        (\thenumberofttodos{}) \colorbox{green!40}{[EXAMPLE]} #1
    }
}
\newcommand{\ttodoplan}[1]{
    \stepcounter{numberofttodos}
    \todo[inline, color=blue!40]{
        (\thenumberofttodos{}) \colorbox{blue!40}{[PLAN]} #1
    }
}

\newcommand{\adt}{A}

\newcommand{\yes}{{\sf Yes}}
\newcommand{\no}{{\sf No}}
\newcommand{\solutions}{{\tt sol}}

\newcommand{\treetestset}{tree test set}
\newcommand{\shape}{shape}
\newcommand{\regular}{sequential}

\newcommand{\Tree}{\mathcal{T}}
\newcommand{\TreeA}[1]{\Tree_{#1}}
\newcommand{\Nat}{\mathbb{N}}
\newcommand{\Var}{\mathbb{X}}

\newcommand{\sample}{\mathcal{S}}
\newcommand{\morph}{\mu}
\newcommand{\eq}{e}
\newcommand{\form}{\varphi}
\newcommand{\concat}{y}

\newcommand{\gram}{G}
\newcommand{\NTerm}{N}
\newcommand{\Term}{\Sigma}
\newcommand{\Prod}{R}
\newcommand{\Start}{S}
\newcommand{\getyield}[1]{{\sf yield}({#1})}
\newcommand{\rhs}{\mathit{rhs}}
\newcommand{\nterm}{A}
\newcommand{\mkrule}[2]{{#1} \rightarrow {#2}}
\newcommand{\rul}{r}
\newcommand{\west}[1]{\pi({#1})}
\newcommand{\east}[1]{\overline{\pi}({#1})}
\newcommand{\closing}[1]{\overline{#1}}

\newcommand{\final}{\bot}
\newcommand{\getgraph}{{\sf graph}}
\newcommand{\gettree}{{\sf tree}}
\newcommand{\getpath}{{\sf path}}
\newcommand{\chemin}{P}
\newcommand{\optimalset}{\Phi}

\newcommand{\rp}{y}
\newcommand{\spec}{\#}
\newcommand{\mkstate}[3]{{#1}_{({#2},{#3})}}
\newcommand{\substring}[3]{{#1}[#2,#2+#3]}

\newcommand{\stw}{{\sf 1STS}}
\newcommand{\trans}{\tau}
\newcommand{\utrans}{\tau_u}
\newcommand{\State}{Q}
\newcommand{\Alphabet}{\Sigma}
\newcommand{\initial}{\mathit{init}}
\newcommand{\Output}{\Gamma}
\newcommand{\sh}{S}
\newcommand{\semantics}[1]{{\llbracket}{#1}{\rrbracket}}
\newcommand{\domain}{D}
\newcommand{\default}[1]{\trans_{#1}}
\newcommand{\getmorph}[1]{{\tt morph}[{#1}]}
\newcommand{\gettransducer}[1]{{\tt sts}({#1})}

\newcommand{\ranked}[2]{{#1}^{({#2})}}
\newcommand{\noda}{A}
\newcommand{\nodb}{B}
\newcommand{\nodc}{C}
\newcommand{\leaf}{F}

\newcommand{\SigmaPairs}{\overline{\Sigma}}

\newcommand{\restrict}[2]{{#1}_{|{#2}}}
\newcommand{\project}[2]{{#1}_{|{#2}}}

\newcommand{\lcs}{{\sf lcs}}
\newcommand{\lcp}{{\sf lcp}}
\newcommand{\dom}{{\mathit dom}}
\newcommand{\getequation}{{\sf equation}}
\newcommand{\getrequation}{{\sf regEquation}}
\newcommand{\getroot}{{\sf root}}

\newcommand{\startsymbol}{\ensuremath{S_0}}
\newcommand{\Lin}[1]{\ensuremath{\text{Lin}(#1)}}
\newcommand{\AllInit}[1]{\ensuremath{\text{AllInit}(#1)}}
\newcommand{\testset}[1]{\ensuremath{T\!S_{#1}}}

\newcommand{\alts}{A}
\newcommand{\getautomaton}[2]{{\sf automaton}({#1},{#2})}
\newcommand{\tr}{\mathit{tr}}

\newcommand{\cl}{C}
\newcommand{\oneinthree}{{\sf OneInThree}}
\newcommand{\lf}{{\sf nil}}

\newcommand{\before}[1]{{\sf left}({#1})}
\newcommand{\after}[1]{{\sf right}({#1})}

\newcommand{\invoked}{T_{\textit{inv}}}
\newcommand{\winvoked}{W_{\textit{inv}}}
\newcommand{\tmin}{T_{\textit{min}}}

\newcommand{\exampledomain}{\ensuremath{D_{html}}}
\newcommand{\nnil}{\ensuremath{\text{nil}}}
\newcommand{\cons}{\ensuremath{\text{cons}}}
\newcommand{\node}{\ensuremath{\text{node}}}
\newcommand{\ddiv}{\ensuremath{\text{div}}}
\newcommand{\dpre}{\ensuremath{\text{pre}}}
\newcommand{\dspan}{\ensuremath{\text{span}}}
\newcommand{\listType}{\ensuremath{\text{List}}}
\newcommand{\tagType}{\ensuremath{\text{Tag}}}
\newcommand{\nodeType}{\ensuremath{\text{Node}}}

\newcommand{\userinput}[1]{{\sf #1}}
\newcommand{\getlang}[1]{{\mathcal L}({#1})}

\newcommand{\ourtitle}{Proactive Synthesis of Recursive Tree-to-String Functions from Examples}

\title{\ourtitle\footnote{
This work was partially supported by European Research Council (ERC) Project Implicit Programming and an EPFL-Inria Post-Doctoral grant.}%
}

 \author[1]{Mika\"el Mayer}
 \author[1]{Jad Hamza}
 \author[1]{Viktor Kun\v{c}ak}
 \affil[1]{EPFL IC IINFCOM LARA, INR 318, Station 14, CH-1015 Lausanne\\
   \texttt{firsname.lastname@epfl.ch}}
\authorrunning{Mika\"el Mayer, Jad Hamza and Viktor Kun\v{c}ak}
\Copyright{Mika\"el Mayer, Jad Hamza and Viktor Kun\v{c}ak}

\maketitle

\begin{abstract}

Synthesis from examples enables non-expert users to generate
programs by specifying examples of their behavior. A
domain-specific form of such synthesis has been recently
deployed in a widely used spreadsheet software product.  In
this paper we contribute to foundations of such techniques
and present a complete algorithm for synthesis of a class of
recursive functions defined by structural recursion over a
given algebraic data type definition.  The functions we consider map an
algebraic data type to a string; they are useful for, e.g.,
pretty printing and serialization of programs and data.  We
formalize our problem as learning deterministic sequential
top-down tree-to-string transducers with a single state (1STS).

The first problem we consider is learning a tree-to-string
transducer from any set of input/output examples
provided by the user. We show that, given a set of input/output
examples, checking whether there exists a 1STS consistent with 
these examples is NP-complete in general.
In contrast, the problem can be solved in polynomial time
under a (practically useful) closure condition that each
subtree of a tree in the input/output example set is also
part of the input/output examples.

Because coming up with relevant input/output examples may be
difficult for the user while creating hard constraint problems
for the synthesizer, we also study a more automated
active learning scenario in which the algorithm chooses the
inputs for which the user provides the outputs. Our
algorithm asks a worst-case linear number of queries as a
function of the size of the algebraic data type definition
to determine a unique transducer.

To construct our algorithms we present two new results on
formal languages.

First, we define a class of word equations, called
sequential word equations, for which we prove that
satisfiability can be solved in deterministic polynomial
time.  This is in contrast to the general word equations for
which the best known complexity upper bound is in linear space.

Second, we close a long-standing open problem about the
asymptotic size of test sets for context-free languages. A
test set of a language of words $L$ is a subset $T$ of $L$
such that any two word homomorphisms equivalent on $T$ are
also equivalent on $L$. We prove that it is possible to
build test sets of cubic size for context-free languages,
matching for the first time the lower bound found 20 years
ago.

\end{abstract}

\subjclass{F.3.1 Specifying and Verifying and Reasoning about Programs -- D.3.4 Processors}%
\keywords{programming by example, active learning, program synthesis}

\section{Introduction}

Synthesis by example has been very successful to help users deal 
with the tedious task of writing a program. This technique allows the user 
to specify input/output examples to describe the intended behavior of a
desired program. Synthesis will then inspect the examples given by the user, 
and generalize them into a program that respects these examples, and that 
is also able to handle other inputs.

Therefore, synthesis by example allows non-programmers to write programs 
without programming experience, and gives experienced users one more
way of programming that could fit their needs.
Current synthesis techniques usually rely on domain-specific 
heuristics to try and 
infer the desired program from the user. When there are multiple 
(non-equivalent) programs which are compatible with input/output 
examples provided by the user, these heuristics may fail to 
choose the program that the user had in mind when writing the examples.

We believe it is important to have algorithms that provide formal guarantees 
based on strong theoretical foundations. Algorithms we aim for ensure that the
solution is found whenever it exists in a class of functions of interest.
Furthermore, the algorithms ensure that the 
generated program is indeed the program the user wants by
detecting once the solution is unique and otherwise identifying a differentiating
example whose output reduces the space of possible solutions. 

In this paper, we focus on synthesizing printing functions for objects or
algebraic data types (ADT), which are at the core of many programming 
languages.
Converting such structured values to strings is very common, including uses such as pretty printing,
debugging, and serialization.
Writing methods to convert objects to strings is repetitive and usually requires the user to code himself mutually 
recursive \texttt{toString} functions. Although 
some languages have default printing functions, these functions are often not 
adequate.
For example, the object Person(``Joe'', 31) might have to be printed
``Joe is 31 years old'' for better readability, or 
``<td>Joe</td><td>31</td>'' if printed as part of an HTML table.
How \textit{feasible} is it for the computer to learn these ``printing'' 
functions from examples?

The state of the art in this context~\cite{laurence_learning_2014,laurence_phd_2014}
requires the user to provide 
\emph{enough} examples. If the user gives \emph{too few} examples, the 
synthesis algorithm is not guaranteed to return a valid printing function, and 
there is no simple way for the user to know which examples should be added so 
that the synthesis algorithm finishes properly.

Our contribution is to provide an algorithm that is able to determine exactly which
questions to ask the user so that the desired function can be derived. Moreover,
in order to learn a function, our algorithm (Algorithm~\ref{algo3}) only 
needs to ask a linear number of questions 
(as a function of the size of the ADT declaration).

Our results hold for recursive functions that take ADT as input, and 
output strings.
We model these functions by tree-to-string transducers, called 
\emph{single-state sequential top-down tree-to-string transducers}~\cite{boiret_normal_2016,graehl_training_2004,helmut_seidl_equivalence_2015,laurence_learning_2014,staworko_printers_2009}, or \stw{} for
short. In this formalism, objects are represented as labelled trees, and a 
transducer goes through the tree top down in order to display it as a string.
\emph{Single-state} means the transducer keeps no memory as it traverses the 
tree. \emph{Sequential} is a shorthand for \emph{linear} and 
\emph{order-preserving}, meaning that each subtree is printed only once 
(linear), and the subtrees of a node are displayed in order (order-preserving).
In particular, such transducers cannot directly represent recursive functions 
that have extra parameters alongside the tree to print.
Our work on \stw{s} establishes a foundation that may be used for larger 
classes of transducers.

Our goal is to learn a \stw{} from a set of positive input/output examples,
called a \emph{sample}.
We prove the problem of checking whether there exists a \stw{} consistent with 
a given sample is $\NP$-complete in general.
Yet, we prove that when the given sample is closed under subtree, i.e.,
every tree in the sample has all of its subtrees in the sample,
the problem of finding a compatible \stw{} can be solved in polynomial time.
For this, we reduce the problem of checking whether there exists an 
\stw{} consistent with a sample to the problem of solving word equations.
The best known algorithm to solve word equations takes linear space,
and exponential time~\cite{plandowski1999satisfiability,jez2017word}.
However, we prove that the word equations we build are 
of a particular form, which we call \regular{}, and our first algorithm learns 
\stw{s} by solving \regular{} equations in polynomial time.

We then tackle the problem of ambiguities that come from underspecified 
samples. More precisely, it is possible that, given a sample, there exist two 
\stw{s} that are consistent with the sample, but that are not equivalent on a 
domain $\domain$ of trees.
We thus define the notion of \emph{tree test set} of a domain $\domain$, which
guarantees that, any two \stw{s} which are equivalent on the tree test set
are also equivalent on the whole domain $\domain$.
We give a method to build tree test sets of size $O(|\domain|^3)$ from 
a domain of trees given as a non-deterministic top-down automaton.
Our second learning algorithm takes as input a domain $\domain$, 
builds the tree test set of $\domain$, and asks for the user the output 
to all trees in the tree test set.
Our second algorithm then invokes our first algorithm on the given sample.

This construction relies on fundamental results on a known relation between 
sequential top-down tree-to-string transducers and morphisms
(a morphism is a function that maps the concatenation of two words
to the concatenation of their images), and on the
notion of \emph{test set}~\cite{staworko_printers_2009}. 
Informally, a test set of a language of words $L$ is a subset 
$T \subseteq L$ such that any two morphisms which are equivalent on $T$
are also equivalent on $L$.
In the context of \stw{s}, the language $L$ is a context-free language,
intuitively representing the yield of the domain $\domain$ mentioned above.
Prior to our work announced in \cite{DBLP:journals/corr/MayerH16}, the best known construction for a test set of a 
context-free grammar $G$ produced test sets of size $O(|G|^6)$, while
the best known lower bound was 
$O(|G|^3)$~\cite{plandowski_testset_1994,plandowski_testset_1995}.
We show the $O(|G|^3)$ is in fact tight, and give a construction that, given 
any grammar $G$, produces a test set for $G$ of size $O(|G|^3)$.

Finally, our third and, from a practical point of view, the
main algorithm, improves the second one by analyzing the 
previous outputs entered by the user, in order to infer the next output. More 
specifically, the outputs previously entered by the user give constraints
on the transducer being learned, and therefore restrict the possible 
outputs for the next questions. Our algorithm computes these possible outputs
and, when there is only one, skips the question. Our algorithm only asks
the user a question when there are at least two possible outputs for 
a particular input.
The crucial part of this algorithm is to prove that such ambiguities happen 
at most $O(|\domain|)$ times. Therefore, our third algorithm asks the user only 
$O(|\domain|)$ questions, greatly improving our second one that asks 
$O(|\domain|^3)$ questions.
Our result relies on carefully inspecting the word equations produced by 
the input/output examples.

We implemented our algorithms in an open-source tool available 
at \url{https://github.com/epfl-lara/prosy}. In sections~\ref{sec:treewithvalues}
and \ref{sec:implementation}, we describe how to extend our algorithms and 
tool to ADTs which contain String (or Int) as a primitive 
type.
We call the implementation of our algorithms \textit{proactive} synthesis, because it produces a \textit{complete} set of questions ahead-of-time whose answers will help to synthesize a unique tree-to-string function,
\textit{filters out} future questions whose answer could be actively inferred after each user's answer,
and produces \textit{suggestions} as multiple choice or pre-filled answers to minimize the answering effort.

\subsection*{Contributions}

Our paper makes the following contributions:
\begin{enumerate}
\item
    A new efficient algorithm to synthesize recursive functions from 
    examples. We give a polynomial-time algorithm to obtain a \stw{}
    from a sample \emph{closed under subtree}. 
    When the sample is not necessarily closed under subtree, we prove that 
    the problem of checking whether there exists a \stw{} consistent 
    with the sample is $\NP$-complete
    (Section~\ref{sec:learningstw}). This result is based on a 
    fundamental contribution:
\begin{itemize}
\item 
\label{poly}
    A polynomial-time algorithm for solving a class of word equations that come from
    a synthesis problem (\textit{\regular{}} word equations, 
    Section~\ref{sec:learningstw}).
\end{itemize}
\item
    An algorithm that synthesize recursive functions without ambiguity by 
    generating an exhaustive set of questions to ask to the user,
    in the sense that any two recursive functions that agree on these inputs, 
    are equal on their entire domain
    (Section~\ref{sec:learningwithoutambiguity}).
    This is based on the following fundamental contribution:
\begin{itemize}
\item
\label{testset}
    A constructive upper bound of $O(|G|^3)$ on the size of a test set for a
    context-free grammar $G$, improving on the previous known bound of
    $O(|G|^6)$~\cite{plandowski_testset_1994,plandowski_testset_1995}
    (Section~\ref{sec:learningwithoutambiguity}).
\end{itemize}
\item
    A proactive and efficient algorithm that synthesizes recursive functions, 
    which 
    only requires the user to enter outputs for the inputs determined by the 
    algorithm.
    Formally, we present
    an interactive algorithm to learn a \stw{} for a domain of
    trees, with the guarantee that the obtained \stw{} is functionally 
    unique. Our algorithm asks the user only a \emph{linear} number of questions 
    (Section~\ref{sec:learningstwinteractively}).
\item
    A construction of a linear tree test set for data types with Strings, which enables 
    constructing a small set of inputs that distinguish between two recursive 
    functions (Section~\ref{sec:treewithvalues}).
\item
  An implementation of our algorithms as an interactive command-line tool
  (Section~\ref{sec:implementation})
\end{enumerate}
We note that the fundamental contributions of (\ref{poly}) and (\ref{testset}) 
are new general results about formal languages and may be of interest on their 
own.

For readability purposes, we only show proof sketches and intuition; detailed proofs are located in the Appendices.
\section{Example Run of Our Synthesis Algorithm}
\label{sec:walkthrough}
\label{section:walkthrough}

To motivate our problem domain, we present a run of our algorithm on 
an example. The example is an ADT representing a context-free grammar. 
It defines its custom alphabet (\textsf{Char}), words (\textsf{CharList}), and 
non-terminals indexed by words (\textsf{NonTerminal}).
A rule (\textsf{Rule}) is a pair made of a non-terminal and a sequence of 
symbols (\textsf{ListSymbol}), which can be non-terminals or terminals 
(\textsf{Terminal}).
Finally, a grammar is a pair made of a (starting) non-terminal and a sequence 
of rules.

The input of our algorithm is the following file (written in Scala syntax):
\begin{lstlisting}
abstract class Char
case class a() extends Char
case class b() extends Char

abstract class CharList
case class NilChar() extends CharList
case class ConsChar(c: Char, l: CharList) extends CharList

abstract class Symbol
case class Terminal(t: Char) extends Symbol
case class NonTerminal(s: CharList) extends Symbol

case class Rule(lhs: NonTerminal, rhs: ListSymbol)

abstract class ListRule
case class ConsRule(r: Rule, tail: ListRule) extends ListRule
case class NilRule() extends ListRule

abstract class ListSymbol
case class ConsSymbol(s: Symbol, tail: ListSymbol) extends ListSymbol
case class NilSymbol() extends ListSymbol

case class Grammar(s: NonTerminal, r: ListRule)
\end{lstlisting}

We would like to synthesize a recursive tree-to-string function \textsf{print}, such that if we compute, for example:
\begin{lstlisting}
print(Grammar(NonTerminal(NilChar()),
    ConsRule(Rule(NonTerminal(NilChar()),
      ConsSymbol(Terminal(a()),
        ConsSymbol(NonTerminal(NilChar()),
          ConsSymbol(Terminal(b()), NilSymbol())))),
      ConsRule(Rule(NonTerminal(NilChar()),
          NilSymbol())), NilRule())))
\end{lstlisting}
the result should be:
\begin{lstlisting}
Start: N
N -> a N b
N ->
\end{lstlisting}
We would like the \textsf{print} function to handle any valid \textsf{Grammar} tree.

When given these class definitions above, our algorithm precomputes a set of terms from the ADT,
so that any two single-state recursive functions which output the same Strings
for these terms also output the same Strings for any term from this ADT.
(This is related to the notion of \emph{tree test set} defined in 
Section~\ref{subsection:treetestset}.) 
Our algorithm will determine the outputs for these terms by interacting 
with the user and asking questions.
Overall, for this example, our algorithm asks the output 
for 14 terms.%

\newcommand{\enterkey}{\ensuremath{\hookleftarrow}}
\newcommand{\theuserenters}[1]{%
\userinput{#1}\enterkey{}\quad
}
\newcommand{\theuserentersraw}[1]{#1\enterkey{}\quad}
For readability, question lines provided by the synthesizer are indented. Lines entered by the user finish by the symbol $\enterkey{}$, meaning that she pressed the ENTER key. Everything after $\enterkey{}$ on the same line is our comment on the interaction. ``It'' usually refers to the synthesizer. After few interactions, the questions themselves are shortened for conciseness.
The interaction is the following:
\begin{lstlisting}
  Proactive Synthesis.
  If you ever want to enter a $\text{new}$ line, terminate your line by \ and press Enter.
  What should be the function output $\text{for}$ the following input tree?
  a
a$\enterkey{}$
  What should be the function output $\text{for}$ the following input tree?
  b
b$\enterkey{}$
  NilChar ?
$\enterkey{}$ $\hspace{2cm}\textit{indeed, \userinput{NilChar} is an empty string.}$
  NilSymbol ?
$\enterkey{}$ $\hspace{2cm}\textit{No symbol at the right-hand-side of a rule}$
  NilRule ?
$\enterkey{}$ $\hspace{2cm}\textit{No rule left describing the grammar}$
  What should be the function output $\text{for}$ the following input tree?
  Terminal(a)
  Something of the form: [...]a[...]
a$\enterkey{}$ $\hspace{1.9cm}\textit{Terminals contain only one char. Note the hint provided by the synthesizer.}$
  NonTerminal(NilChar) ?
N$\enterkey{}$
  ConsChar(b,NilChar) ? Something of the form: [...]b[...]
b$\enterkey{}$ $\hspace{1.9cm}\textit{A \textsf{ConsChar} is a concatenation of a char and a string}$
  What should be the function output $\text{for}$ the following input tree?
  NonTerminal(ConsChar(b,NilChar))
  1) Nb
  2) bN
  Please enter a number between 1 and 2, or 0 $\text{if}$ you really want to enter your answer manually
1$\enterkey{}$ $\hspace{1.9cm}\textit{Note that it was able to infer only two possibilities, thus the closed question.}$
  Grammar(NonTerminal(NilChar),NilRule) ? Something of the form: [...]N[...]
Start: N$\enterkey{}$
  ConsSymbol(Terminal(a),NilSymbol) ? Something of the form: [...]`a`[...]
 a$\enterkey{}$ $\hspace{1.8cm}\textit{Symbols on the right-hand-side of a \textsf{Rule} are prefixed with a space}$
Rule(NonTerminal(NilChar),NilSymbol) ? Something of the form: [...]N[...]
N ->$\enterkey{}$ $\hspace{1.25cm}\textit{A rule with no symbols on the right-hand-side}$
  ConsRule(Rule(NonTerminal(NilChar),NilSymbol),NilRule) ?
  Something of the form: [...]N ->[...]
\$\enterkey{}$ $\hspace{1.9cm}\textit{A newline}$
N ->$\enterkey{}$
  What should be the function output $\text{for}$ the following input tree?
  Rule(NonTerminal(NilChar),ConsSymbol(Terminal(`a`),NilSymbol))
  1) N  `a`->
  2) N - `a`>
  3) N -> `a`
  4) N `a` ->
  Please enter a number between 1 and 4, or 0 $\text{if}$ you really want to enter your answer manually
3$\enterkey{}$
\end{lstlisting}
The synthesizer %
then emits the desired recursive tree-to-string function, along with a complete set of the tests that determine it:
\begin{lstlisting}
def print(t: Any): String = t match {
  case a() $\Rightarrow$ "a"
  case b() $\Rightarrow$ "b"
  case NilChar() $\Rightarrow$ ""
  case ConsChar(t1,t2) $\Rightarrow$ print(t1) + print(t2)
  case Terminal(t1) $\Rightarrow$ "`" + print(t1) + "`"
  case NonTerminal(t1) $\Rightarrow$ "N" + print(t1)
  case Rule(t1,t2) $\Rightarrow$ print(t1) + " ->" + print(t2)
  case ConsRule(t1,t2) $\Rightarrow$ "\n" + print(t1) + print(t2)
  case NilRule() $\Rightarrow$ ""
  case ConsSymbol(t1,t2) $\Rightarrow$ " " + print(t1) + print(t2)
  case NilSymbol() $\Rightarrow$ ""
  case Grammar(t1,t2) $\Rightarrow$ "Start: " + print(t1) + print(t2)
} // the part below is a contract, not needed to execute the recursive function
ensuring { (res: string) => res == (t match {
  case a() => "a"
  case b() => "b"
  case NilChar() => ""
  case NilSymbol() => ""
  case NilRule() => ""
  case Terminal(a()) => "a"
  case NonTerminal(NilChar()) => "N"
  case ConsChar(b(),NilChar()) => "b"
  case NonTerminal(ConsChar(b(),NilChar())) => "Nb"
  case Grammar(NonTerminal(NilChar()),NilRule()) => "Start: N"
  case ConsSymbol(Terminal(a()),NilSymbol()) => " a"
  case Rule(NonTerminal(NilChar()),NilSymbol()) => "N ->"
  case ConsRule(Rule(NonTerminal(NilChar()),NilSymbol()),NilRule()) => "\nN ->"
  case Rule(NonTerminal(NilChar()),ConsSymbol(Terminal(a()),NilSymbol())) => "N -> a"
  case _ => res})
}
\end{lstlisting}

Observe that, in addition to the program, the synthesis system emits as a postcondition (after the \texttt{ensuring} construct) the
recorded input/output examples (tests).
Our work enables the construction of an IDE that would automatically maintain the bidirectional correspondence between the body of the recursive
function and the postcondition
that specifies its input/output tests.
If the user modifies an example in the postcondition, the system could re-synthesize the function, asking for clarification in cases
where the tests become ambiguous.
If the user modifies the program, such system can regenerate the tests.

Depending on user's answers, the total number of questions that the synthesizers asks varies (see section~\ref{sec:evaluation}). Nonetheless, the properties that we proved for our algorithm guarantee that the number of questions remains at most \emph{linear} as a function of the size of the algebraic data type declaration.

When the user enters outputs which are not consistent, i.e., for which 
there exists no printing function in the class of functions that we consider, our tool directly detects it and warns 
the user. For instance, for the tree \userinput{ConsRule(Rule(NonTerminal(NilChar),NilSymbol),NilRule)},
if the user enters \userinput{N- >} with the space and the dash inverted, the system detects that this output 
is not consistent with the output provided for tree \userinput{Rule(NonTerminal(NilChar),NilSymbol)}, and 
asks the question again.

\begin{lstlisting}
We cannot have the transducer convert ConsRule(Rule(NonTerminal(NilChar),NilSymbol),NilRule)
to N- >.
Please enter something consistent with what you previously entered (e.g. 'N ->','N ->bar',...)?
\end{lstlisting}

\section{Discussion}

\subsection{Advantages of Synthesis Approach}

It is important to emphasize that in the approach we outline, the
developer not only enters less text in terms of the number of characters
than in the above source code, but that the input from the user is
entirely in terms of concrete input-output \emph{values},
which can be easier to reason about for non-expert users
than recursive programs with variable names and
control-flow.

It is notable that the synthesizer in
many cases offered suggestions, which means that the user
often simply needed to check whether one of the candidate
outputs is acceptable. Even in cases where the user needed
to provide new parts of the string, the synthesizer in many
cases guided the user towards a form of the output
consistent with the outputs provided so far. Because of this
knowledge, the synthesizer could also be stopped early by,
for example, guessing the unknown information according to
some preference (e.g. replacing all unknown string constants
by empty strings), so the user can in many cases obtain a
program by providing a very small amount of information.

Such easy-to-use interactions could be implemented as a
pretty printing wizard in an IDE, for example triggered when
the user starts to write a function to convert an ADT to a
String.

Our experience in writing pretty printers manually suggests
that they often require testing to ensure that the generated
output corresponds to the desired intuition of the
developer, suggesting that input-output tests may be a
better form of specification even if in cases where they are
more verbose. We therefore believe that it is valuable
to make available to users and developers sucn an alternative
method of specifying recursive functions, a method that
can co-exist with the conventional explicitly written recursive
functions and the functions derived automatically (but generically) by the compiler
(such as default printing of algebraic data type values in Scala),
or using polytypic programming approaches \cite{JanssonPolytypic} and
serialization libraries \cite{DBLP:conf/oopsla/MillerHBO13}.
(Note that the generic approaches can reduce the boilerplate,
but do not address the problem of unambiguously generalizing \emph{examples} to
recursive functions.) 

\subsection{Challenges in Obtaining Efficient Algorithms}

The problem of inferring a program from examples requires
recovering the constants embedded in the program from the
results of concatenating these constants according to the
structure of the given input tree examples. This presents
two main challenges. The first one is that the algorithm
needs to split the output string and identify which parts
correspond to constants and which to recursive calls. This
process becomes particularly ambiguous if the alphabet used is small
or if some constants are empty strings. A natural way to solve
such problems is to formulate them as a conjunction of word equations.
Unfortunately, the best known deterministic algorithms for solving word equations
run in exponential time (the best complexity upper bound for the problem
takes linear space~\cite{plandowski1999satisfiability,jez2017word}). Our paper shows that, under an assumption
that, when specifying printing of a tree, we also specify printing of its subtrees,
we obtain word equations solvable in \emph{polynomial time}.

The next challenge is the number of examples that need to be
solved. Here, a previous upper bound derived from the theory
of test sets of context-free languages was $\Omega(n^6)$,
which, even if polynomial, results in impractical number of
user interactions.  In this paper we improve this
theoretical result and show that tests sets are in fact in
$O(n^3)$, asymptotically matching the known lower bound.

Furthermore, if we allow the learning algorithm to choose
the inputs one by one after obtaining outputs, the overall
learning algorithm has a \emph{linear} number of queries to
user and to equation solving subroutine, as a function of
the size of tree data type definition. Our contributions
therefore lead to tools that have completeness guarantees
with much less user input and a shorter running time than
the algorithms based on prior techniques.

We next present our algorithms as well as the results
that justify their correctness and completeness.
\section{Notation}

We start by introducing our notation and terminology for some standard concepts.
Given a (partial) function from 
$f: A \to B$, and a set $C$, 
$\restrict{f}{C}$ denotes the (partial) function 
$g: A \cap C \to B$ such that $g(a) = f(a)$ for all
$a \in A \cap C$.

A word (string) is a finite sequence of elements of a finite set $\Sigma$,
which we call an \emph{alphabet}. 

A \emph{morphism} $f: \Sigma^* \rightarrow \Gamma^*$ is a function
such that $f(\varepsilon) = \varepsilon$ and for every
$u,v \in \Sigma^*$, $f(u \cdot v) = f(u) \cdot f(v)$,
where the symbol `$\cdot$' denotes the concatenation of words (strings).

A \emph{non-deterministic finite automaton (NFA)} is a tuple 
$(\Output,Q,q_i,F,\delta)$ where 
$\Output$ is the alphabet, $Q$ is the set of states, $q_i \in Q$ is the 
initial state, $F$ is the set of final states,
$\delta \subseteq Q \times \Output \times Q$ is the transition relation.
When the transition relation is deterministic, that is 
for all $q,p_1,p_2 \in Q, a \in \Output$, 
if $(q,a,p_1) \in \delta$ and $(q,a,p_2) \in \delta$, then $p_1 = p_2$,
we say that $A$ is a \emph{deterministic finite automaton (DFA)}.

A \emph{context-free grammar}
  $\gram$ is a tuple $(\NTerm,\Term,\Prod,\Start)$ where:
\begin{itemize}
\item $\NTerm$ is a set of \emph{non-terminals},
\item $\Term$ is a set of \emph{terminals}, disjoint from $\NTerm$,
\item $\Prod \subseteq \NTerm \times (\NTerm \cup \Term)^* $ 
  is a set of \emph{production rules},
\item $\Start \in \NTerm$ is the starting non-terminal symbol.
\end{itemize}
A production $(\nterm,\rhs) \in \Prod$ is denoted $\mkrule{\nterm}{\rhs}$.
The \emph{size} of $\gram$, denoted $|\gram|$, is the sum of sizes of
each production in $\Prod$:
    $\sum_{\mkrule{\nterm}{\rhs} \in \Prod} 1+|\rhs|$.
A grammar is \emph{linear} if for every production 
$\mkrule{\nterm}{\rhs} \in \Prod$, the $\rhs$ string contains at most one 
occurrence of $\NTerm$.
By an abuse of notation, we denote by $\gram$ the set of words
produced by $\gram$.

\subsection{Trees and Domains}

A \emph{ranked alphabet} $\Sigma$ is a set of pairs 
$(f,k)$ where $f$ is a symbol from a finite alphabet, and $k \in \Nat$.
A pair $(f,k)$ of a ranked alphabet is also denoted $\ranked{f}{k}$.
We say that symbol $f$ has a \emph{rank} (or \emph{arity}) equal to $k$.
We define by $\TreeA{\Sigma}$ the set of trees defined over alphabet $\Sigma$. 
Formally,  $\TreeA{\Sigma}$ is the smallest set such that, if 
$t_1,\dots,t_k \in \TreeA{\Sigma}$, and
$\ranked{f}{k} \in \Sigma$ for some $k \in \Nat$, 
then $f(t_1,\dots,t_k) \in \TreeA{\Sigma}$.
A set of trees $T$ is \emph{closed under subtree} if 
for all $f(t_1,\dots,t_k) \in T$, 
for all $i \in \set{1,\dots,k}$,
$t_i \in T$.

A top-down tree automaton $T$ is a tuple $(\Sigma,\State,I,\delta)$ where 
$\Sigma$ is a ranked alphabet, $I \subseteq \State$ is the set of initial 
states, and $\delta \subseteq \Sigma \times \State \times \State^*$.
The set of trees $\getlang{T}$ recognized by $T$ is defined recursively
as follows.
For $\ranked{f}{k} \in \Sigma$, 
$q \in \State$, and 
$t = f(t_1,\dots,t_k) \in \TreeA{\Sigma}$, 
we have $t \in \getlang{T}_q$
iff there exists 
$(f,q,q_1 \cdots q_k) \in \delta$ such that for $1 \leq i \leq k$,  
$t_i \in \getlang{T}_{q_i}$. The set 
 $\getlang{T}$ is then defined as $\bigcup_{q \in I} \getlang{T}_q$.

Algebraic data types are described by the notion of \emph{domain}, 
which is a set of trees recognized by a top-down tree automaton
$T = (\Sigma,\State,I,\delta)$.
The \emph{size} of the domain is the sum of sizes of each transition in 
$\delta$, that is
$\sum_{(\ranked{f}{k},q,q_1 \cdots q_k) \in \delta} 1 + k$.

\begin{example}\label{example:htmldomain}
  In this example and the following ones, we illustrate our notions using an encoding of HTML-like data structures.
  Consider the following algebraic data type definitions in Scala:
\begin{lstlisting}
abstract class $\nodeType$
case class $\node$(t: $\tagType$, l: $\listType$) extends $\nodeType$

abstract class $\tagType$
case class $\ddiv$() extends $\tagType$
case class $\dpre$() extends $\tagType$
case class $\dspan$() extends $\tagType$

abstract class $\listType$
case class $\cons$(n: $\nodeType$, l: $\listType$) extends $\listType$
case class $\nnil$() extends $\listType$
\end{lstlisting}

The corresponding domain \exampledomain{} is described by the following:
\begin{align*}
\Sigma = \{ & \nnil^{(0)}, \cons^{(2)}, \node^{(2)}, \ddiv^{(0)}, \dpre^{(0)}, \dspan^{(0)}\} \\
Q = \{& \nodeType, \tagType, \listType\} \\
I = \{& \nodeType, \tagType, \listType\} \\
\delta  = \{ & (\node, \nodeType, (\tagType, \listType)),\\
&(\ddiv, \tagType, ()), (\dpre, \tagType, ()), (\dspan, \tagType, ()), \\
&(\cons, \listType, (\nodeType, \listType)),\\
&(\nnil, \listType, ())
\}
\end{align*}

\end{example}

\subsection{Transducers}
\newcommand{\AssociateConstants}{\ensuremath{\delta}}
A \emph{deterministic, sequential, single-state,
top-down tree-to-string transducer} $\trans$
  (\stw{} for short)
is a tuple $(\Alphabet, \Output, \AssociateConstants)$ where:
\begin{itemize}
\item $\Sigma$ is a ranked alphabet (of trees),
\item $\Output$ is an alphabet (of words),
\item 
    $\AssociateConstants$ is a function over $\Sigma$ such that 
    $\forall \ranked{f}{k} \in \Sigma.\ 
    \AssociateConstants(f) \in (\Output^*)^{k+1}$.
\end{itemize}

Note that the transducer does not depend on a particular domain for $\Sigma$,
but instead can map any tree from $\TreeA{\Sigma}$ to a word.
Later, when we present our learning algorithms for \stw{s}, we restrict 
ourselves to particular domains provided by the user of the algorithm.

We denote by $\semantics{\trans}$ the function from trees to words
associated with the \stw{} $\trans$. 
Formally, for every $\ranked{f}{k} \in \Sigma$, we have
$\semantics{\trans}(f(t_1,\dots,t_k)) = 
  u_0 \cdot \semantics{\trans}(t_1) \cdot u_1 \cdots \semantics{\trans}(t_k) \cdot u_k$
if $\AssociateConstants(f) = (u_0,  u_1, \dots, u_k)$.
When clear from context, we abuse notation and use $\trans$ as a shorthand
for the function $\semantics{\trans}$.

\begin{example}\label{example:transducer}
A transducer $\trans=(\Alphabet, \Output, \AssociateConstants)$ converting HTML trees into a convenient syntax for some programmatic templating engines\footnote{\url{https://github.com/lihaoyi/scalatags}} may be described by: 
\begin{align*}
\Sigma = & \omit\rlap{$\set{
    \nnil^{(0)}, \cons^{(2)}, \node^{(2)},
    \ddiv^{(0)}, \dpre^{(0)}, \dspan^{(0)}\}}$} \\
\Output = & [\textit{All symbols}] \\
\AssociateConstants(\node) =& (\text{``<.''}, \varepsilon, \varepsilon) \\
\AssociateConstants(\ddiv) =& (\text{``div''}) &
\AssociateConstants(\dpre) =& (\text{``pre''}) &
\AssociateConstants(\dspan) &= (\text{``span''}), \\
\AssociateConstants(\cons) =& (\text{``(''}, \text{``)''}, \varepsilon) &
\AssociateConstants(\nnil) =& (\varepsilon)
\end{align*}
In Scala, this is written as follows:
\begin{lstlisting}
def tau(input: Tree) = input match {
  case $\node$(t, l)  $\Rightarrow$ "<." + tau(t) + "" + tau(l) + ""
  case $\ddiv$() $\Rightarrow$ "div"
  case $\dpre$() $\Rightarrow$ "pre"
  case $\dspan$() $\Rightarrow$ "span"
  case $\cons$(n, l) $\Rightarrow$ "(" + tau(n) + ")" + tau(l) + ""
  case $\nnil$() $\Rightarrow$ ""
}
\end{lstlisting}
For example, \lstinline|tau(node(div,cons(node(span,nil,cons(node(pre,nil)))))) = "<.div(<.span())(<.pre())"|
\end{example}
\section{Transducers as Morphisms}
\addtocounter{subsection}{1}

For a given alphabet $\Sigma$, a \stw{} $(\Alphabet, \Output, \AssociateConstants)$
is completely determined by the constants that appear in $\AssociateConstants$.
This allows us to define a one-to-one correspondence between transducers and 
morphisms.
This correspondence is made through what we call the \emph{default transducer}.
More specifically,
$\Output$ is the set 
$\SigmaPairs = \set{(f,i)\ |\ \ranked{f}{k} \in \Sigma \land 0 \leq i \leq k}$
and for all $\ranked{f}{k} \in \Sigma$,
we have $\AssociateConstants(f) = ( (f,0), (f,1), \dots, (f,k) )$.
The default transducer produces sequences of pairs from $\SigmaPairs$.

\begin{example}\label{example:defaulttransducer} For 
$\Sigma = \{ \nnil^{(0)}, \cons^{(2)}, \node^{(2)}, \ddiv^{(0)}, \dpre^{(0)}, \dspan^{(0)}\}$, $\default{\Sigma}$ is:

\begin{align*}
\Output = & \omit\rlap{\{ (\node, 0), (\node, 1), (\node, 2), (\ddiv, 0), (\dpre, 0), (\dspan, 0)} \\
&\omit\rlap{(\cons, 0), (\cons, 1), (\cons, 2), (\nnil, 0)\}} \\
\AssociateConstants(\node) =\ & ((\node, 0), (\node, 1), (\node, 2)) \\
\AssociateConstants(\ddiv) =\ & (\ddiv, 0) &
\AssociateConstants(\dpre) =\ & (\dpre, 0) &
\AssociateConstants(\dspan) =\ & (\dspan, 0) \\
\AssociateConstants(\cons) =\ & ((\cons, 0), (\cons, 1), (\cons, 2)) &
\AssociateConstants(\nnil) =\ & (\nnil, 0))
\end{align*}

In Scala, $\default{\Sigma}$ can be written as follows 
($+$ is used to concatenate elements and lists):
\begin{lstlisting}
def tauSigma(input: Tree): List[$\SigmaPairs$] = input match {
  case $\node$(t, l) $\Rightarrow$ ($\node$,0) + tauSigma(t) + ($\node$,1) + tauSigma(l) + ($\node$,2)
  case $\ddiv$() $\Rightarrow$ ($\ddiv$,0)
  case $\dpre$() $\Rightarrow$ ($\dpre$,0)
  case $\dspan$() $\Rightarrow$ ($\dspan$,0)
  case $\cons$(n, l) $\Rightarrow$ ($\cons$,0) + tauSigma(n) + ($\cons$,1) + tauSigma(l) + ($\cons$,2)
  case $\nnil$(n, l) $\Rightarrow$ ($\nnil$,0)
}
\end{lstlisting}
\end{example}

\begin{restatable}{lem}{injective}
\label{lemma:injective}
For any ranked alphabet $\Sigma$, the function
$\semantics{\default{\Sigma}}$ is injective.
\end{restatable}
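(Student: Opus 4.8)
The plan is to establish a stronger \emph{prefix-freeness} property from which injectivity follows immediately. Writing $u \preceq v$ to mean that the word $u$ is a prefix of $v$, I will prove the following statement for all $s,t \in \TreeA{\Sigma}$: if $\semantics{\default{\Sigma}}(s) \preceq \semantics{\default{\Sigma}}(t)$ or $\semantics{\default{\Sigma}}(t) \preceq \semantics{\default{\Sigma}}(s)$, then $s = t$. Injectivity is the special case in which the two images are equal. The reason for strengthening from equality to comparability is structural: even assuming the two whole images equal, the recursive step forces me to compare the images of \emph{children}, and a priori I only know that two such child-images are prefixes of one common word, not that they are equal, until the induction hypothesis is applied. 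I will prove the strengthened statement by structural induction on $s$, with $t$ universally quantified.

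First observe that $\semantics{\default{\Sigma}}(u)$ is nonempty for every tree $u$ and that its first letter is exactly $(g,0)$, where $g$ is the root symbol of $u$: this holds both for a leaf $\ranked{g}{0}$, whose image is the single letter $(g,0)$, and for an internal node $g(u_1,\dots,u_k)$, whose image begins with $(g,0)$. Hence, if the images of $s$ and $t$ are comparable their first letters coincide, so $s$ and $t$ have the same root symbol $f$ and therefore the same arity $k$. If $k=0$, both images are the single letter $(f,0)$, so $s = t = f$, which settles the base case. If $k \ge 1$, write $s = f(s_1,\dots,s_k)$ and $t = f(t_1,\dots,t_k)$, so that
\[
\semantics{\default{\Sigma}}(s) = (f,0)\,\semantics{\default{\Sigma}}(s_1)\,(f,1)\cdots(f,k-1)\,\semantics{\default{\Sigma}}(s_k)\,(f,k),
\]
and symmetrically for $t$. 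After cancelling the common leading letter $(f,0)$, comparability descends to comparability of $\semantics{\default{\Sigma}}(s_1)(f,1)\cdots(f,k)$ and $\semantics{\default{\Sigma}}(t_1)(f,1)\cdots(f,k)$.

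The crux is then an elementary fact about words: if two words are each a prefix of one common word, then one of them is a prefix of the other. Since $\semantics{\default{\Sigma}}(s_1)$ and $\semantics{\default{\Sigma}}(t_1)$ are prefixes of the longer of the two words above, this fact shows they are comparable. By the induction hypothesis applied to the (proper) subtree $s_1$ of $s$ and the tree $t_1$, I conclude $s_1 = t_1$, hence $\semantics{\default{\Sigma}}(s_1) = \semantics{\default{\Sigma}}(t_1)$. Cancelling this common block together with the following separator $(f,1)$ reduces the problem to the same situation for the remaining children; iterating (a finite descent over $i = 1,\dots,k$, each step invoking the induction hypothesis on the subtree $s_i$ of $s$) yields $s_i = t_i$ for all $i$, whence $s = t$.

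The main obstacle, and the reason the naive idea of splitting the output at the markers $(f,i)$ fails directly, is that a child's image may itself contain letters $(f,i)$ with the same root symbol $f$, so the top-level separators are not syntactically distinguishable in isolation. Strengthening to prefix-freeness is exactly what circumvents this: it lets me peel off the \emph{leftmost} child's image as the unique comparable complete output, without ever having to locate the separators explicitly. Beyond this bookkeeping, the only nontrivial ingredient is the comparability fact above, which is standard for words.
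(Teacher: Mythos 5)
Your proof is correct, but it takes a genuinely different route from the paper's. The paper argues by minimal counterexample: it picks two distinct trees with the same image whose image length is smallest, locates the first index $i$ at which the children's images differ, and then rules out the case where one child's image is strictly shorter via a well-parenthesization argument (treating $(f,0)$ as an opening and $(f,i)$ as a closing parenthesis, the image of a tree must be balanced, so $\default{\Sigma}(t_i)\,(f,i)$ cannot be a prefix of $\default{\Sigma}(t_i')$). You instead strengthen the statement to prefix-freeness of the image set --- comparability of $\semantics{\default{\Sigma}}(s)$ and $\semantics{\default{\Sigma}}(t)$ already forces $s=t$ --- and run a structural induction whose only word-combinatorial ingredient is that two prefixes of a common word are comparable. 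This buys you a self-contained argument that avoids the Dyck-word/balancing lemma altogether (which the paper invokes somewhat informally and which would itself need a small induction to make fully precise), at the cost of having to prove a stronger statement; the paper's version stays closer to the literal claim but leans on that auxiliary structural property of $\default{\Sigma}$. Both proofs share the same implicit assumption that a root symbol determines its arity, so neither is at a disadvantage there.
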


\begin{figure} 

\begin{lstlisting}
def tree(w: List[$\SigmaPairs$]): Tree =
  if w is empty or does not start with some (f, 0):
    throw error
  let (f, 0) = w.head
  w $\leftarrow$ w.tail
  for i from 1 to arity(f)
    $t_i$ = tree(w)
    assert(w starts with (f, i))
    w $\leftarrow$ w.tail
  return f($t_1$, $\ldots$, $t_k$)
\end{lstlisting}

\caption{Parsing algorithm to obtain $\gettree(w)$ from a word 
$w \in \SigmaPairs^*$. When the algorithm fails, because of a pattern matching
error or because of the thrown exception, 
it means there exists no $t$ such that $\default{\Sigma}(t) = w$.
\label{figure:tree}
}

\end{figure}
 
Following Lemma~\ref{lemma:injective}, 
for a word $w \in \SigmaPairs^*$, 
we define $\gettree(w)$ to be the unique tree (when it exists)
such that $\default{\Sigma}(\gettree(w)) = w$.
We show in Figure~\ref{figure:tree} how to obtain $\gettree(w)$ in 
linear time from $w$.

For a \stw{} $\trans = (\Alphabet, \Output, \AssociateConstants)$, we define the morphism
$\getmorph{\trans}$ from $\SigmaPairs$ to $\Output^*$, and such that, 
for all $\ranked{f}{k} \in \Sigma$,
  $i \in \set{0,\dots,k}$, 
  $\getmorph{\trans}(f,i) = u_i$
  where $\AssociateConstants(f) = (u_0, u_1, \dots, u_k)$.
Conversely, 
given a morphism $\morph: \SigmaPairs \to \Output^*$,
we define $\gettransducer{\morph}$ as
$\default{\Sigma}$ where each output $l \in \SigmaPairs$
is replaced by $\morph(l)$.

\begin{example}\label{example:morph}
For Example~\ref{example:transducer}, $\getmorph{\trans}$ is defined by:
\begin{align*}
\getmorph{\trans}(\node, 0) &= \text{``<.''} & \getmorph{\trans}(\cons, 0) &= \text{``(''} \\
\getmorph{\trans}(\node, 1) &= \varepsilon & \getmorph{\trans}(\cons, 1) &= \text{``)''} \\
\getmorph{\trans}(\node, 2) &= \varepsilon & \getmorph{\trans}(\cons, 2) &= \varepsilon \\
\getmorph{\trans}(\ddiv, 0) &= \text{``div''}  & \getmorph{\trans}(\nnil, 0) &= \varepsilon \\
\getmorph{\trans}(\dpre, 0) &= \text{``pre''} &
\getmorph{\trans}(\dspan, 0) &= \text{``span''}
\end{align*}
\end{example}

Note that for any morphism: $\morph : \SigmaPairs \to \Output^*$,
$\getmorph{\gettransducer{\morph}} = \morph$ and
for any \stw{} $\trans$, 
$\gettransducer{\getmorph{\trans}} = \trans$. Moreover, we have the following 
result, which expresses the output of a \stw{} $\trans$ using the morphism 
$\getmorph{\trans}$.

\begin{lem}
\label{lemma:default}
For a \stw{} $\trans$, 
and for all $t \in \TreeA{\Sigma}$, 
$\getmorph{\trans}(\default{\Sigma}(t)) = \trans(t)$.
\end{lem}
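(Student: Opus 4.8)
The plan is to prove the identity by structural induction on the tree $t \in \TreeA{\Sigma}$, the whole argument resting on the fact that $\getmorph{\trans}$ is a morphism and therefore commutes with concatenation. This is what lets a single application of $\getmorph{\trans}$ to the word $\default{\Sigma}(t)$ be distributed across its pieces so that the induction hypothesis applies to each subtree independently.

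First I would recall the two recursive definitions that I intend to combine. Writing $\AssociateConstants(f) = (u_0, u_1, \dots, u_k)$ for $\ranked{f}{k} \in \Sigma$, the semantics of $\trans$ unfolds as $\trans(f(t_1,\dots,t_k)) = u_0 \cdot \trans(t_1) \cdot u_1 \cdots \trans(t_k) \cdot u_k$, whereas the default transducer produces $\default{\Sigma}(f(t_1,\dots,t_k)) = (f,0) \cdot \default{\Sigma}(t_1) \cdot (f,1) \cdots \default{\Sigma}(t_k) \cdot (f,k)$. By definition of $\getmorph{\trans}$ we have $\getmorph{\trans}(f,i) = u_i$ for each $i \in \set{0,\dots,k}$.

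For the base case, take $t = f()$ with $\ranked{f}{0} \in \Sigma$: then $\default{\Sigma}(t) = (f,0)$, so $\getmorph{\trans}(\default{\Sigma}(t)) = \getmorph{\trans}(f,0) = u_0 = \trans(t)$. For the inductive step, take $t = f(t_1,\dots,t_k)$ with $k \geq 1$ and apply $\getmorph{\trans}$ to the word $\default{\Sigma}(t)$ displayed above. The crucial — and essentially only nontrivial — move is that $\getmorph{\trans}$, being a morphism, distributes over the concatenation, yielding $\getmorph{\trans}(\default{\Sigma}(t)) = \getmorph{\trans}(f,0) \cdot \getmorph{\trans}(\default{\Sigma}(t_1)) \cdot \getmorph{\trans}(f,1) \cdots \getmorph{\trans}(\default{\Sigma}(t_k)) \cdot \getmorph{\trans}(f,k)$. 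Substituting $\getmorph{\trans}(f,i) = u_i$ together with the induction hypothesis $\getmorph{\trans}(\default{\Sigma}(t_i)) = \trans(t_i)$ for each subtree turns the right-hand side into $u_0 \cdot \trans(t_1) \cdot u_1 \cdots \trans(t_k) \cdot u_k$, which is exactly $\trans(t)$.

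I do not expect any genuine obstacle here: the statement is a direct structural induction, and the single point worth making explicit is the appeal to the homomorphism property of $\getmorph{\trans}$, which splits the one application of the morphism to $\default{\Sigma}(t)$ into its $2k+1$ factors so that the induction hypothesis can be used component-wise on the subtrees.
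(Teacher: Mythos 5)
Your proof is correct and matches the paper's intent: the paper dispatches this lemma with ``follows directly from the definitions,'' and your structural induction, with the morphism property of $\getmorph{\trans}$ used to distribute over the $2k+1$ factors of $\default{\Sigma}(t)$, is precisely the argument being elided. No gaps.
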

\begin{proof}
Follows directly from the definitions of 
$\getmorph{\trans}$ and $\default{\Sigma}$.
\end{proof}
\begin{example}
Let $t= \cons(\node(\ddiv,\nnil),\nnil)$.
For $\getmorph{}$ defined as in Example~\ref{example:morph} and the transducer 
$\trans$ as in Example~\ref{example:transducer}, the left-hand-side of the 
equation of Lemma~\ref{lemma:default} translates to:

\begin{align*}
&\getmorph{\trans}(\default{\Sigma}(t)) \\
=\ &\getmorph{\trans}(\default{\Sigma}(\cons(\node(\ddiv,\nnil),\nnil))) \\
=\ &\getmorph{\trans}((\cons, 0)(\node, 0)(\ddiv,0)(\node,1)(\nnil,0)(\node,2)(\cons,1)(\nnil,0)(\cons,2)) \\
=\ & \text{``(''}\cdot\text{``<.''}\cdot\text{``div''}\cdot\varepsilon\cdot\varepsilon\cdot\varepsilon\cdot\text{``)''}\cdot\varepsilon\cdot\varepsilon\\
=\ & \text{``(<.div)''}
\end{align*}

Similarly, the right-hand-side of the equation can be computed as follows:
\begin{align*}
 & \trans(t) \\
=\ & \trans(\cons(\node(\ddiv,\nnil),\nnil))\\
=\ & \text{``(''}\cdot\trans(\node(\ddiv,\nnil))\cdot\text{``)''}\cdot\trans(\nnil)\cdot\varepsilon \\
=\ & \text{``(''}\cdot\text{``<.''}\cdot\trans(\ddiv)\cdot\varepsilon\cdot\trans(\nnil)\cdot\varepsilon\cdot\text{``)''}\cdot\varepsilon\cdot\varepsilon \\
=\ & \text{``(<.div)''}
\end{align*}

\end{example}

We thus obtain that checking equivalence of \stw{s} can be reduced to checking 
equivalence of morphisms on a context-free language.

\begin{lem}[See \cite{staworko_printers_2009}]
\label{lemma:stwtomorphism}
Let $\trans_1$ and $\trans_2$ be two \stw{s}, and 
$\domain = (\Sigma,\State,I,\delta)$ a domain.
Then 
$\restrict{\semantics{\trans_1}}{\domain} = 
\restrict{\semantics{\trans_2}}{\domain}$
if and only if 
$\restrict{\getmorph{\trans_1}}{G} = 
\restrict{\getmorph{\trans_2}}{G}$
where $G$ is the context-free language 
$\set{\default{\Sigma}(t)\ |\ t \in \domain}$.
\end{lem}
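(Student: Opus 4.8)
The plan is to derive both directions of the biconditional as an immediate consequence of Lemma~\ref{lemma:default}, which states that for every tree $t \in \TreeA{\Sigma}$ we have $\getmorph{\trans}(\default{\Sigma}(t)) = \trans(t)$. The structural fact I would exploit is that the context-free language $G = \set{\default{\Sigma}(t) \mid t \in \domain}$ is, by definition, exactly the image of the domain $\domain$ under the map $t \mapsto \default{\Sigma}(t)$. Hence quantifying over all $t \in \domain$ on the tree side corresponds precisely to quantifying over all $w \in G$ on the word side, and Lemma~\ref{lemma:default} is what converts each tree-level output $\trans_j(t)$ into the word-level output $\getmorph{\trans_j}(\default{\Sigma}(t))$.

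For the forward direction I would assume $\restrict{\semantics{\trans_1}}{\domain} = \restrict{\semantics{\trans_2}}{\domain}$ and fix an arbitrary $w \in G$. By definition of $G$ there is some $t \in \domain$ with $w = \default{\Sigma}(t)$, and applying Lemma~\ref{lemma:default} to each transducer gives
\begin{align*}
\getmorph{\trans_1}(w) &= \getmorph{\trans_1}(\default{\Sigma}(t)) = \semantics{\trans_1}(t) \\
&= \semantics{\trans_2}(t) = \getmorph{\trans_2}(\default{\Sigma}(t)) = \getmorph{\trans_2}(w),
\end{align*}
where the middle equality is the hypothesis, valid since $t \in \domain$. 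As $w$ was arbitrary this yields $\restrict{\getmorph{\trans_1}}{G} = \restrict{\getmorph{\trans_2}}{G}$.

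For the backward direction I would assume $\restrict{\getmorph{\trans_1}}{G} = \restrict{\getmorph{\trans_2}}{G}$ and fix an arbitrary $t \in \domain$. Then $\default{\Sigma}(t) \in G$, so the hypothesis gives $\getmorph{\trans_1}(\default{\Sigma}(t)) = \getmorph{\trans_2}(\default{\Sigma}(t))$; applying Lemma~\ref{lemma:default} on both sides turns this into $\semantics{\trans_1}(t) = \semantics{\trans_2}(t)$, and since $t$ was arbitrary we conclude $\restrict{\semantics{\trans_1}}{\domain} = \restrict{\semantics{\trans_2}}{\domain}$.

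I do not expect a genuine obstacle: the entire content has been pushed into Lemma~\ref{lemma:default} and into the definition of $G$. The only points needing minor care are purely bookkeeping, namely checking that the restriction notations $\restrict{\cdot}{G}$ and $\restrict{\cdot}{\domain}$ match the ranges of the two quantifiers, and observing that the surjection $t \mapsto \default{\Sigma}(t)$ from $\domain$ onto $G$ lets each universally quantified statement be transported faithfully to the other side. Note that injectivity of $\default{\Sigma}$ (Lemma~\ref{lemma:injective}) is in fact \emph{not} needed here, since surjectivity onto $G$ alone suffices for both directions.
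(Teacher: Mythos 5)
Your argument for the biconditional is correct and is essentially the paper's own proof: the paper simply writes ``Follows from Lemma~\ref{lemma:default}'', and your two directions spell out exactly that reduction via the surjection $t \mapsto \default{\Sigma}(t)$ from $\domain$ onto $G$ (and you are right that injectivity is not needed). The one thing you omit is the justification that $G$ is in fact a \emph{context-free} language, which the statement asserts and which the paper's proof spends most of its space on: it exhibits an explicit grammar with a non-terminal $A_q$ per state $q$ of the domain automaton and a production $A_q \rightarrow (f,0)\,A_{q_1}\,(f,1)\cdots A_{q_k}\,(f,k)$ per transition, noting that this grammar has size linear in $|\domain|$ --- a fact the later test-set constructions rely on. Adding that construction would make your proof complete.
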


\begin{proof}
Follows from Lemma~\ref{lemma:default}. $G$ is context-free, as it
can be recognized by the grammar 
$(\NTerm_\gram,\SigmaPairs,\Prod_\gram,\Start_\gram)$ where:
\begin{itemize}
\item
    $\NTerm_\gram = \set{\Start_\gram}\, \cup\, \set{A_{q}\ |\ q \in \State}$,
    where $\Start_\gram$ is a fresh symbol used as the starting non-terminal,
\item The productions are:\\
  $\begin{array}{ll}
  \Prod_\gram = &
    \{
    A_{q} \rightarrow\
      (f,0) \cdot
      A_{q_1} \cdot
      (f,1) \cdots
      A_{q_k} \cdot
      (f,k)
      \ |\ 
       \ranked{f}{k} \in \Sigma\ \land
        (q, f, (q_1, \dots, q_k)) \in \delta
    \} \\ & \, \cup \,
    \set{ \mkrule{\Start_\gram}{A_q}\ |\ q \in I}
  \end{array}$
\end{itemize}

Note that the size of $G$ is linear in the size of $|\domain|$
(as long as there are no unused states in $\domain$).
\end{proof}

\section{Learning \stw{} from a Sample}\label{sec:learningstw}

We now present a learning algorithm for learning \stw{s} from sets of 
input/output examples, or a \emph{sample}.
Formally, a sample $\sample: \TreeA{\Sigma} \pto \Output^*$
is a partial function from trees to words,
or alternatively, a set of pairs $(t,w)$ with $t \in \TreeA{\Sigma}$ and 
$w \in \Output^*$ such that each $t$ is paired with at most one $w$.

\subsection{NP-completeness of the general case}

In general, we prove that finding whether there exists a
\stw{} consistent with a given a sample is an $\NP$-complete problem.
To prove $\NP$-hardness, we reduce the one-in-three positive SAT
problem. This problem asks, given a formula $\varphi$ with no negated 
variables, whether there exists an assignment such that for each clause 
of $\varphi$, exactly one variable (out of three) evaluates to true.

\begin{restatable}{thm}{npcomplete}
\label{theorem:npcomplete}
Given a sample $\sample$, checking whether there exists a \stw{} $\trans$ such 
that for all $(t,w) \in \sample$, $\trans(t) = w$ is an $\NP$-complete 
problem.
\end{restatable}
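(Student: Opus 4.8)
The plan is to prove $\NP$-completeness in two parts: membership in $\NP$ and $\NP$-hardness. Membership is the easy direction. Given a sample $\sample$, a candidate \stw{} $\trans$ is completely determined by the morphism $\getmorph{\trans}$, i.e.\ by the constants $\AssociateConstants(f) = (u_0,\dots,u_k)$ for each $\ranked{f}{k} \in \Sigma$ appearing in the sample. The key observation is that each such constant $u_i$ is a substring of some output word $w$ with $(t,w) \in \sample$, so its length is bounded by $\max_{(t,w)\in\sample} |w|$. Hence a certificate (the collection of all constants) has size polynomial in $|\sample|$, and verifying that $\trans(t)=w$ for every $(t,w)\in\sample$ amounts to expanding $\semantics{\trans}(t)$ by structural recursion and comparing strings, which is clearly polynomial. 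This establishes membership in $\NP$.

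The main work is the $\NP$-hardness reduction from positive one-in-three SAT. First I would fix a ranked alphabet $\Sigma$ and design, for a given formula $\form$ over variables $x_1,\dots,x_n$ with clauses $c_1,\dots,c_m$, a sample $\sample_\form$ such that $\sample_\form$ admits a consistent \stw{} if and only if $\form$ has a one-in-three satisfying assignment. The central idea is to force each variable $x_j$ to correspond to a choice of constant for some symbol $f_j$ of the transducer, where the two admissible values of that constant encode the truth value of $x_j$ (for instance, an empty string for \emph{false} and a single fixed letter for \emph{true}, or vice versa). The input/output pairs should be built over a small output alphabet so that splitting an output string into the recursive-call contributions and the surrounding constants is genuinely ambiguous; this ambiguity is what makes the problem hard and is exactly what encodes the boolean choice. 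I would provide, for each variable $x_j$, a small pair $(t_j,w_j)$ whose only consistent solutions are the two intended truth-value encodings of the constant for $f_j$, and then, for each clause $c_j = (x_a \lor x_b \lor x_c)$, a pair $(t_{c_j}, w_{c_j})$ whose output word has a length (or letter count) that can be produced only when exactly one of the three constants for $f_a, f_b, f_c$ takes its \emph{true} encoding. The one-in-three semantics is naturally captured by a length/counting constraint: if each \emph{true} variable contributes exactly one copy of a marker letter to $w_{c_j}$, then demanding that $w_{c_j}$ contain exactly one marker enforces precisely the one-in-three condition.

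The hard part will be engineering the clause gadget so that it enforces the exact cardinality constraint without introducing spurious solutions, while keeping the transducer single-state and sequential. Because the transducer has a single state, every occurrence of the symbol $f_j$ is forced to use the \emph{same} constants, which is precisely what makes a single boolean choice per variable propagate consistently across all clauses that mention it; I would lean on this heavily. The delicate point is that the output alphabet must be small enough (ideally a single marker letter plus structural separators) that the constants are not pinned down by the letters alone, yet the equations arising from the variable gadgets must still admit exactly the two intended encodings and no others. I would verify correctness by arguing both directions: from a one-in-three assignment, read off the constants and check every sample pair is satisfied; conversely, from any consistent \stw{}, read off the truth value of $x_j$ from its constant for $f_j$ (after showing the variable gadget forces one of the two encodings) and check that the clause gadgets guarantee the one-in-three property. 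Finally I would confirm that $\sample_\form$ has size polynomial in $|\form|$ and is constructible in polynomial time, completing the reduction.
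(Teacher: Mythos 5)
Your proposal follows the same route as the paper: membership in $\NP$ by guessing constants that are subwords of the sample outputs, and hardness by reduction from positive one-in-three SAT, encoding each variable's truth value in whether a symbol's constant is a marker letter or the empty string. The one step you explicitly defer --- ``engineering the clause gadget'' --- is the actual content of the reduction, and the paper's realization is simpler than what you anticipate. Take one unary symbol per variable and a nullary symbol $\lf$ with $\sample(\lf) = \#$; for a clause on $(x,y,z)$ add the single example $\sample(x(y(z(\lf)))) = a\#$. Writing $\delta(x)=(\before{x},\after{x})$, the output of this tree is $\before{x}\,\before{y}\,\before{z}\,\#\,\after{z}\,\after{y}\,\after{x}$, and since $\#$ occurs exactly once in $a\#$, the equation splits into $\before{x}\,\before{y}\,\before{z}=a$ and $\after{z}\,\after{y}\,\after{x}=\varepsilon$; a concatenation of three words equal to a single letter forces exactly one of them to be $a$ and the other two to be $\varepsilon$, which is precisely the one-in-three condition. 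In particular, the per-variable gadgets you plan (examples whose only consistent solutions are the two intended encodings) are unnecessary --- the clause equation alone confines each constant to $\{\varepsilon, a\}$ --- and would be awkward to realize as stated, since with $\sample(\lf)=\#$ fixed any single example of the form $x(\lf)\mapsto w$ pins $\before{x}$ and $\after{x}$ down uniquely rather than leaving a two-way choice. Your counting intuition is sound, but the cardinality constraint falls out of the word equation directly rather than from a separate marker-counting argument, so once you commit to this gadget the remaining verification in both directions is routine.
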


\begin{proof}
(Sketch)
We can check for the existence of $\trans$ in $\NP$ using the following idea. 
Every input/output example from the sample gives constraints on the 
constants of $\trans$.
Therefore, to check for the existence of $\trans$, it is sufficient to 
non-deterministically guess constants which are subwords of the given output 
examples.
We can then verify in polynomial-time whether the guessed constants form 
a \stw{} $\trans$ which is consistent with the sample $\sample$.

To prove $\NP$-hardness, we consider a formula $\varphi$, 
instance of the one-in-three positive SAT.
The formula $\varphi$ has no negated variables, and is satisfiable if there 
exists an assignment to the boolean variables such that for each clause 
of $\varphi$, exactly one variable (out of three) evaluates to true.

We construct a sample $\sample$ such that 
there exists a \stw{} $\trans$ 
such that for all $(t,w) \in \sample$, $\trans(t) = w$
if and only if $\varphi$ is satisfiable.
For each clause $(x,y,z) \in \varphi$, we construct 
an input/output example of the form $\sample(x(y(z(\lf)))) = a \#$
where $x$, $y$ and $z$ are symbols of arity $1$ corresponding to the 
variables of the same name in $\varphi$, $\lf$ is a symbol of arity $0$,
and $a$ and $\#$ are two special characters.
Moreover, we add an input/output example stating that 
$\sample(\lf) = \#$.

This construction forces the fact that a \stw{} $\trans$
consistent with $\sample$ will have a non-empty output ($a$) for exactly one 
symbol out of $x$, $y$, and $z$ (therefore matching the requirements of 
one-in-three positive SAT formulas).
\end{proof}
 
In the sequel, we prove that if the domain of the given sample is closed 
under subtree, this problem can be solved in polynomial time.

\subsection{Word Equations}
\newcommand{\x}[1]{\ensuremath{X_{#1}}}
\newcommand{\y}[1]{\ensuremath{y_{#1}}}

Our learning algorithm relies on reducing the problem of learning a \stw{} 
from a sample
to the problem of solving word equations.
In general, the best known algorithm for solving word equations 
is in linear space~\cite{plandowski1999satisfiability,jez2017word},
and takes exponential time to run.
When the domain of the sample $\sample$ is closed under subtree, 
the equations we construct have a particular form, and we call 
them \emph{\regular{} formulas}.
We show there is a polynomial-time algorithm for checking whether 
a \regular{} word formula is satisfiable.

\begin{definition}
Let $\Var$ be a finite set of \emph{variables}, and $\Output$ a finite 
alphabet. A \emph{word equation} $\eq$ is a pair $\concat_1 = \concat_2$
where $\concat_1,\concat_2 \in (\Var \cup \Output)^*$.
A \emph{word formula} $\form$ is a conjunction of word equations.
An \emph{assignment} is a function from $\Var$ to $\Output^*$,
and can be seen as 
a morphism $\assignment: (\Var \cup \Output) \to \Output^*$ 
such that $\assignment(a) = a$ for all $a \in \Output$.

A word formula is satisfiable if there exists an assignment
$\morph: (\Var \cup \Output) \to \Output^*$ 
such that for all equations $\concat_1 = \concat_2$ in $\form$, 
$\morph(\concat_1) = \morph(\concat_2)$.
\end{definition}

A word formula $\form$ is called \emph{\regular{}} if:
1) for each equation $\concat_1 = \concat_2 \in \form$,
    $\concat_2 \in \Output^*$ contains no variable,
    and $\concat_1 \in (\Output \cup \Var)^*$ contains
    at most one occurrence of each variable,
2) for all equations
    $\concat = \_$ and $\concat' = \_$ in $\form$,
    either $\concat$ and $\concat'$ do not have variables in common, 
    or $\project{\concat}{\Var} = \project{\concat'}{\Var}$,
    that is $\concat$ and $\concat'$ have the same sequence of variables.
We used the name \emph{sequential} due to this last fact.

\begin{example}
For $\x 1,\x 2,\x 3,\x 4,\x 5\in\Var$ and $p,q \in \Output^*$, each of the 
four formulas below is \regular{}:
\begin{align*}
\x 1 &= pq & \x 1 \x 3 &= qpqpqqpqpq \wedge \x 1 q \x 3 = qpqpqqqpqpq \\
\x 1 p \x 2 q \x 3 &= qppq & \x 1 pq \x 2 \x 3 &= pqpqpp \wedge \x 1 \x 2 qp \x 3 = pqppqp \wedge \x 5 p \x 4 = qpq
\end{align*}
The following formulas (and any formula containing them) are not \regular{}:
\begin{align*}
\x 1 pq \x 2 \x 3 = p \x 3 pq  &\text{\quad (rhs is not in $\Output^*$)} \\
\x 1 pq \x 2 p \x 3 \x 2 = ppqqpp  &\text{\quad ($\x 2$ appears twice in lhs)} \\
\x 1 pq \x 2 \x 3 = pqpqpp \wedge \x 2 p \x 5 = qpq &\text{\quad ($\x 2$ is shared)} \\
\x 1 pq \x 2 \x 3 = pqpqpp \wedge \x 1 p \x 3 \x 2 = pqppp &\text{\quad (different orderings of \x 1 \x 2 \x 3)}
\end{align*}
\end{example}

\newcommand{\pa}{\ensuremath{\text{p}}}
\newcommand{\pb}{\ensuremath{\text{q}}}
\newcommand{\ptr}{\ensuremath{\#}}
\begin{figure}
    \centering

    \begin{tikzpicture}[
        pnt/.style={->,circle, fill=black, inner sep=0pt, minimum size=4pt},
        pna/.style={circle,draw=black!80, line width=1pt, inner sep=0pt,minimum size=8pt},
        arr/.style={->,dashed, thick},
        edd/.style={->},
        inv/.style={draw=none,opacity=0},
        ver/.style={left=-3pt},
        hor/.style={above=-3pt},
        dia/.style={below left=-6pt},
        x=1.3cm,
        y=-1cm,
        scale=1
    ]
    \node[inv] at (-0.5,0) (S0) { };
    \node[pnt] at (0,0) (A0) { };
    \node[pnt] at (0,1) (A1) { };
    \node[pnt] at (0,2) (A2) { };
    \node[pnt] at (0,3) (A3) { };
    \node[inv] at (1,0) (B0) { };
    \node[pnt] at (1,1) (B1) { };
    \node[pnt] at (1,2) (B2) { };
    \node[pnt] at (1,3) (B3) { };
    \node[pnt] at (1,4) (B4) { };
    \node[inv] at (2,0) (C0) { };
    \node[pnt] at (2,1) (C1) { };
    \node[pnt] at (2,2) (C2) { };
    \node[pnt] at (2,3) (C3) { };
    \node[pna, double] at (2,4) (C4) { };
    
    \path[edd] (S0) edge node[inv] { } (A0);
   \path[edd] (A0) edge node[ver] { \pa{} } (A1);
   \path[edd] (A1) edge node[ver] { \pb } (A2);
   \path[edd] (A2) edge node[ver] { \pa } (A3);
   \path[edd] (B1) edge node[ver] { \pb } (B2);
   \path[edd] (B2) edge node[ver] { \pa } (B3);
   \path[edd] (B3) edge node[ver] { \pa } (B4);
   \path[edd] (C1) edge node[ver] { \pb } (C2);
   \path[edd] (C2) edge node[ver] { \pa } (C3);
   \path[edd] (C3) edge node[ver] { \pa } (C4);
   \path[edd] (A0) edge node[dia] { \ptr } (B1);
   \path[edd] (A2) edge node[dia] { \ptr } (B3);
   \path[edd] (A3) edge node[dia] { \ptr } (B4);
   \path[edd] (B1) edge node[hor] { \ptr } (C1);
   \path[edd] (B2) edge node[hor] { \ptr } (C2);
   \path[edd] (B3) edge node[hor] { \ptr } (C3);
   \path[edd] (B4) edge node[hor] { \ptr } (C4);
   \node[above=2mm] at (A0) { \x{0} };
   \node[above=2mm] at (B0) { \x{1} };
   \node[above=2mm] at (C0) { \x{2} };
   \node[below=0.5cm] at (B4) { \x{0} \pa{} \x{1} \x{2} = \pa\pb\pa\pa };
   \node[below=0.8cm] at (B4) { \phantom{d} };
   \end{tikzpicture}\hspace{0.6cm}
   \begin{tikzpicture}[
       pnt/.style={->,circle, fill=black, inner sep=0pt, minimum size=4pt},
       pna/.style={circle,draw=black!80, line width=1pt, inner sep=0pt,minimum size=8pt},
       arr/.style={->,dashed, very thick},
       edd/.style={->},
       inv/.style={draw=none,opacity=0},
       ver/.style={left=-3pt},
       hor/.style={above=-3pt},
       dia/.style={below left=-6pt},
       x=1.3cm,
       y=-1cm,
       scale=1
   ]
   \node[inv] at (-0.5,0) (S0) { };
   \node[pnt] at (0,0) (A0) { };
   \node[pnt] at (0,1) (A1) { };
   \node[pnt] at (0,2) (A2) { };
   \node[pnt] at (0,3) (A3) { };
   \node[inv] at (0,4) (A4) { };
   \node[pnt] at (1,0) (B0) { };
   \node[pnt] at (1,1) (B1) { };
   \node[pnt] at (1,2) (B2) { };
   \node[pnt] at (1,3) (B3) { };
   \node[inv] at (1,4) (B4) { };
   \node[inv] at (2,0) (C0) { };
   \node[inv] at (2,1) (C1) { };
   \node[pnt] at (2,2) (C2) { };
   \node[pnt] at (2,3) (C3) { };
   \node[pna, double] at (2,4) (C4) { };
   
   \path[edd] (S0) edge node[inv] { } (A0);
   \path[edd] (A0) edge node[ver] { \pb } (A1);
   \path[edd] (A1) edge node[ver] { \pa } (A2);
   \path[edd] (A2) edge node[ver] { \pa } (A3);
   \path[edd] (B0) edge node[ver] { \pb } (B1);
   \path[edd] (B1) edge node[ver] { \pa } (B2);
   \path[edd] (B2) edge node[ver] { \pa } (B3);
   \path[edd] (C2) edge node[ver] { \pa } (C3);
   \path[edd] (C3) edge node[ver] { \pa } (C4);
   \path[edd] (A0) edge node[hor] { \ptr } (B0);
   \path[edd] (A1) edge node[hor] { \ptr } (B1);
   \path[edd] (A2) edge node[hor] { \ptr } (B2);
   \path[edd] (A3) edge node[hor] { \ptr } (B3);
   \path[edd] (B1) edge node[dia] { \ptr } (C2);
   \path[edd] (B2) edge node[dia] { \ptr } (C3);
   \path[edd] (B3) edge node[dia] { \ptr } (C4);
    \node[above=2mm] at (A0) { \x{0} };
    \node[above=2mm] at (B0) { \x{1} };
    \node[above=2mm] at (C0) { \x{2} };
    \node[below=0.5cm] at (B4) { \x{0} \x{1} \pa{} \x{2} = \pb\pa\pa\pa };
    \node[below=0.8cm] at (B4) { \phantom{d} };
    \end{tikzpicture}\hspace{0.6cm}
    \begin{tikzpicture}[
        pnt/.style={->,circle, fill=black, inner sep=0pt, minimum size=4pt},
        pna/.style={circle,draw=black!80, line width=1pt, inner sep=0pt,minimum size=8pt},
        txt/.style={line width=0pt, inner sep=0pt,minimum size=0pt},
        arr/.style={->,dashed, very thick},
        edd/.style={->},
        inv/.style={draw=none,opacity=0},
        ver/.style={left=-3pt},
        hor/.style={above=-3pt},
        dia/.style={below left=-6pt},
        x=1.3cm,
        y=-1cm,
        scale=1
    ]
    \node[inv] at (-0.5,0) (S0) { };
    \node[pnt] at (0,0) (A0) { };
    \node[inv] at (0,1) (A1) { };
    \node[inv] at (0,2) (A2) { };
    \node[inv] at (0,3) (A3) { };
    \node[inv] at (0,4) (A4) { };
    \node[pnt] at (1,0) (B0) { };
    \node[pnt] at (1,1) (B1) { };
    \node[pnt] at (1,2) (B2) { };
    \node[pnt] at (1,3) (B3) { };
    \node[inv] at (1,4) (B4) { };
    \node[inv] at (2,0) (C0) { };
    \node[inv] at (2,1) (C1) { };
    \node[pnt] at (2,2) (C2) { };
    \node[pnt] at (2,3) (C3) { };
    \node[pna, double] at (2,4) (C4) { };
    
    \path[edd] (S0) edge node[inv] { } (A0);
    \path[edd] (B0) edge node[ver] { \pb } (B1);
    \path[edd] (B1) edge node[ver] { \pa } (B2);
    \path[edd] (B2) edge node[ver] { \pa } (B3);
    \path[edd] (C2) edge node[ver] { \pa } (C3);
    \path[edd] (C3) edge node[ver] { \pa } (C4);
    \path[edd] (A0) edge node[hor] { \ptr } (B0);
    \path[edd] (B1) edge node[dia] { \ptr } (C2);
    \path[edd] (B2) edge node[dia] { \ptr } (C3);
    \path[edd] (B3) edge node[dia] { \ptr } (C4);
        \node[above=2mm] at (A0) { \x{0} };
        \node[above=2mm] at (B0) { \x{1} };
        \node[above=2mm] at (C0) { \x{2} };
     \node[below=0.4cm] at (B4) { 
        $\quad \x{0} \x{1} \pa{} \x{2} = \pb\pa\pa\pa\, \wedge$};
     \node[below=0.8cm] at (B4) { 
        $\x{0} \pa{} \x{1} \x{2} = \pa\pb\pa\pa$
     };
        \end{tikzpicture}
    \caption{
        On the left, two automata representing the solutions of equations \x{0} \pa{} \x{1} \x{2} = \pa\pb\pa\pa{} and \x{0} \x{1} \pa{} \x{2} = \pb\pa\pa\pa{} respectively.
        On the right, their intersection represents the solutions of the
        conjunction of equations. Note that the third automaton can be
        obtained from the first (and the second) by removing states and 
        transitions.
    }    \label{figure:intersectionexample}
\end{figure}
 
We prove that any \regular{} word formula $\form$ can be 
solved in polynomial time. 
    
\begin{restatable}{lem}{rplemma}
\label{lemma:rp}
Let $\form$ be a \regular{} word formula.
Let $n$ be the number of equations in $\form$,
$V$ the number of variables, and 
$C$ be the size of the largest constant appearing in $\form$.
We can determine in polynomial time $O(nVC)$ whether
$\form$ is satisfiable. When it is, we can also produce a satisfying
assignment for $\form$.
\end{restatable}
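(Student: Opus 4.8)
The plan is to represent, for a single \regular{} equation, its entire set of solutions as an acyclic automaton, and then to solve a conjunction by intersecting these automata --- but in a coordinate system chosen so that the intersection does not blow up exponentially in the number of equations.

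First I would use the two clauses defining \regular{} to decompose $\form$. Build a graph on the equations of $\form$ in which two equations are adjacent when they share a variable. By clause~(2), adjacent equations have the same sequence of variables; hence within each connected component all equations carry one and the same variable sequence $X_{i_1},\dots,X_{i_m}$, and (since sharing a variable forces adjacency) distinct components have pairwise disjoint variable sets. Because the components are variable-disjoint, $\form$ is satisfiable iff every component is, and a global assignment is the union of per-component assignments; equations with no variables become direct constant-word equality tests and form their own trivial components. It therefore suffices to solve one component: a conjunction of equations all with variable sequence $X_{i_1},\dots,X_{i_m}$. Clause~(1) guarantees each such equation has the shape $c_0 X_{i_1} c_1 X_{i_2}\cdots X_{i_m} c_m = w$ with constants $c_t,w\in\Output^*$ and distinct variables, so in each equation the value of $X_{i_j}$ is forced to be a single contiguous block of its right-hand side.

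Fix a component with equations $e_1,\dots,e_n$, where $e_\ell$ reads $c_0^\ell X_{i_1}\cdots X_{i_m} c_m^\ell = w_\ell$. A candidate common solution is a tuple $(v_1,\dots,v_m)$. The essential observation is that, since all equations share the same assignment, the total amount of output contributed by the variables up to any point is common to every $e_\ell$, while only the fixed constants $c_t^\ell$ differ. Accordingly I introduce states $(j,L)$, where $j$ names the variable $X_{i_j}$ currently being filled and $L$ is the total number of letters contributed by the variables so far; in equation $e_\ell$ this state sits at position $p_\ell(j,L)=L+\sum_{t<j}|c_t^\ell|$ of $w_\ell$. Thus the single coordinate $(j,L)$ simultaneously pins down the reading position in every $w_\ell$, which is exactly what avoids the usual exponential product over the $n$ equations. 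The transitions are a vertical step $(j,L)\to(j,L{+}1)$, legal iff the letters $w_\ell[p_\ell(j,L)]$ are all defined and equal (the variable $X_{i_j}$ absorbs one more common letter), and a boundary step $(j,L)\to(j{+}1,L)$, legal iff in every $w_\ell$ the constant $c_j^\ell$ occurs starting at $p_\ell(j,L)$. The initial condition checks that each $w_\ell$ begins with $c_0^\ell$, and acceptance at the last column checks that consuming the trailing $c_m^\ell$ lands exactly at $|w_\ell|$ in every equation.

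I would then prove that accepting paths of this automaton correspond bijectively to common solutions: a path determines the length vector $(\ell_1,\dots,\ell_m)$ from its vertical steps, and the vertical-step condition forces the letters read for each $X_{i_j}$ to agree across all equations, so reading $v_j$ off any single $w_\ell$ yields values consistent with all of them; conversely every common solution traces such a path. Crucially, the letter-equality condition makes the automaton enforce agreement of the variable \emph{values}, not merely of their lengths --- this is the point that discards length vectors that are numerically compatible across equations yet denote different substrings. Satisfiability is then plain reachability of an accepting state in a directed acyclic graph, and a witness is recovered by tracing one accepting path. For the complexity, writing $C$ for the largest constant (so each $|w_\ell|\le C$, hence $L\le C$ and $m\le V$), the states $\{(j,L)\}$ number $O(VC)$, totalling $O(VC)$ over all components; each vertical step is an $O(n)$ letter comparison, and each boundary step is $O(n)$ after precomputing, by KMP in $O(VC)$ per equation, the positions at which each constant occurs in its right-hand side. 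This gives the claimed $O(nVC)$. The step I expect to be the real crux is the coordinate collapse of the third paragraph: recognizing that the \regular{} hypothesis makes all $n$ reading positions an affine function $L+\sum_{t<j}|c_t^\ell|$ of one shared counter, so the intersection can be maintained over the single coordinate $(j,L)$ rather than an $n$-tuple of positions, while verifying carefully that the collapsed automaton still enforces value (not just length) consistency through the vertical-letter condition.
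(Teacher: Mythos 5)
Your proposal is correct and follows essentially the same route as the paper: decompose $\form$ into variable-disjoint groups sharing a common variable sequence, encode the solution set of each equation as an acyclic position-tracking automaton, and observe that the reading positions across equations are locked together by fixed constant offsets so that the intersection never blows up. The only difference is presentational --- you build the intersected automaton directly in the shared coordinate $(j,L)$, whereas the paper constructs one DFA per equation (over words $\assignment(X_{i_1})\#\cdots\#\assignment(X_{i_m})$) and proves that every reachable state of the iterated product satisfies exactly your affine invariant, so each product automaton is no larger than the smallest factor.
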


\begin{proof}
(Sketch)
We construct for each equation in 
$\form$ a DFA which represents succinctly all the possible assignments for 
this equation. Then, we take the intersection of all these DFAs, and 
obtain the possible assignments that satisfy all equations (i.e.~the 
assignments that satisfy formula $\form$).
The crucial part of the proof is to prove that this intersection can be 
computed in polynomial time, and does not produce an exponential blow-up
as can be the case with arbitrary DFAs. 
We prove this by carefully inspecting the DFAs representing the assignments,
and using the special form they have.
We show the intersection of two such DFAs $A$ and $B$ is a DFA whose size 
is smaller than both the sizes of $A$ and $B$ 
(instead of being the product of the sizes of $A$ and $B$, as can be the case 
for arbitrary DFAs). See Figure~\ref{figure:intersectionexample} for an
illustration of this intersection.
\end{proof}

\subsection{Algorithm for Learning from a Sample}

\begin{algorithm}
{\bf Input: } 
    A sample $\sample$ whose domain is closed under subtree.\\
{\bf Output: }
    If there exists a \stw{} $\trans$ such that $\trans(t) = w$ for all 
    $(t,w) \in \sample$, output $\yes$ and $\trans$,
    otherwise, output $\no$.
\begin{enumerate}
\item 
    Build the \regular{} formula 
        $\varphi \equiv \bigwedge_{(t,w) \in \sample} \getrequation(t,w,\sample)$
\item 
    Check whether $\varphi$ has a satisfying assignment $\morph$ as follows:
        (see Lemma~\ref{lemma:rp}):
    \begin{itemize}
    \item For every word equation 
    $\getrequation(t,w,\sample)$ where $t$ has root $f$, build a DFA
    that represents all possible solutions for the words
    $\morph(f,0)$,\dots,$\morph(f,k)$.
    \item 
        Check whether the intersection of all DFAs contains some word $w$.
        \begin{itemize}
        \item If no, exit the algorithm and return $\no$.
        \item 
            If yes, define the words $\morph(f,0)$,\dots,$\morph(f,k)$ 
            following $w$.
        \end{itemize}
    \end{itemize}
\item 
    Return ($\yes$ and) $\gettransducer{\morph}$.
\end{enumerate}

\caption{
    Learning \stw{s} from a sample.
}

\label{figure:algosample}
\label{algo}
\end{algorithm}
 
Consider a sample $\sample$ such that $\dom(\sample)$ is closed under subtree.
Given $(t,w) \in \sample$, we define the word equation $\getequation(t,w)$ as:
\[
  \default{\Sigma}(t) = w
\]
where the left hand side $\default{\Sigma}(t)$ is a concatenation of elements 
from $\SigmaPairs$, considered as word variables, and the right hand 
side $w \in \Output^*$ is considered to be a word constant.

Assume all equations corresponding to a set of input/output 
examples are simultaneously satisfiable, with an assignment 
$\morph: \SigmaPairs \to \Output^*$.
Our algorithm then returns the \stw{} $\trans = \gettransducer{\morph}$,
thus guarantying that $\trans(t) = w$ for all $(t,w) \in \Sigma$.

If the equations are not simultaneously satisfiable, 
our algorithm returns \no.

\begin{example}\label{example:equations}
For $\Sigma = \{ \nnil^{(0)}, \cons^{(2)}, \node^{(2)}, \ddiv^{(0)}, \dpre^{(0)}, \dspan^{(0)}\}$, given the examples:
\begin{align*}
\default{\Sigma}(\node(\ddiv,\nnil)) & = \text{``<.div''} \\
\default{\Sigma}(\ddiv) & = \text{``div''} &
\default{\Sigma}(\dspan) & = \text{``span''} &
\default{\Sigma}(\dpre) & = \text{``pre''} \\
\default{\Sigma}(\cons(\node(\ddiv,\nnil), \nnil) & = \text{``(<.div)''} &
\default{\Sigma}(\nnil) & = \text{``''}
\end{align*}
we obtain the following equations:
\begin{align*}
(\node, 0)\cdot (\ddiv, 0) \cdot (\node, 1) \cdot (\nnil, 0) \cdot (\node, 2) & = \text{``<.div''} \\
(\ddiv, 0) & = \text{``div''} \\
(\dspan, 0) & = \text{``span''} \\
(\dpre, 0) & = \text{``pre''} \\
(\cons, 0) \cdot (\node, 0)\cdot (\ddiv, 0) \cdot (\node, 1) \cdot (\nnil, 0) \cdot \\ (\node, 2) \cdot (\cons, 1) \cdot (\nnil, 0) \cdot (\cons, 2) & = \text{``(<.div)''} \\
(\nnil, 0) & = \text{``''}
\end{align*}

A satisfying assignment for these equations is the morphism 
$\getmorph{\tau}$ given in Example~\ref{example:morph}.
Note that this assignment is not unique (see Example~\ref{example:ambiguous}). 
We resolve ambiguities in Section~\ref{sec:learningwithoutambiguity}.
\end{example}

To check for satisfiability of $\bigwedge_{(t,w) \in \sample} \getequation(t,w)$,
we slightly transform the equations in order to obtain a \regular{} formula.
For $(t,w) \in \sample$, with $t = f(t_1,\dots,t_k)$,
we define the word equation $\getrequation(t,w,\sample)$ as:
\[
  (f,0)\, w_1\, (f,1) \cdots w_k\, (f,k) = w
\]
where for all $i \in \set{1,\dots,k}$, $w_i = \sample(t_i)$.
Note that $\sample(t_i)$ must be defined, since $t$ is in the domain of 
$\sample$, which is closed under subtree.
Moreover, the formula
\[
    \form \equiv \bigwedge_{(t,w) \in \sample} \getrequation(t,w,\sample)
\]
is satisfiable iff
\(
    \bigwedge_{(t,w) \in \sample} \getequation(t,w)
\)
is satisfiable.

Finally, $\form$ is a \regular{} formula.
Indeed, two equations corresponding to trees having
the same root $\ranked{f}{k} \in \Sigma$ 
have the same sequence of variables $(f,0) \dots (f,k)$
in their left hand sides.
And two equations corresponding to trees not having the same root 
have disjoint variables. Thus, using Lemma~\ref{lemma:rp}, we can check 
satisfiability of $\form$ in polynomial time (and obtain a satisfying 
assignment for $\form$ if there exists one).

\begin{thm}[Correctness and running time of Algorithm~\ref{algo}]
Let $\sample$ be a sample whose domain is closed under subtree.
If there exists a \stw{} $\trans$ such that $\trans(t) = w$ for all 
$(t,w) \in \sample$, Algorithm~\ref{algo} returns one such \stw{}.
Otherwise, Algorithm~\ref{algo} returns \no.
Algorithm~\ref{algo} terminates in time polynomial in the size of $\sample$.
\end{thm}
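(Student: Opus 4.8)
The plan is to reduce the theorem to the machinery already in place, namely the transducer-to-morphism correspondence (Lemma~\ref{lemma:default}) and the polynomial-time solvability of \regular{} formulas (Lemma~\ref{lemma:rp}). The central fact I would establish first is that consistency of a \stw{} with the sample is equivalent to a morphism satisfying the associated word equations. Concretely, for any \stw{} $\trans$ and any $(t,w) \in \sample$, Lemma~\ref{lemma:default} gives $\trans(t) = \getmorph{\trans}(\default{\Sigma}(t))$, so $\trans(t) = w$ holds exactly when $\getmorph{\trans}$ satisfies the equation $\getequation(t,w)$ (which reads $\default{\Sigma}(t) = w$). Quantifying over the whole sample, a \stw{} consistent with $\sample$ exists iff the morphism $\getmorph{\trans}$ satisfies $\bigwedge_{(t,w)\in\sample} \getequation(t,w)$.

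For the converse direction I would use the fact that $\gettransducer{\cdot}$ inverts $\getmorph{\cdot}$: given any satisfying assignment $\morph$ of $\bigwedge \getequation(t,w)$, set $\trans = \gettransducer{\morph}$, so that $\getmorph{\trans} = \morph$ and hence $\trans(t) = \morph(\default{\Sigma}(t)) = w$ for every $(t,w)\in\sample$. On symbols of $\Sigma$ not occurring as a root in $\sample$ the assignment is unconstrained, so I would fix those components to $\varepsilon$ to obtain a totally-defined transducer; this does not affect consistency with $\sample$. Combining this with the facts already argued in the text, that $\form \equiv \bigwedge \getrequation(t,w,\sample)$ is a \regular{} formula and is satisfiable iff $\bigwedge \getequation(t,w)$ is, I obtain the chain of equivalences: a consistent \stw{} exists $\iff$ $\bigwedge \getequation(t,w)$ is satisfiable $\iff$ $\form$ is satisfiable. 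Since the algorithm decides satisfiability of $\form$ via Lemma~\ref{lemma:rp} and, on success, returns $\gettransducer{\morph}$, correctness in both the \yes{} and \no{} cases follows directly.

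For the running time I would bound the three parameters of Lemma~\ref{lemma:rp} in terms of $|\sample|$: the number of equations $n$ equals $|\sample|$; the number of variables $V$ occurring in $\form$ is at most the number of distinct pairs $(f,i)$ with $f$ a root of some tree in $\sample$, which is polynomial in $|\sample|$; and the largest constant $C$ is bounded by the length of the longest output word $w$ in $\sample$. Building $\form$ requires reading each example and substituting $w_i = \sample(t_i)$ (available since $\dom(\sample)$ is closed under subtree), which is linear; Lemma~\ref{lemma:rp} then solves it in $O(nVC)$; and assembling $\gettransducer{\morph}$ from the returned assignment is linear. Hence the total running time is polynomial in $|\sample|$.

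The step I expect to be the main obstacle is not any single calculation, since the heavy lifting is delegated to Lemmas~\ref{lemma:default} and~\ref{lemma:rp}, but rather the careful bookkeeping needed to state the equivalences precisely: verifying that the returned assignment is made total on $\SigmaPairs$ (handling symbols absent from the sample), that the equisatisfiability of the $\getequation$ and $\getrequation$ forms is invoked in the correct direction, and that the parameter bounds are genuinely polynomial in $|\sample|$ rather than in a larger ambient alphabet.
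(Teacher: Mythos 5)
Your proposal is correct and follows essentially the same route as the paper's proof: both directions of correctness are reduced to the morphism--transducer correspondence of Lemma~\ref{lemma:default} together with the equisatisfiability of the $\getequation$ and $\getrequation$ formulations, and the running time is delegated to Lemma~\ref{lemma:rp}. Your extra bookkeeping (totalizing the assignment on symbols absent from the sample, and bounding $n$, $V$, $C$ explicitly in terms of $|\sample|$) is sound and merely fills in details the paper leaves implicit.
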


\begin{proof}

Assume $\form$ has a satisfying assignment 
$\morph: \SigmaPairs \to \Output^*$, in step (2) 
of Algorithm~\ref{algo}.
In that case, Algorithm~\ref{algo} returns 
$\trans = \gettransducer{\morph}$.
By definition of $\form$, we know, for all $(t,w) \in \sample$,
$\morph(\default{\Sigma}(t)) = w$.
Moreover, since $\getmorph{\trans} = \morph$, 
we have by Lemma~\ref{lemma:default} that 
$\trans(t) = \morph(\default{\Sigma}(t))$, so $\trans(t) = w$.

Conversely, if there exists $\trans$ such that 
$\trans(t) = w$ for all $(t,w) \in \sample$.
Then, again by Lemma~\ref{lemma:default}, 
$\getmorph{\trans}$ is a satisfying assignment for $\form$, and
Algorithm~\ref{algo} must return \yes.

The polynomial running time follows from Lemma~\ref{lemma:rp}.

\begin{remark}
For samples whose domains are not closed under subtree, we may modify 
Algorithm~\ref{algo} to check for satisfiability of word equations which
are not necessarily sequential. In that case, we are not guaranteed that 
the running time is polynomial.
\end{remark}
\end{proof}

\section{Learning \stw{s} Without Ambiguity}\label{sec:learningwithoutambiguity}

The issue with Algorithm~\ref{figure:algosample} is that
the \stw{} expected by the user may be different than the one
returned by the algorithm (see Example~\ref{example:ambiguous} below).
To circumvent this issue, we use the notion of \emph{tree test set}.
Formally, a set of trees $T \subseteq \domain$ 
is a \emph{\treetestset{} for the domain $\domain$} 
if for all \stw{s} $\trans_1$ and $\trans_2$, 
$\restrict{\semantics{\trans_1}}{T} = 
  \restrict{\semantics{\trans_2}}{T}$ implies 
$\restrict{\semantics{\trans_1}}{\domain} = 
  \restrict{\semantics{\trans_2}}{\domain}$.

\begin{example}~\label{example:ambiguous} The transducer $\trans_2$ defined 
below satisfies the requirements of Example~\ref{example:equations} but is 
different than the transducer in Example~\ref{example:transducer}. Namely, the 
values in the box have been switched.
\begin{align*}
\AssociateConstants_2(\node) =& (\text{``<.''}, \varepsilon, \varepsilon) \\
\AssociateConstants_2(\ddiv) =& (\text{``div''}) &
\AssociateConstants_2(\dpre) =& (\text{``pre''}) &
\AssociateConstants_2(\dspan) &= (\text{``span''}) \\
\AssociateConstants_2(\cons) =& (\text{``(''}, \boxed{\varepsilon,  \text{``)''}}) &
\AssociateConstants_2(\nnil) =& (\varepsilon)
\end{align*}
We can verify that the two transducers are not equal on the domain 
$\exampledomain$:
\begin{align*}
\trans(\cons(\node(\ddiv, \nnil), \cons(\node(\ddiv, \nnil), \nnil))) & = \text{``(<.div)(<.div)''}\\
\trans_2(\cons(\node(\ddiv, \nnil), \cons(\node(\ddiv, \nnil), \nnil))) & = \text{``(<.div(<.div))''}
\end{align*}
\end{example}

Therefore, if a user had the \stw{} $\trans$ in mind when giving the 
sample of Example~\ref{example:equations}, it is still possible 
that Algorithm~\ref{algo} returns $\trans_2$. However, by definition of 
\emph{tree test set}, if the sample given to Algorithm~\ref{algo}
contains a tree test set for $\exampledomain$, we are guaranteed that
the resulting transducer is equivalent to the transducer that the user has in 
mind, for all trees on $\exampledomain$.

Our goal in this section is to compute from a given domain $\domain$ a
tree test set for $\domain$. 
The notion of tree test set is derived from the well-known notion of 
\emph{test set} in formal languages.
The \emph{test set} of a language $L$ (a set of words) is a subset 
$T \subseteq L$ such that 
for any two morphisms $f,g: \Sigma^* \rightarrow \Gamma^*$, 
$\restrict{f}{T} = \restrict{g}{T}$ implies 
$\restrict{f}{L} = \restrict{g}{L}$.

To compute a tree test set $T$ for $\domain$, we first compute a test set 
$T_G$ for the context-free language 
$G = \set{\default{\Sigma}(t)\ |\ t \in \domain}$ (built in
Lemma~\ref{lemma:stwtomorphism}), and then define 
$T = \set{ \gettree(w)\ |\ w \in T_G}$.
We prove in Lemma~\ref{lemma:trees} that $T$ is indeed a tree test set for 
$\domain$.

We introduce in Section~\ref{section:cftestset} a new construction,
asymptotically optimal, for building test sets of context-free languages.
We show in Section~\ref{subsection:treetestset} how this translates to a
construction of a tree test set for a domain $\domain$. We also give a
sufficient condition of $\domain$ so that the obtained tree test set is 
closed under subtree.
This allows us to present, in Section~\ref{subsection:algo2}, an algorithm 
that learns \stw{s} from a domain $\domain$ in polynomial-time (by building
the tree test set $T$ of $\domain$, and asking to the user the outputs
corresponding to the trees of $T$).

\subsection{Test Sets for Context-Free Languages}

\label{sec:testsetscontextfree}
\label{section:cftestset}

We show in this section how to build, from a context-free grammar $G$,
a test set of size of $O(|G|^3)$. Our construction is asymptotically optimal.
We reuse lemmas from \cite{plandowski_testset_1994,plandowski_testset_1995},
which were originally used to give a $O(|G|^6)$ construction.

\subsubsection{Plandowski's Test Set}

The following lemma was originally used in 
\cite{plandowski_testset_1994,plandowski_testset_1995} to show that any 
linear context-free grammar has a test set containing
at most $O(|\Prod|^6)$ elements.
We show in Section~\ref{subsection:linear}
how this lemma can be used to show a $2|\Prod|^3$ bound.

Let $\Sigma_4 = 
    \set{
        a_i, 
        \closing{a_i},
        b_i,
        \closing{b_i}\ |\ i \in \set{1,2,3,4}}$
be an alphabet.
We define:
\begin{flalign*}
L_4 = 
    \{&x_4 \, x_3\, x_2 \, x_1 \, 
        \closing{x_1} \, 
        \closing{x_2} \, 
        \closing{x_3} \, 
        \closing{x_4}\ |\ \forall i \in \set{1,2,3,4}.\ 
            (x_i,\closing{x_i}) = (a_i,\closing{a_i}) \lor
            (x_i,\closing{x_i}) = (b_i,\closing{b_i})
\}
\end{flalign*}
and 
$T_4 = L_4 \setminus \set{b_4 \,b_3\,b_2\,b_1\,
    \closing{b_1}\,\closing{b_2}\,\closing{b_3}\,\closing{b_4}}$.

The sets $L_4, T_4 \subseteq \Sigma_4$ 
    have $16$ and $15$ elements respectively.

\begin{lem}[\cite{plandowski_testset_1994,plandowski_testset_1995}]
\label{lemma:t4l4}
$T_4$ is a test set for $L_4$.
\end{lem}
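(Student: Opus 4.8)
The plan is to prove the statement directly: assuming two morphisms $f,g : \Sigma_4^* \to \Gamma^*$ agree on every word of $T_4$, I will show they also agree on the unique word of $L_4$ that is missing from $T_4$, namely $w_0 = b_4 b_3 b_2 b_1 \closing{b_1}\closing{b_2}\closing{b_3}\closing{b_4}$. Since $L_4 = T_4 \cup \set{w_0}$, agreement on all of $L_4$ then follows, which is exactly the test-set property. Equivalently, I must show that the word equation $f(w_0) = g(w_0)$ is a consequence, over all morphism pairs, of the $15$ equations $f(w)=g(w)$ for $w \in T_4$. Throughout I abbreviate the eight ``opening/closing'' images of $f$ at level $i$ by $f(a_i), f(\closing{a_i}), f(b_i), f(\closing{b_i})$, and similarly for $g$, noting that each word of $L_4$ is obtained by applying $f$ (or $g$) to a nested concatenation that chooses $a$ or $b$ at each of the four levels.

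First I would dispose of the length (Parikh) obligation, which is the commutative shadow of the claim. Writing $d(x) = |f(x)| - |g(x)|$ for each letter $x$, and setting $e_i = d(a_i) + d(\closing{a_i})$ and $e_i' = d(b_i) + d(\closing{b_i})$, the equation for a choice with $b$-set $S$ asserts $\sum_{i \notin S} e_i + \sum_{i \in S} e_i' = 0$. Taking $S = \emptyset$ (the all-$a$ word) and each singleton $S = \set{j}$ (the single-flip words), all of which lie in $T_4$, yields $\sum_i e_i = 0$ and $e_j' = e_j$ for every $j$; the identity then holds for $S = \set{1,2,3,4}$ as well, so $|f(w_0)| = |g(w_0)|$. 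This settles matching of lengths and, more generally, of letter multiplicities, using only a handful of the equations.

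The heart of the proof is the non-commutative step, and here the nested bracket shape of the words is essential. I would peel the outermost level: grouping the inner three levels into a single \emph{filling}, the four equations obtained by fixing levels $1$--$3$ uniformly to $a$ or to $b$ while fixing level $4$ to $a$ or $b$ exhibit two frames, $(f(a_4),f(\closing{a_4}))$ versus $(f(b_4),f(\closing{b_4}))$, acting on two fillings, with three of the four frame/filling combinations known to agree and the fourth being $w_0$. The naive hope that these three equations alone force the fourth is \emph{false}: a length-preserving ``shift'' of one frame is a counterexample when the fillings are unconstrained. Hence the argument must simultaneously use the remaining equations of $T_4$, which constrain the inner fillings through the analogous structure one level down. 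Concretely, I would apply Levi's lemma to align the $f$- and $g$-factorizations of each known equation, obtaining prefix/suffix relations between frame words and filling words; the point is that these relations force common \emph{periods}, and the Fine--Wilf periodicity theorem then pins the overlaps down so that the flipped frame and flipped filling can combine in only one way, namely the one asserted by $f(w_0)=g(w_0)$.

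The main obstacle, and the source of the case analysis, is the relative length of the $f$- and $g$-images at each level: when, for instance, $f(b_4)$ is longer than $g(b_4)$, one word overruns the other at the boundary between frame and filling, and one must trace how this overrun propagates inward through the nested levels. Managing these cases, and checking in each that the periodicity forced by the equations is incompatible with any discrepancy in $f(w_0)$ versus $g(w_0)$, is where the nesting and the fact that only the single all-$b$ word is omitted are genuinely used. I would organize the analysis by the signs of the length differences $d(b_i)$ and reduce each surviving case, once the relevant periods are known, back to the Parikh identity already established, which then forces the desired equality.
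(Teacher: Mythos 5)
The first thing to note is that the paper does not prove this lemma at all: it is imported verbatim from Plandowski's work \cite{plandowski_testset_1994,plandowski_testset_1995} and used as a black box in the proof of Theorem~\ref{theorem:cftestset}. So there is no in-paper argument to match your attempt against; you are effectively trying to reprove the cited result from scratch.

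Judged on its own terms, your proposal has a genuine gap: it is a proof \emph{plan} whose central step is announced but never carried out. Your reduction of the test-set property to the single implication ``the $15$ equations of $T_4$ force $f(w_0)=g(w_0)$ for $w_0 = b_4 b_3 b_2 b_1 \closing{b_1}\closing{b_2}\closing{b_3}\closing{b_4}$'' is correct, and your Parikh computation (showing $e'_j = e_j$ from the all-$a$ word and the single-flip words, hence $|f(w_0)|=|g(w_0)|$) is sound. You also correctly observe that the outermost-level equations alone do not suffice and that the inner levels must be used. But everything after that --- the Levi's-lemma alignment of the $f$- and $g$-factorizations, the claim that the resulting prefix/suffix relations ``force common periods,'' the appeal to Fine--Wilf, and the case analysis on the signs of the length differences $d(b_i)$ --- is asserted in the conditional mood (``I would apply\dots,'' ``I would organize\dots'') and never executed. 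That case analysis \emph{is} the lemma: it is the delicate combinatorial content that occupies several pages in Plandowski's papers, and there is no a priori guarantee that each case closes in the way you hope until one actually traces the overlaps through all four nesting levels. As it stands, the proposal establishes the commutative image of the claim and a correct strategy statement, but not the claim itself.
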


\subsubsection{Linear Context-Free Grammars}
\label{subsection:linear}

We now prove that for any linear context-free grammar $G$,
there exists a test set whose size is $2|\Prod|^3$.
Like the original proof of
\cite{plandowski_testset_1994,plandowski_testset_1995} that gave a 
$O(|\Prod|^6)$ upper bound, our proof 
relies on Lemma~\ref{lemma:t4l4}. However, our proof uses
a different construction to obtain the new, tight, bound.

\begin{restatable}{thm}{cftestset}
\label{theorem:cftestset}
Let $\gram = (\NTerm,\Term,\Prod,\Start)$ be a linear context-free grammar. 
There exists a test set $T \subseteq \gram$ for $\gram$ containing 
at most $2|\Prod|^3$ elements.
\end{restatable}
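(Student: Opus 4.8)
The plan is to exploit the defining feature of linearity: every derivation in $\gram$ traces a path through the nonterminals (one per production step), so each generated word factors into a left part and a right part around the unique surviving nonterminal. Concretely, a derivation $\chemin$ from $\Start$ applies productions $\mkrule{A_0}{u_1 A_1 v_1}, \dots, \mkrule{A_{m-1}}{w}$ and yields $\west{\chemin}\cdot\east{\chemin}$, where $\west{\chemin}=u_1 u_2 \cdots$ collects the left factors (with the final terminal production in the middle) and $\east{\chemin}=\cdots v_2 v_1$ collects the right factors in reverse order. Two morphisms $f,g$ agree on such a word exactly when they agree on this two-sided structure, so establishing the test-set property amounts to controlling, across all derivations, how the left and right factors interact under $f$ and $g$.

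First I would reduce the infinitely many derivations to finitely many representatives. Whenever a derivation revisits a nonterminal $A$, the subpath from $A$ back to $A$ is a loop contributing a left/right factor pair that can be pumped; a standard cut-and-pump argument shows that if $f$ and $g$ agree on the words obtained by iterating the loop zero, one, and two times, they agree on all iterates. This lets me restrict attention to derivations in which each loop is taken a bounded number of times, hence to derivations described by short sequences of productions from $\Prod$. The aim of the subsequent counting is to index the surviving representatives by at most triples of productions, which is precisely where the $|\Prod|^3$ factor will come from.

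The core of the argument is to cut the number of representatives down to $2|\Prod|^3$ using Lemma~\ref{lemma:t4l4}. The nested two-sided shape $x_4 x_3 x_2 x_1\,\closing{x_1}\,\closing{x_2}\,\closing{x_3}\,\closing{x_4}$ of $L_4$ is exactly the shape of a four-level linear derivation in which each level offers a binary choice of production contributing symmetrically on the left (an $x_i$) and on the right (an $\closing{x_i}$). Lemma~\ref{lemma:t4l4} asserts that on such a window, agreement on $15$ of the $16$ combinations forces agreement on the $16$th. I would apply this to four-level windows slid along each derivation (reducing a general choice at a level to a binary one by pairing productions): whenever a local binary branching pattern recurs, the lemma permits omitting one combination, and propagating this elimination across the grammar removes a multiplicative factor, bringing Plandowski's $O(|\Prod|^6)$ construction down to the claimed $2|\Prod|^3$, with the factor $2$ and the dropped cube absorbed by the ``$-1$'' saving of $T_4$ carried through the construction.

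The main obstacle is the final step: proving that the reduced collection is genuinely a test set, i.e.\ that agreement on the chosen $2|\Prod|^3$ words really forces agreement on every word of $\gram$. This requires showing that an arbitrary derivation can be \emph{assembled} from the retained representatives by repeated application of Lemma~\ref{lemma:t4l4} to its overlapping four-level windows, so that any potential disagreement $f(w)\neq g(w)$ would already be witnessed on some retained word. Making this assembly precise---tracking how the left factors $\west{\chemin}$ and right factors $\east{\chemin}$ compose and verifying that consecutive windows overlap correctly so that the hypotheses of Lemma~\ref{lemma:t4l4} hold at each stage---is the technically delicate part, and the careful bookkeeping there is exactly what yields the tight cubic bound in place of the earlier sixth-power one.
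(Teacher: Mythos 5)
There is a genuine gap. Your proposal has the right raw ingredients --- linearity turns derivations into paths with a left factor $\west{\cdot}$ and a right factor $\east{\cdot}$, and Lemma~\ref{lemma:t4l4} is indeed the engine --- but the construction of the test set itself is missing, and the two reduction devices you propose in its place do not deliver it. The paper's test set is $\optimalset_3(\gram)$: fix a total order on paths (first by length, then lexicographically), call a path \emph{optimal} if it is the minimum path between its endpoints, and keep exactly those words whose derivation path factors as $P_1 e_1 P_2 e_2 P_3 e_3 P_4$ with each $P_i$ optimal and each $P_i e_i$ \emph{not} optimal. Because the optimal path between two fixed vertices is unique, such a path is determined by the choice of the at most three edges $e_1,e_2,e_3$, which is where the count $\sum_{i=0}^{3}|\Prod|^i \le 2|\Prod|^3$ comes from. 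Your loop-pumping reduction (agreeing on $0,1,2$ iterates of each cycle) is a different device: it bounds cycle multiplicities but gives you no canonical representative between a fixed pair of vertices, hence no way to index the surviving words by triples of productions, and the cubic count does not follow from it.

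The second, more serious gap is the test-set property itself, which you explicitly leave open. The paper's argument is not an assembly of overlapping ``sliding windows'': it is a single minimal-counterexample argument. Take $w$ with $f(w)\neq g(w)$ whose path $P$ is smallest in the path order. Since $w\notin\optimalset_3(\gram)$, $P$ contains four disjoint non-optimal blocks $P_1e_1,\dots,P_4e_4W_5$; replacing any nonempty subset of them by their optimal counterparts $Q_1,\dots,Q_4$ yields $15$ strictly smaller paths, so by minimality $f$ and $g$ agree on the corresponding $15$ words. Setting $f'(b_i)=f(\west{P_ie_i})$, $f'(\closing{b_i})=f(\east{P_ie_i})$, $f'(a_i)=f(\west{Q_i})$, $f'(\closing{a_i})=f(\east{Q_i})$ (and likewise for $g'$) makes these $16$ words exactly an instance of $L_4$ with the $15$ agreed-upon ones being $T_4$, so Lemma~\ref{lemma:t4l4} forces $f(w)=g(w)$, a contradiction. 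Your ``binary choice at each level'' intuition is pointing at the right dichotomy (optimal block $a_i$ versus non-optimal block $b_i$), but without the ordering, the uniqueness of optimal paths, and the minimal-counterexample setup, neither the $2|\Prod|^3$ bound nor the final application of Lemma~\ref{lemma:t4l4} goes through.
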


\begin{proof}
(Sketch)
Our proof relies on the fact that a linear grammar $G$ can be seen as a 
labelled graph whose nodes are non-terminals and whose transitions are rules of
the grammar. A special node labelled $\final$ is used for rules whose 
right-hand-sides are constant.
We define the notion of \emph{optimal path} in this graph. We use
optimal paths to define paths which are piecewise optimal. 
More precisely, for $k \in \Nat$, a word belongs to the set 
$\optimalset_k(G)$ if it can be derived in $G$ by a path that can be 
split into $k+1$ optimal paths.
We then prove that $\optimalset_3(G)$ forms a test set for $G$
(by using Lemma~\ref{lemma:t4l4}), which ends our proof as
$\optimalset_3(G)$ contains $O(|\Prod|^3)$ elements.

\end{proof}

We make use of this theorem in the next section to obtain test sets 
for context-free grammars which are not necessarily linear.

\subsubsection{Context-Free Grammars}

To obtain a test set for a context-free grammar $\gram$ which
is not necessarily linear, 
\cite{plandowski_testset_1994} constructs from $\gram$ a
linear context-free grammar, $\Lin{\gram}$, which produces a subset of $\gram$,
and which is a test set for $\gram$.

Formally, $\Lin{\gram}$ is derived from $\gram$ as follows:
\begin{itemize}
\item 
    For every productive non-terminal symbol $A$ in \gram{},
    choose a word $x_A$ produced by $A$.
\item 
    Every rule $r: A \to x_0 A_1 x_1 \ldots A_n x_n$ in $\gram$, where for
    every $i$, 
    $x_i \in \Term^*$ and $A_i \in \NTerm$ is productive, 
    is replaced
    by $n$ different rules, each one obtained from $r$ by 
    replacing all $A_i$ with $x_{A_i}$, except one.
\end{itemize}

Note that the definition of \Lin{\gram} is not unique, and depends on the 
choice of the words $x_A$. The following result holds  for any 
choice of the words $x_A$.

\begin{lem}[\cite{plandowski_testset_1994,plandowski_testset_1995}]
\label{lemma:ling}
$\Lin{\gram}$ is a test set for $\gram$.
\end{lem}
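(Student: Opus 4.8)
We must show that any two morphisms $f,g : \Term^* \to \Gamma^*$ that agree on every word of $\Lin{\gram}$ also agree on every word of $\gram$; since $\Lin{\gram} \subseteq \gram$ as languages (re-expanding each baked-in representative $x_{A_i}$ into an ordinary $\gram$-derivation of $A_i$ turns any $\Lin{\gram}$-derivation into a legal $\gram$-derivation), this is exactly the assertion that $\Lin{\gram}$ is a test set for $\gram$. The plan is to call a word $z \in \Term^*$ \emph{balanced} when $f(z) = g(z)$ and to observe that the balanced words form a submonoid of $\Term^*$: if $f(z_1)=g(z_1)$ and $f(z_2)=g(z_2)$ then $f(z_1 z_2) = f(z_1)f(z_2) = g(z_1)g(z_2) = g(z_1 z_2)$. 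The hypothesis says precisely that every word of $\Lin{\gram}$ is balanced, and the goal is that every word of $\gram$ is balanced.

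I would then prove, by induction on the height of a $\gram$-derivation, the contextual statement $P(A,w)$: for every productive non-terminal $A$, every $w$ derivable from $A$ in $\gram$, and every pair $p,s \in \Term^*$ with $p\,x_A\,s \in \Lin{\gram}$ (a $\Lin{\gram}$-derivation whose spine reaches $A$ and then fires $\mkrule{A}{x_A}$, all off-spine non-terminals being their representatives), the word $p\,w\,s$ is balanced. Taking $A = \Start$ and $p = s = \varepsilon$ gives the lemma. For the inductive step, suppose $w$ comes from a rule $\mkrule{A}{x_0 A_1 x_1 \cdots A_n x_n}$ with each $A_i$ deriving $w_i$, so that $w = x_0 w_1 x_1 \cdots w_n x_n$. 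Two families of $\Lin{\gram}$-words are at hand in context $(p,s)$: the fully reduced word $x_0 x_{A_1} x_1 \cdots x_{A_n} x_n$, and, for each $j$, the single-deviation word in which the spine enters the $j$-th child (so $A_j$ runs to its true yield $w_j$) while every other child is replaced by its representative. The first is balanced by hypothesis, and applying the induction hypothesis to $A_j$ in the enlarged context $(p\,x_0 x_{A_1} \cdots x_{j-1},\ x_j \cdots x_{A_n} x_n\,s)$ — which is again of the required representative form — shows each single-deviation word is balanced as well.

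The remaining and main difficulty is the assembly step: passing from the balancedness of the base word and of all $n$ single-deviation words (each letting exactly one child run to its real yield) to the balancedness of the all-deviation word $p\,x_0 w_1 x_1 \cdots w_n x_n\,s = p\,w\,s$, in which all children run to their real yields simultaneously. In a free monoid this combination is not automatic; it requires using cancellativity of $\Gamma^*$ to localize, slot by slot, the joint effect of $f$ and $g$ on the surrounding constants $x_i$ and representatives, and then to recombine the independent local agreements into a global one. This is exactly the free-monoid combinatorics that the quantitative argument later isolates through the pattern language $L_4$ and its $15$-element test set $T_4$ (Lemma~\ref{lemma:t4l4}): certifying that finitely many simultaneous, independent left/right deviations around a central word can be checked on a controlled subset is the same phenomenon that makes the assembly valid here. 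I expect this recombination — not the induction bookkeeping — to carry the real weight, and it is precisely the step that the size-optimal construction of Theorem~\ref{theorem:cftestset} then refines to obtain the $O(|\Prod|^3)$ bound.
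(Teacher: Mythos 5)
First, a point of comparison: the paper does not prove this lemma at all --- it is imported from Plandowski's work with a citation --- so there is no in-paper argument to measure yours against, and your attempt has to stand on its own. Your skeleton (the submonoid observation, the contextual induction over derivations, and the reduction to the all-representatives word plus the $n$ single-deviation words) is the standard and correct route. The problem is that the step you yourself flag as ``the remaining and main difficulty'' --- passing from the base word and the $n$ single-deviation words to the simultaneous-deviation word --- is the entire mathematical content of the lemma, and you do not prove it. Worse, the pointer you offer in its place is a misdirection: Lemma~\ref{lemma:t4l4} on $T_4$ and $L_4$ is the engine of the \emph{linear}-grammar bound (Theorem~\ref{theorem:cftestset}), where it controls piecewise-optimal paths in the graph of a linear grammar; it does not recombine $n$ independent single deviations into a simultaneous one, which is what Lemma~\ref{lemma:ling} requires. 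As written, the proof has a hole exactly where the work is.

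The hole can be closed by an elementary cancellation argument that you should spell out. By induction on the number of simultaneous deviations, everything reduces to the two-deviation claim: if $f$ and $g$ agree on $PaMa'S$, on $PbMa'S$ and on $PaMb'S$, then they agree on $PbMb'S$. Indeed, from $f(PaM)\,f(a'S)=g(PaM)\,g(a'S)$ one of $f(PaM)$, $g(PaM)$ is a prefix of the other, say $f(PaM)=g(PaM)\,\delta$, whence $g(a'S)=\delta\, f(a'S)$; cancelling $g(PaM)$ on the left of the third equation gives $g(b'S)=\delta\, f(b'S)$, and cancelling $f(a'S)$ on the right of the second gives $f(PbM)=g(PbM)\,\delta$; combining, $f(PbMb'S)=g(PbM)\,\delta\, f(b'S)=g(PbMb'S)$ (the other case is symmetric). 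To handle $k+1$ deviations, apply this with $P$ carrying the first $k-1$ deviations and $a,b$ (resp.\ $a',b'$) the original and deviated contents of the last two slots; the three hypothesis words then have at most $k$ deviations each. Separately, your inductive invariant is not quite well-formed: $\mkrule{A}{x_A}$ is not a production of $\Lin{\gram}$, and for an arbitrary choice of representatives $x_A$ need not even be derivable from $A$ inside $\Lin{\gram}$, so the context condition ``$p\,x_A\,s \in \Lin{\gram}$'' can fail to propagate to the children in your inductive step. You should instead quantify over contexts $(p,s)$ induced by $\Lin{\gram}$-derivations whose spine visits $A$, or fix the representatives bottom-up, as the paper does with the minimal words $w_\nterm$ in the proof of Lemma~\ref{lemma:trees}.
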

Using Theorem~\ref{theorem:cftestset}, we improve the 
$O(|G|^6)$ bound of \cite{plandowski_testset_1994,plandowski_testset_1995}
for the test set of $\gram$ to $2|\gram|^3$.

\begin{thm}
\label{theorem:cftestsetgen}
Let $\gram = (\NTerm,\Term,\Prod,\Start)$ be a context-free grammar. 
There exists a test set $T \subseteq \gram$ for $\gram$ containing 
at most $2|\gram|^3$ elements.
\end{thm}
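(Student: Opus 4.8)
The plan is to reduce the general statement to the linear case of Theorem~\ref{theorem:cftestset} by passing through the linearization $\Lin{\gram}$ provided by Lemma~\ref{lemma:ling}. That lemma gives a linear grammar whose language satisfies $\Lin{\gram} \subseteq \gram$ and which is a test set for $\gram$; in particular any two morphisms $f,g$ with $\restrict{f}{\Lin{\gram}} = \restrict{g}{\Lin{\gram}}$ already agree on all of $\gram$. Since $\Lin{\gram}$ is linear, Theorem~\ref{theorem:cftestset} applies to it and yields a \emph{finite} test set $T \subseteq \Lin{\gram}$ whose size is at most $2m^3$, where $m$ denotes the number of productions of $\Lin{\gram}$. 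The claim is that this same $T$ is a test set for $\gram$, of the required cardinality.

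First I would verify that the test-set relation composes. Take morphisms $f,g$ with $\restrict{f}{T} = \restrict{g}{T}$. Because $T$ is a test set for $\Lin{\gram}$, we obtain $\restrict{f}{\Lin{\gram}} = \restrict{g}{\Lin{\gram}}$, and because the language $\Lin{\gram}$ is a test set for $\gram$ (Lemma~\ref{lemma:ling}) we then obtain $\restrict{f}{\gram} = \restrict{g}{\gram}$. Together with the inclusions $T \subseteq \Lin{\gram} \subseteq \gram$, this shows $T \subseteq \gram$ is a test set for $\gram$.

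It remains to bound $m$ in terms of $|\gram|$, which is exactly what converts the $2m^3$ estimate into the advertised $2|\gram|^3$. Here I would count the productions created by the linearization. By the construction recalled before Lemma~\ref{lemma:ling}, a rule $r : A \to x_0 A_1 x_1 \cdots A_{n_r} x_{n_r}$ of $\gram$ with $n_r$ productive nonterminal occurrences on its right-hand side contributes at most $n_r$ linear rules to $\Lin{\gram}$ (one keeping each $A_i$ and substituting the chosen words $x_{A_j}$ for the others). Thus $m \le \sum_r \max(n_r,1)$, and since each occurrence of a nonterminal already counts toward $|\rhs_r|$ we have $\max(n_r,1) \le 1 + |\rhs_r|$; summing gives $m \le \sum_{r} (1 + |\rhs_r|) = |\gram|$. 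Substituting into Theorem~\ref{theorem:cftestset} produces $|T| \le 2m^3 \le 2|\gram|^3$.

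The argument is short once Theorem~\ref{theorem:cftestset} and Lemma~\ref{lemma:ling} are available, so the only delicate point is the production count: I would double-check the degenerate cases (rules whose right-hand sides contain no productive nonterminal, and nonterminals discarded as unproductive during linearization), confirming that each of these can only keep $m$ below $|\gram|$, so that the inequality $m \le |\gram|$ holds without any hidden blow-up.
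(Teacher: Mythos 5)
Your proposal is correct and follows essentially the same route as the paper: apply Theorem~\ref{theorem:cftestset} to the linear grammar $\Lin{\gram}$, use Lemma~\ref{lemma:ling} and the transitivity of the test-set relation, and bound the number of productions of $\Lin{\gram}$ by $|\gram|$ since each rule $\mkrule{\nterm}{\rhs}$ is duplicated at most $|\rhs|$ times. Your explicit handling of the degenerate cases (rules with no productive non-terminal on the right-hand side) via $\max(n_r,1) \le 1+|\rhs_r|$ is a slightly more careful version of the paper's one-line count, but the argument is the same.
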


\begin{proof}
Follows from Theorem~\ref{theorem:cftestset},
Lemma~\ref{lemma:ling}, and from the fact that 
$\Lin{\gram}$ has at most
$|\gram| = \sum_{\mkrule{\nterm}{\rhs} \in \Prod} (|\rhs| + 1)$
rules. (When constructing $\Lin{\gram}$, each rule 
$\mkrule{\nterm}{\rhs}$ of $\gram$ is duplicated at most 
$|\rhs|$ times.)
\end{proof}

\subsection{Tree Test Sets for Transducers}

\label{subsection:treetestset}

We use the results of the previous section to construct a tree test set for a 
domain $\domain$. 

\begin{lem}
\label{lemma:trees}
Any domain $\domain = (\Sigma,\State,I,\delta)$ has a \treetestset{} $T$
of size at most $O(|\domain|)^3$.
Moreover, if $I = Q$, then we can build $T$ such that $T$ is closed under 
subtree.
\end{lem}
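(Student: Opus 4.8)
The plan is to reduce the construction of a tree test set to the construction of a word test set for the context-free language $G = \set{\default{\Sigma}(t) \mid t \in \domain}$, which is already controlled by Theorem~\ref{theorem:cftestsetgen}, and then transport the result back to trees through the parsing map $\gettree$. Concretely, I would first build the grammar for $G$ exactly as in Lemma~\ref{lemma:stwtomorphism}, whose size is linear in $|\domain|$; apply Theorem~\ref{theorem:cftestsetgen} to obtain a test set $T_G \subseteq G$ with $|T_G| \le 2|G|^3 = O(|\domain|^3)$; and finally set $T = \set{\gettree(w) \mid w \in T_G}$. Since every $w \in T_G$ lies in $G$, it equals $\default{\Sigma}(t)$ for some $t \in \domain$, so by injectivity of $\default{\Sigma}$ the tree $\gettree(w) = t \in \domain$ is well defined; hence $T \subseteq \domain$ and $|T| \le |T_G| = O(|\domain|^3)$.

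To verify that $T$ is a tree test set, I would take two \stw{s} $\trans_1,\trans_2$ with $\restrict{\semantics{\trans_1}}{T} = \restrict{\semantics{\trans_2}}{T}$ and show they agree on all of $\domain$. For each $w \in T_G$, writing $t = \gettree(w)$ so that $\default{\Sigma}(t) = w$ and $t \in T$, Lemma~\ref{lemma:default} gives $\getmorph{\trans_i}(w) = \getmorph{\trans_i}(\default{\Sigma}(t)) = \semantics{\trans_i}(t)$, and the two sides coincide by hypothesis. Thus $\getmorph{\trans_1}$ and $\getmorph{\trans_2}$ agree on $T_G$; since $T_G$ is a test set for the language $G$, they agree on all of $G$, and Lemma~\ref{lemma:stwtomorphism} then yields $\restrict{\semantics{\trans_1}}{\domain} = \restrict{\semantics{\trans_2}}{\domain}$. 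This settles the first claim.

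For the \emph{moreover} part, I would first observe that $I = Q$ forces $\domain$ to be closed under subtree: if $f(t_1,\dots,t_k) \in \domain$ via a transition $(f,q,q_1\cdots q_k)$, then each $t_i$ is recognized from $q_i \in Q = I$, so $t_i \in \domain$. Consequently I can replace $T$ by its subtree closure $T'$, the set of all subtrees of trees in $T$: closure of $\domain$ guarantees $T' \subseteq \domain$, and since $T \subseteq T'$ the tree-test-set property is preserved (any two transducers agreeing on the larger set $T'$ in particular agree on $T$, hence on $\domain$), so $T'$ is again a tree test set and is closed under subtree by construction.

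The one delicate point, and the step I expect to be the main obstacle, is keeping the size of $T'$ within $O(|\domain|^3)$ rather than letting the subtree closure blow it up. The plan here is to make the construction of $\Lin{\gram}$ concrete, exploiting the fact (Lemma~\ref{lemma:ling}) that the choice of representative words is free: choose each $x_{A_q}$ to be $\default{\Sigma}(t_q)$ for a single fixed minimal tree $t_q$ recognized from state $q$, picked so that the family $\set{t_q \mid q \in Q}$ is itself closed under subtree (the immediate subtrees of a minimal tree can be taken to be the minimal trees of its child states). With this choice every word of $T_G$ parses to a ``caterpillar'' tree: a spine arising from the bounded-length optimal paths underlying $\optimalset_3$ in Theorem~\ref{theorem:cftestset}, with representative trees $t_q$ hanging off it. The subtrees are then either one of the $O(|Q|)$ representative trees $t_q$, or a tree rooted on the spine. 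Bounding the number of the latter requires re-examining the optimal-path decomposition that produced the cubic count, and showing that the distinct spine-rooted subtrees (suffixes of optimal segments followed by the remaining full segments) still number only $O(|\domain|^3)$; this careful accounting is exactly where the difficulty lies in retaining the cubic bound.
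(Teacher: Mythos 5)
Your first part is correct and is essentially identical to the paper's argument: build $T_\gram$ via Theorem~\ref{theorem:cftestsetgen} for the grammar of Lemma~\ref{lemma:stwtomorphism}, pull it back through $\gettree$, and chain Lemma~\ref{lemma:default}, the test-set property of $T_\gram$, and Lemma~\ref{lemma:stwtomorphism} to transfer agreement from $T$ to all of $\domain$.

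For the \emph{moreover} part there is a genuine gap, and it is exactly the step you flag as the main obstacle and then leave open. Taking the subtree closure $T'$ of $T$ does preserve the tree-test-set property and, since $I=Q$ makes $\domain$ closed under subtree, stays inside $\domain$; but a tree of $T$ is the parse of a word derived along a path of length up to $O(|\NTerm|)$, so the closure can a priori contain on the order of $|\domain|^3\cdot|\domain| = |\domain|^4$ spine-rooted subtrees, and your proposal concedes that the cubic count must be re-established without establishing it. The paper sidesteps the closure (and hence the counting) entirely: it fixes the representative words of $\Lin{\gram}$ to be hereditarily minimal words $w_\nterm$, sets $T = \set{ \gettree(w)\ |\ w \in T_\gram} \cup \set{ \gettree(w_\nterm)\ |\ \nterm \in \NTerm}$, and shows this set is \emph{already} closed under subtree. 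The off-spine children of $\gettree(w)$ are trees $\gettree(w_{\nterm})$, which were added explicitly (only $|Q|$ of them); the spine-rooted child is $\gettree(w')$, where $w'$ is obtained from the derivation of $w$ by replacing its first two rules with the $\varepsilon$-rule $\mkrule{\Start}{N_j}$ --- a rule that exists in $\Lin{\gram}$ precisely because $I = Q$ --- and the resulting derivation is argued to remain in $\optimalset_3(\Lin{\gram})$, so $w' \in T_\gram$ and $\gettree(w') \in T$ with no new elements created. Note that this is a second, essential use of the hypothesis $I = Q$, whereas your proof uses it only to keep $T'$ inside $\domain$. To complete your argument you would need to supply exactly this membership claim (every spine-rooted subtree of a tree in $T$ is again the parse of a word of $\optimalset_3(\Lin{\gram})$), rather than attempting a separate count of such subtrees.
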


\begin{proof}
Intuitively, we build the \treetestset{} for $\domain$ by taking 
the set of trees corresponding to the test set of $\gram$, where 
$\gram$ is the grammar built in Lemma~\ref{lemma:stwtomorphism}.

Let $\trans_1$ and $\trans_2$ be two \stw{s}.
Let $T_\gram$ be a test set for $\gram$.
Define $T = \set{ \gettree(w) \ |\ w \in T_\gram}$.
By Theorem~\ref{theorem:cftestsetgen}, we can assume
$T_\gram$ has size at most $|\gram|^3$, and hence, $T$ has size at most
$|\domain|^3$. 
Let $\morph_1$ and $\morph_2$ be $\getmorph{\trans_1}$
and $\getmorph{\trans_2}$, respectively.
We have:
\begin{flalign*}
&\restrict{\semantics{\trans_1}}{T} = 
  \restrict{\semantics{\trans_2}}{T} \iff \\
&\forall t \in T.\ \trans_1(t) = \trans_2(t) \iff \\
&\forall w \in T_\gram.\ \trans_1(\gettree(w)) = \trans_2(\gettree(w)) 
    \iff \text{ (by Lemma~\ref{lemma:default})}\\
&\forall w \in T_\gram.\ 
    \morph_1(\default{\Sigma}(\gettree(w))) =
    \morph_2(\default{\Sigma}(\gettree(w))) 
    \iff \text{ (by definition of $\gettree$)}\\
&\forall w \in T_\gram.\ 
    \morph_1(w) = \morph_2(w) 
    \iff \text{ (since $T_\gram$ is a test set for $\gram$)}\\
&\forall w \in \gram.\ 
    \morph_1(w) = \morph_2(w) 
    \iff \text{ (see Lemma~\ref{lemma:stwtomorphism})}\\
&\restrict{\semantics{\trans_1}}{\domain} = 
  \restrict{\semantics{\trans_2}}{\domain}
\end{flalign*}

This ends the proof that $T$ is a \treetestset{} for $\domain$.

We now show how to construct $T$ such that it is closed under subtree.
For every non-terminal $\nterm$ of $\gram$, we define the minimal word 
$w_\nterm$.
These words are built inductively, starting from the non-terminals which
have a rule whose right-hand-side is only made of terminals.
In the definition of $\Lin{\gram}$, we use these words when modifying 
the rules of $\gram$ into linear rules.

When then define $T_\gram$ as the test set of $\Lin{\gram}$ (which is 
also a test set of $\gram$), and
$T = \set{ \gettree(w) \ |\ w \in T_\gram} \cup 
    \set{ \gettree(w_\nterm) \ |\ \nterm \in \gram}$.
As shown previously, $T$ is a tree test set for $\domain$.
We can now prove that $T$ is closed under subtree.
Let $t = f(t_1, \ldots, t_k) \in T$. Let $i \in \set{1,\dots,k}$. 
We want to prove that $t_i \in T$.

We consider two cases.
Either there exists $w \in T_\gram$ such that 
$t = f(t_1, \ldots, t_k) = \gettree(w)$,
or there exists $\nterm \in \gram$, 
$t = f(t_1, \ldots, t_k) = \gettree(w_\nterm)$.

\begin{itemize}
\item 
First, if there exists $w \in T_\gram$ such that 
$t = f(t_1, \ldots, t_k) = \gettree(w)$.
Consider a derivation $p$ for $w$ in the $\Lin{\gram}$. 
By construction of $\Lin{\gram}$, the first rule is an $\varepsilon$-transition of 
the form $\mkrule{S}{N}$ while the second rule is of the form:
$$\mkrule{N}{(f, 0)\cdot w_1\cdot (f, 1)\cdots w_{j-1}\cdot(f, j-1)\cdot N_j\cdot (f, j)\cdot w_{j+1} \cdots w_k\cdot (f, k)}.$$
This second rule corresponds to a rule in $\gram$, of the form:
$$\mkrule{N}{(f, 0)\cdot N_1\cdot (f, 1)\cdots N_{j-1}\cdot(f, j-1)\cdot N_j\cdot (f, j)\cdot N_{j+1} \cdots N_k\cdot (f, k)}.$$

We then have two subcases to consider.
Either $i \neq j$, and in that case $t_i = \gettree(w_i)$.
By construction of $\Lin{\gram}$, $w_i$ must be equal to $w_\nterm$ for 
some $\nterm \in G$. Thus, we have $t_i \in T$ by definition of $T$.

Or $i = j$, in that case $t_i = \gettree(w')$, where 
$w'$ is derived by the derivation $p$ where the first two derivation rules,
outlined above, are replaced with the $\varepsilon$-rule 
$\mkrule{\Start}{N_i}$. This production rule is ensured to exist in 
$\Lin{\gram}$, as all states of $\domain$ are initial, so 
there exists a rule $\mkrule{\Start}{N_q}$ for all $q \in Q$.
(see definition of $\gram$ in Lemma~\ref{lemma:stwtomorphism}).
Then, since $w \in \optimalset_3(\Lin{\gram})$, and by construction of 
$\optimalset_3(\Lin{\gram})$, we conclude that 
$w' \in \optimalset_3(\Lin{\gram})$.
This ensures that $w' \in T_\gram$, and $t_i \in T$.

\item 
Otherwise, there exists $\nterm \in \gram$ such that 
$t = f(t_1,\dots,t_k) = \gettree(w_\nterm)$.
Using the fact that $w_\nterm$ was build inductively in the grammar $\gram$, 
using other minimal words $w_{\nterm'}$ for $\nterm' \in \gram$, 
we deduce there exists $\nterm' \in \gram$ such that 
$t_i = \gettree(w_{\nterm'})$, and $t_i \in T$.
\end{itemize}

\end{proof}

Lemma~\ref{lemma:lowerbound} shows the bound given in Lemma~\ref{lemma:trees} 
is tight, in the sense that there exists an infinite class of growing domains 
$\domain$ for which the smallest \treetestset{} has size $|\domain|^3$.

\begin{restatable}{lem}{lowerbound}
\label{lemma:lowerbound}
There exists a sequence of domains $\domain_1, \domain_2, \dots$
such that for every $n \geq 1$, 
the smallest \treetestset{} of $\domain_n$ has at least $n^3$ elements,
and the size of $\domain_n$ is linear in $n$.
Furthermore, this lower bound holds even with the extra assumption that 
all states of the domain are initial.
\end{restatable}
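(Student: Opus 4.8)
The plan is to construct an explicit family of domains whose tree test sets must be large, by mirroring the combinatorial structure that makes the cubic lower bound tight for context-free test sets. Since Lemma~\ref{lemma:trees} built tree test sets by going through the grammar $\gram$ of Lemma~\ref{lemma:stwtomorphism}, a natural strategy is to take a family of grammars $G_n$ that are known (from Plandowski's lower-bound work \cite{plandowski_testset_1994,plandowski_testset_1995}) to force test sets of cubic size, and then realize each $G_n$ as the yield-language of a domain $\domain_n$. Concretely, I would design $\domain_n$ so that the context-free language $\set{\default{\Sigma}(t)\ |\ t \in \domain_n}$ coincides with (or contains a cubically hard fragment of) the classical witness language, built from three independent ``coordinate'' choices each ranging over $n$ alternatives, so that distinguishing all pairs of transducers requires roughly $n^3$ distinct trees.

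First I would fix the ranked alphabet and transition structure: I would use symbols whose arities let a single linear spine of the tree encode three blocks of choices, with each block being a chain of $n$ unary-style selectors. The key design constraint is that two \stw{s} can differ in a way that is only witnessed when a specific triple of choices co-occurs in the \emph{same} tree; this is exactly the phenomenon exhibited in Example~\ref{example:ambiguous}, where switching $\varepsilon$ and ``)'' in $\AssociateConstants(\cons)$ is invisible on small trees but visible on a tree nesting the ambiguous constructor. By making the difference depend on three independent indices, I force the \treetestset{} to contain a tree realizing each of the $n^3$ triples, giving the $n^3$ lower bound. I would verify that $|\domain_n|$ is linear in $n$ by counting transitions: each of the $n$ selectors contributes $O(1)$ transitions, so $|\domain_n| = O(n)$.

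The heart of the argument is the lower-bound direction: **I would** show that for any candidate set $T \subseteq \domain_n$ with fewer than $n^3$ trees, there exist two \stw{s} $\trans_1,\trans_2$ that agree on $T$ but disagree somewhere on $\domain_n$. To do this I would associate to each missing triple $(i,j,\ell)$ a pair of transducers whose morphisms $\getmorph{\trans_1},\getmorph{\trans_2}$ differ precisely on the combination encoded by $(i,j,\ell)$, using Lemma~\ref{lemma:stwtomorphism} to translate the transducer-inequality into a morphism-inequality on the yield language $\gram$. I would exploit the fact that a \stw{} has independent constant slots per constructor, so I have enough degrees of freedom to ``hide'' a discrepancy that only surfaces on one particular tree; by a counting/pigeonhole step, if $|T| < n^3$ some triple is unrepresented, and the corresponding transducer pair is a counterexample, so $T$ cannot be a \treetestset{}. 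Finally I would note the construction keeps all states initial (each selector state can be made an initial state without collapsing the language), establishing the stated ``extra assumption'' clause.

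**The hard part will be** engineering the domain so that the three coordinate choices are genuinely \emph{independent} from the transducer's point of view --- i.e., ensuring that a single constant discrepancy in the morphism can be arranged to be observable on exactly one triple and cancelled (made invisible) on all others. This requires controlling how the \stw{} constants concatenate along the spine so that lengths and positions align only for the intended triple; the danger is that an intended ``hidden'' difference leaks into other trees and collapses the count below $n^3$, or that the yield language accidentally identifies distinct triples. I expect to manage this by using disjoint marker symbols to separate the three blocks (analogous to the role of the barred letters $\closing{a_i},\closing{b_i}$ in $L_4$ of Lemma~\ref{lemma:t4l4}), so that each coordinate is read off an unambiguous portion of the output word, making the independence transparent and the counting argument clean.
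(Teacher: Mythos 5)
Your overall strategy matches the paper's: build an explicit family $\domain_n$ encoding three independent choices among $n$ alternatives, and argue by pigeonhole that any $T$ with $|T|<n^3$ misses some triple $(x,y,z)$, for which you then exhibit two \stw{s} agreeing on $T$ but disagreeing on the missing tree. The paper does exactly this with trees $\noda_x(\nodb_y(\leaf_z))$ over a three-state domain with $3n$ symbols. However, there is a genuine gap: the entire content of the lemma lies in actually producing, for an arbitrary missing triple, a pair of \stw{s} that agree on \emph{every other} tree of $\domain_n$ (including all the depth-$1$ and depth-$2$ trees forced into the domain by making all states initial), and you leave this step as ``the hard part'' without carrying it out. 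Worse, your proposed way of resolving it --- inserting disjoint marker symbols so that ``each coordinate is read off an unambiguous portion of the output word'' --- points in the wrong direction. If each coordinate were independently readable from the output, any discrepancy between the two transducers would already be visible on some tree fixing only one or two coordinates, so a test set of size $O(n)$ or $O(n^2)$ would suffice and the $n^3$ bound would collapse. A lower bound needs the opposite of transparency.

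The paper's construction makes this concrete via commutation over $\Output=\set{p,q}$: the two transducers differ only at $\noda_x$, outputting $(\varepsilon,pq)$ versus $(pq,\varepsilon)$, i.e.\ they disagree by conjugation with $pq$. This disagreement is invisible whenever the recursively produced inner word commutes with $pq$, which is arranged to happen for every choice of middle and leaf symbol except $(\nodb_y,\leaf_z)$: only that combination yields the inner word $qp$, which does not commute with $pq$. Because the two transducers have identical constants on all $\nodb_j$ and $\leaf_j$, they also agree on all shorter trees added by the ``all states initial'' requirement. Your proposal is missing precisely this mechanism (or any substitute for it), so as written it does not yet establish the lower bound.
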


\begin{proof}
(Sketch)
Our proof is inspired by the lower bound proof for test sets of context-free
languages~\cite{plandowski_testset_1994,plandowski_testset_1995}. 
For $n \geq 1$, we build a particular domain $\domain_n$ (whose states are 
all initial), and we assume by 
contradiction that it has a test set $T$ of size less than $n^3$.
From this assumption, we expose a tree $t \in \domain_n$, as well as 
two \stw{s} $\trans_1$ and $\trans_2$ such that 
${\trans_1}_{|T} = {\trans_2}_{|T}$ but $\trans_1(t) \neq \trans_2(t)$.
\end{proof}

\subsection{Learning \stw{s} Without Ambiguity}

\label{subsection:algo2}

\begin{algorithm}
{\bf Input: } 
    A domain $\domain$, and an oracle \stw{} $\utrans$.\\
{\bf Output: } 
    A \stw{} $\trans$ functionally equivalent to $\utrans$.
\begin{enumerate}
\item 
    Build a tree test set $\set{t_1 \ldots t_n}$ of $\domain$, 
    following Lemma~\ref{lemma:trees}.
\item  \label{algline:askforoutputonebyone} 
    For every $t_i \in \{t_1 \ldots t_n\}$, ask the oracle for 
    $w_i = \utrans(t_i)$. 
\item
    Run Algorithm~\ref{algo} on the sample 
        $\set{ (t_i,w_i)\ |\ 1 \leq i \leq n}$.
\end{enumerate}

\caption{
    Learning \stw{s} from a domain.
}

\label{fig:activelearningalgorithm}
\label{algo2}
\end{algorithm}
 
Our second algorithm (see Algorithm~\ref{algo2})
takes as input a domain $\domain$, and computes 
a tree test set $T \subseteq \domain$.
It then asks the user the expected output for each tree $t \in T$.
The user is modelled by a \stw{} $\utrans$ that can be used 
as an oracle in the algorithm.
Algorithm~\ref{algo2} then runs Algorithm~\ref{figure:algosample} on the
obtained sample.
The \stw{} $\utrans$ expected by the user may still be syntactically 
different the \stw{} $\trans$ returned by our algorithm, but we are
guaranteed that 
$\restrict{\semantics{\trans}}{\domain} =
\restrict{\semantics{\utrans}}{\domain}$
(by definition of \treetestset).

\begin{thm}[Correctness and running time of Algorithm~\ref{algo2}]
Let $\utrans$ be a \stw{} (used as an oracle), and 
$\domain = (\Sigma,\State,I,\delta)$ a domain such that $I = Q$.
The output $\trans$ of Algorithm~\ref{algo2} is 
a \stw{} $\trans$ such that
$\restrict{\semantics{\trans}}{\domain} = 
    \restrict{\semantics{\utrans}}{\domain}$.

Furthermore,
Algorithm~\ref{algo2} invokes the oracle $O(|\domain|^3)$ times,
and terminates in time polynomial in $|\domain|$.
\end{thm}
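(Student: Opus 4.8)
The plan is to prove the correctness claim by composing the three lemmas/theorems the algorithm relies on, and then bound the query and time complexity by tracing each of the three algorithmic steps.

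For \textbf{correctness}, I would argue as follows. Step 1 of Algorithm~\ref{algo2} builds a tree test set $T = \set{t_1,\dots,t_n}$ for $\domain$; since the hypothesis gives $I = Q$, Lemma~\ref{lemma:trees} applies and guarantees that such a $T$ exists, has size $O(|\domain|^3)$, and (importantly for the next algorithm) is closed under subtree. In step 2, the algorithm queries the oracle $\utrans$ to obtain $w_i = \utrans(t_i)$ for each $t_i$, forming the sample $\sample = \set{(t_i, w_i) \mid 1 \le i \le n}$. Note that $\sample$ is consistent with an actual \stw{}, namely $\utrans$ itself, so by the correctness theorem for Algorithm~\ref{algo} (and because $\dom(\sample) = T$ is closed under subtree), step 3 returns $\yes$ together with some \stw{} $\trans$ satisfying $\trans(t_i) = w_i$ for all $i$. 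The crux is then to upgrade agreement on $T$ to agreement on all of $\domain$: since $\trans(t_i) = w_i = \utrans(t_i)$ for every $t_i \in T$, we have $\restrict{\semantics{\trans}}{T} = \restrict{\semantics{\utrans}}{T}$, and because $T$ is a \treetestset{} for $\domain$, the defining property of tree test sets yields exactly $\restrict{\semantics{\trans}}{\domain} = \restrict{\semantics{\utrans}}{\domain}$, which is the desired conclusion.

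For the \textbf{complexity}, the oracle-query count is the size of $T$, which by Lemma~\ref{lemma:trees} is $O(|\domain|^3)$; this bounds the number of invocations in step 2. For the running time, I would bound each step separately: the construction of $T$ in step 1 follows the polynomial-time procedure underlying Lemma~\ref{lemma:trees} (building the grammar $\gram$ of Lemma~\ref{lemma:stwtomorphism}, whose size is linear in $|\domain|$, then its linearization $\Lin{\gram}$ and the test set $\optimalset_3(\Lin{\gram})$ of Theorem~\ref{theorem:cftestsetgen}), and step 3 runs Algorithm~\ref{algo} on a sample of size polynomial in $|\domain|$, which terminates in polynomial time by the correctness/running-time theorem for Algorithm~\ref{algo} (via Lemma~\ref{lemma:rp}). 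Summing these polynomially-many polynomial-time stages gives a total running time polynomial in $|\domain|$.

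The main obstacle is a consistency/termination subtlety in step 3: one must justify that Algorithm~\ref{algo} actually returns a transducer (rather than \no) and that it can be run at all. This rests on observing that the queried sample is by construction realizable by the oracle \stw{} $\utrans$, so the word formula built in Algorithm~\ref{algo} is satisfiable (with $\getmorph{\utrans}$ a witnessing assignment), and on the fact that $\dom(\sample) = T$ is closed under subtree — the precondition for Algorithm~\ref{algo}'s polynomial guarantee — which is precisely why Lemma~\ref{lemma:trees} was stated with the closure conclusion under the assumption $I = Q$. Once these two facts are in place, the rest is a routine chaining of the previously established results.
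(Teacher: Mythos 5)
Your proposal is correct and follows essentially the same route as the paper's own (much terser) proof: correctness via the correctness of Algorithm~\ref{algo} combined with the defining property of the tree test set, the $O(|\domain|^3)$ query bound from the size of $T$ in Lemma~\ref{lemma:trees}, and polynomial running time from the closure-under-subtree guarantee (enabled by $I=Q$) feeding into Algorithm~\ref{algo}'s polynomial-time case. Your explicit observation that the sample is realizable by $\utrans$ itself, so Algorithm~\ref{algo} cannot return \no{}, is a detail the paper leaves implicit but is exactly the right justification.
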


\begin{proof}
The correctness of Algorithm~\ref{algo2} follows from the 
correctness of Algorithm~\ref{algo} and from the fact that $T$ is a 
\treetestset{} for $\domain$.
The fact that Algorithm~\ref{algo2} invokes the algorithm 
$O(|\domain|^3)$ times follows from the size of the \treetestset{}
(see Lemma~\ref{lemma:trees}).

Moreover, since all states of $\domain$ are initial, the tree test set of 
$\domain$ that we build is closed under subtree. The polynomial running time 
then follows from the fact that Algorithm~\ref{algo} ends in polynomial time 
for samples whose domains are closed under subtree.

\begin{remark}
Similarly to Algorithm~\ref{algo}, 
Algorithm~\ref{algo2} also applies for domains such that $I \neq Q$, but the 
running time is not guaranteed to be polynomial.
\end{remark}
\end{proof}

\section{Learning \stw{} Interactively}
\label{sec:learningstwinteractively}
\addtocounter{subsection}{1}

\begin{algorithm}
{\bf Input: } 
    A domain $\domain$, and an oracle \stw{} $\utrans$ 
        whose output alphabet is $\Output$.\\
{\bf Output: } 
    A \stw{} $\trans$ functionally equivalent to $\utrans$.
\begin{enumerate}
\item
    Initialize a map $\solutions$ from $\Sigma$ to Automata, such that 
    for $\ranked{f}{k} \in \Sigma$, $\solutions(f)$ recognizes 
    $\set{ x_0 \# \cdots \# x_k \ |\ x_i \in \Output^*}$,
\item 
    Build a tree test set $T$ of $\domain$, 
    following Lemma~\ref{lemma:trees}. 
\item
    Initialize a partial function  
    $\sample: \TreeA{\Sigma} \pto \Output^*$, initially undefined
    everywhere.
\item
    While $\dom(\sample) \neq T$:
    \begin{itemize}
    \item 
        Choose a tree $f(t_1,\dots,t_k) \notin \dom(\sample)$ such that all subtrees of $t$ belong to $\dom(\sample)$ 
        (possible since $T$ is closed under subtree).
    \item 
        Build the automaton $\alts$ recognizing 
        $\set{ 
            x_0\, \sample(t_1)\, x_1 \cdots \sample(t_k)\, x_k \ |\ 
                x_0 \# x_1 \cdots \# x_k \in \solutions(f)
         }$,
         representing all possibles values of $\utrans(t)$ that do not
         contradict previous outputs.
        \begin{itemize}
        \item 
            If $\alts$ recognizes only $1$ word $w$, define 
            $\sample(t) = w$.
        \item
            Otherwise ($\alts$ recognizes at least $2$ words), 
            define $\sample(t) = \utrans(t)$ using the oracle.
        \end{itemize}
    \item
        Update 
            $\solutions(f) = 
                \solutions(f) \cap 
                \getautomaton{t}{\sample(t)}$.
    \end{itemize}
\item
    Run Algorithm~\ref{algo} on $\sample$.
\end{enumerate}

\caption{
    Interactive learning of \stw{s}.
}

\label{fig:interactive}
\label{algo3}
\end{algorithm}
 
Our third algorithm (see Algorithm~\ref{algo3})
takes as input a domain $\domain$, and computes a tree test set 
$T \subseteq \domain$. For this algorithm, we require from the beginning
that all states of $\domain$ are initial, so that $T$ is closed under
subtree.
For a sample $\sample$ such that $\dom(\sample)$ is closed under subtree, and 
for $(t,w) \in \sample$, we denote by $\getautomaton{t}{w}$
the automaton $\getautomaton{\rp}{w}$
where $\rp = w$ is the equation $\getrequation(t,w,\sample)$.

Instead of building the sample $\sample$ and the intersection 
$\bigcap_{(t,w) \in \sample} \getautomaton{t}{w}$ all at once, 
like algorithms~\ref{algo} and \ref{algo2} do, Algorithm~\ref{algo3}
builds $\sample$ and the intersection incrementally.
It then uses the intermediary results to infer outputs, in order to avoid 
calling the oracle $\utrans$ too many times.
Overall, 
we prove that Algorithm~\ref{algo3} invokes the oracle $\utrans$ at most 
$O(|\domain|)$ times, while Algorithm~\ref{algo2} 
invokes it $O(|\domain|^3)$ times.

To infer outputs, Algorithm~\ref{algo3} maintains the following invariant for 
the while loop.
First $\sample$ is such that $\dom(\sample) \subseteq T$, and its domain
increases at each iteration.
Then, for any $\ranked{f}{k} \in \Sigma$,
$\solutions(f)$ is equal to 
$\bigcap_{(t,w) \in \sample} \getautomaton{t}{w}$, and thus 
recognizes the set
\[
    \set{ \morph(f,0) \# \morph(f,1) \# \dots \# \morph(f,k) \ |\ 
    \text{$\morph: \SigmaPairs \to \Output$ satisfies 
        $\bigwedge_{(t,w) \in \sample} \getrequation(t,w,\sample)$}}.
\]
Intuitively, $\solutions(f)$ represents the possible values for the output 
of $f$ in the transducer $\utrans$, based on the constraints given so far.

To infer the output of a tree $t = f(t_1,\dots,t_k)$, for some 
$\ranked{f}{k} \in \Sigma$, Algorithm~\ref{algo3} uses the fact that 
$\utrans(f(t_1,\dots,t_k))$ must be of the form 
$\morph(f,0) \sample(t_1) \morph(f,1) \cdots \sample(t_k) \morph(f,k)$
for some morphism $\morph: \SigmaPairs \to \Output$ satisfying 
        $\bigwedge_{(t,w) \in \sample} \getequation(t,w)$.
By construction, the NFA $\alts$, that recognizes the set
$\set{ 
            x_0\, \sample(t_1)\, x_1 \cdots \sample(t_k)\, x_k \ |\ 
                x_0 \# x_1 \cdots \# x_k \in \solutions(f)
         }$,
recognizes exactly these words of the form 
$\morph(f,0) \sample(t_1) \morph(f,1) \cdots \sample(t_k) \morph(f,k)$.

We then check whether $\alts$ recognizes exactly one word $w$,
in which case, we know $\utrans(t) = w$, and we do not need to invoke 
the oracle.
Otherwise, there are several alternatives which are consistent with the
previous outputs provided by the user, and we cannot infer $\utrans(t)$.
We thus invoke the oracle (the user) to obtain $\utrans(t)$.

Before proving the theorem corresponding to Algorithm~\ref{algo3},
we give a lemma on words which we use extensively in the theorem.

\begin{lem}
\label{lemma:commute}
Let $u,v,w \in \Output^*$.
If $uv = vu$ and $uw = wu$ and $u \neq \varepsilon$,
then $vw = wv$.
\end{lem}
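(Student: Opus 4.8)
The plan is to reduce the statement to the classical fact that two words commute if and only if they are powers of a common word, funneling everything through the primitive root of $u$. First I would dispose of the degenerate cases: if $v = \varepsilon$ or $w = \varepsilon$, then $vw = wv$ holds trivially, so I may assume $u$, $v$, and $w$ are all nonempty.

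The central tool I would invoke is the commutation lemma from combinatorics on words: for $x, y \in \Output^*$, the equality $xy = yx$ holds if and only if there exist a single word $z \in \Output^*$ and integers $i, j \geq 0$ with $x = z^i$ and $y = z^j$. Combined with uniqueness of the primitive root—every nonempty word can be written uniquely as $\rho^k$ with $\rho$ primitive and $k \geq 1$—this extracts the common structure. Concretely, from $uv = vu$ and $u \neq \varepsilon$ the commutation lemma yields a word $z$ with $u = z^m$ and $v = z^n$; since $u \neq \varepsilon$ forces $z \neq \varepsilon$, the primitive root $\rho$ of $u$ equals that of $z$, so $v$, being a power of $z$, is also a power of $\rho$. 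Applying the same argument to $uw = wu$ gives a word $z'$ with $u$ a power of $z'$ and $w$ a power of $z'$; by uniqueness of the primitive root of $u$, the primitive root of $z'$ is again $\rho$, whence $w$ is a power of $\rho$ as well. Writing $v = \rho^s$ and $w = \rho^t$, the two commute outright: $vw = \rho^{s+t} = \rho^{t+s} = wv$, which is the desired conclusion.

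The main obstacle is the commutation lemma itself, the only nonelementary ingredient; everything downstream is bookkeeping with primitive roots. I would either cite it as standard or prove it inline by induction on $|x| + |y|$. In the inductive step, assuming without loss of generality $|x| \leq |y|$, the equality $xy = yx$ makes $x$ a prefix of $y$, so $y = x y'$; substituting and cancelling $x$ on the left gives $x y' = y' x$ with strictly smaller total length, and the induction hypothesis supplies a common root $z$ of $x$ and $y'$, hence of $x$ and $y = x y'$. The base case $x = \varepsilon$ is handled by taking $z = y$. Throughout I would take care to confirm $\rho \neq \varepsilon$ so that the uniqueness-of-primitive-root step is legitimate, which is exactly where the hypothesis $u \neq \varepsilon$ is essential.
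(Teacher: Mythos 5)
Your proof is correct and follows essentially the same route as the paper: both arguments reduce the claim to the fact that every word commuting with the non-empty word $u$ is a power of the primitive root of $u$, so that $v$ and $w$ are powers of a common word and hence commute. The only difference is presentational --- the paper cites Lothaire's Proposition 1.3.2 (the centralizer of a non-empty word is the monoid generated by a single primitive word) as a black box, whereas you reassemble that statement from the two-word commutation lemma plus uniqueness of primitive roots, optionally proving the former by induction.
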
 

\begin{proof}
A word $p \in \Output^*$ is \emph{primitive} if there does not
exist $r \in \Output^*$, $i > 1$ such that $p = r^i$.
Proposition 1.3.2 of~\cite{lothaire_combinatorics_1997} states that the set of
words commuting with a non-empty word $u$ is a monoid generated by a single 
primitive word $p$. Since $v$ and $w$ both commute with $u$, there exist $i$ 
and $j$ such that $v = p^i$ and $w = p^j$, thus $vw = wv = p^{i+j}$.
\end{proof}

The difficult part of Theorem~\ref{theorem:linearinteraction} is to show 
the number of times the oracle $\utrans$ is invoked is $O(|\domain|)$.
We prove this by assuming by contradiction that the 
number of times $\utrans$ is invoked is strictly greater than 
$3|\domain| + |\State|$ times. We prove this entails there are four 
trees which are nearly identical and for which our algorithm invokes the oracle
(the four trees have the same root, and differ only for one child).
Then, by a close analysis of the word equations corresponding to these four 
terms, we obtain a contradiction by proving our algorithm must have been able 
to infer the output for at least one of those terms.

\begin{restatable}[Correctness and running time of Algorithm~\ref{algo3}]{thm}{linearinteraction}
\label{theorem:linearinteraction}
Let $\utrans$ be a \stw{} (used as an oracle), and 
$\domain = (\Sigma,\State,I,\delta)$ a domain such that $I = Q$.
The output $\trans$ of Algorithm~\ref{algo3} is 
a \stw{} $\trans$ such that
$\restrict{\semantics{\trans}}{\domain} = 
    \restrict{\semantics{\utrans}}{\domain}$.
    
Algorithm~\ref{algo3} ends in time polynomial in $|\domain|$ and the number of 
times it invokes the oracle $\utrans$ is in $O(|\domain|)$.
\end{restatable}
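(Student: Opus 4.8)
The plan is to establish the three assertions separately---functional correctness, polynomial running time, and the $O(|\domain|)$ bound on oracle calls---with the last being the crux.

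For correctness, I would first record the loop invariant stated above: after each iteration, for every $\ranked{f}{k} \in \Sigma$ the automaton $\solutions(f)$ equals $\bigcap_{(t,w)\in\sample}\getautomaton{t}{w}$, so it recognises exactly the tuples $\morph(f,0)\#\cdots\#\morph(f,k)$ arising from morphisms $\morph$ satisfying $\bigwedge_{(t,w)\in\sample}\getrequation(t,w,\sample)$. The key point is that every value written into $\sample$ is the \emph{true} value $\utrans(t)$: this holds by definition when the oracle is called, and when the value is instead inferred from the single word recognised by $\alts$, it is still correct because $\getmorph{\utrans}$ satisfies all equations accumulated so far (every previously recorded output is a genuine output of $\utrans$), so $\utrans(t)$ is among the words recognised by $\alts$; when $\alts$ has a unique word, that word must therefore be $\utrans(t)$. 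Consequently $\sample$ agrees with $\utrans$ on the whole tree test set $T$. Since $I = Q$, Lemma~\ref{lemma:trees} guarantees that $T$ is a tree test set for $\domain$ (and is closed under subtree), so the \stw{} $\trans$ produced by the final call to Algorithm~\ref{algo} agrees with $\utrans$ on $T$, whence, by the defining property of a tree test set, $\restrict{\semantics{\trans}}{\domain} = \restrict{\semantics{\utrans}}{\domain}$.

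For the polynomial running time, I would note that $T$ has $O(|\domain|^3)$ elements (Lemma~\ref{lemma:trees}), so the while loop executes a polynomial number of times. Each iteration only builds and intersects automata of the sequential shape analysed in Lemma~\ref{lemma:rp}; as shown there, intersecting two such automata never blows up, so each $\solutions(f)$ and each $\alts$ stays of polynomial size and the test for ``exactly one word'' is polynomial. Because $I = Q$, the tree test set is closed under subtree, so the concluding invocation of Algorithm~\ref{algo} also runs in polynomial time.

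The hard part is bounding the number of oracle calls by $O(|\domain|)$. Following the contradiction sketched above, I would suppose the oracle is invoked strictly more than $3|\domain| + |\State|$ times. Charging each oracle call to the pair $(f,j)$ consisting of the root $f$ of the queried tree and the child position $j$ responsible for the residual ambiguity, a counting argument over the tree test set (at most three genuinely ambiguous queries per such slot, plus a base contribution of one per state) shows that exceeding this bound forces four queried trees $t^{(1)},\dots,t^{(4)}$ that share the same root $\ranked{f}{k}$ and differ only in their $j$-th child. Their equations $\getrequation(t^{(i)},\cdot,\sample)$ then share the entire variable block $(f,0)\cdots(f,k)$ and all child-constants except the $j$-th, which takes four distinct values. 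Analysing these four sequential equations, I would isolate the two variables $(f,j-1)$ and $(f,j)$ flanking the varying child and show that the first three equations force the surrounding constants and child outputs to commute; Lemma~\ref{lemma:commute} then makes them all powers of a single primitive word, which rigidifies the only possible ``sliding'' of the boundary around position $j$. This collapses the ambiguity for $t^{(4)}$, so the automaton $\alts$ built for $t^{(4)}$ recognises a single word and the algorithm would have inferred $\utrans(t^{(4)})$ without querying, contradicting that $t^{(4)}$ triggered an oracle call. Hence at most $3|\domain| + |\State| = O(|\domain|)$ oracle calls occur. The main obstacle is precisely this last combinatorial step: correctly setting up the charging and pigeonhole that produce four trees differing in a single child, and then carrying out the word-combinatorial analysis that converts ``three ambiguous queries'' into a commutation hypothesis to which Lemma~\ref{lemma:commute} applies.
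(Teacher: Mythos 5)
Your proposal follows essentially the same route as the paper's proof: correctness via the loop invariant on $\solutions(f)$ together with the tree-test-set property, polynomial running time via the non-blow-up of intersections from Lemma~\ref{lemma:rp}, and the $O(|\domain|)$ bound on oracle calls via a pigeonhole that produces four queried trees agreeing everywhere except in one child, followed by a word-combinatorial case analysis culminating in Lemma~\ref{lemma:commute}. Two points where the paper is more careful than your sketch: the pigeonhole is taken over the first non-$\varepsilon$ rule of the derivation in $\Lin{\gram}$ (equivalently, a domain transition together with the distinguished child position), of which there are at most $|\domain|$, rather than over pairs (root symbol, position) --- this finer granularity is what guarantees that the four colliding trees agree on \emph{all} children other than the $j$-th, since the remaining children are the minimal trees fixed by that rule; and the redundant query is not always the fourth one --- in the case where the three child outputs form a chain $x_1, x_1x_2, x_1x_2x_3$ with $x_2x_3 = x_3x_2$, the paper shows it is the \emph{third} query whose ambiguity collapses.
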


\begin{proof}
(Sketch) 
The correctness and the polynomial running time of Algorithm~\ref{algo3} can
be proved similarly to Algorithm~\ref{algo2}.
Note that we can check whether the NFA $A$ recognizes exactly one word. For that, we obtain a word $w$ that $A$ recognizes, and we intersect $A$ with the complement of an automaton recognizing $w$.

The crucial part of Algorithm~\ref{algo3} is that it invokes the oracle 
$\utrans$ at most $O(|\domain|)$ times. More precisely, we show that 
Algorithm~\ref{algo3} invokes $\utrans$ at most 
$|Q| + 3 \sum_{(q,\ranked{f}{k},(q_1,\dots,q_k) \in \delta} 1 + k$ times,
which is $|Q| + 3|\domain|$, and in $O(|\domain|)$.

The main goal is to prove that 
for any trees four trees of the same root ($t_a$, $t_b$, $t_c$, $t_d$) 
differing from only one their $i$th subtree 
(respectively $t_i^a$, $t_i^b$, $t_i^c$, $t_i^d$),
if we know the output of $\utrans$ on all subtrees of 
$t_a$, $t_b$, $t_c$, $t_d$, then we can infer the output for at least 
one of $t_a$, $t_b$, $t_c$, $t_d$ based on the previous outputs.
Let $x_i^l = \utrans(t_i^l)$ be the already known outputs of the sub-trees and $w_l = \utrans(t_l)$ the outputs to ask to the user, for $l \in \set{a,b,c,d}$. We obtain the following equations where $u, v$ represent the parts which do not change:
$$
w_a = u x_i^a v \quad
w_b = u x_i^b v \quad
w_c = u x_i^c v \quad
w_d = u x_i^d v
$$
We prove by contradiction that we could not have asked the user for all $w_l$ for $l \in \set{a,b,c,d}$, because at least one of the answer can be inferred
from the previous ones.
Here we illustrate two representative cases of the proof.

(1) One case is when $x_i^a$ and $x_i^b$ are neither prefix nor suffix of each 
other. By observing where $w_a$ and $w_b$ differ, we can recover $u$ and $v$,
and the algorithm could have inferred $w_c$ and $w_d$.

(2) Another case is when $x_i^a$, $x_i^b$, and $x_i^c$ are respectively of 
the form $x_1$, $x_1 x_2$ and $x_1 x_2 x_3$ for some 
$x_1,x_2,x_3 \in \Output^*$ with $x_2x_3 = x_3x_2$, and 
$x_2 \neq \varepsilon$, $x_3 \neq \varepsilon$. 
Since we asked the output $w_a$, $w_b$ and $w_c$, 
then after the first two questions, the values of $u$ and $v$ could 
not be determined. 
In particular, this means that there are some $u, v$ and $u', v'$ such that:  
$u x_1 v = u' x_1 v'$ and 
$u x_1 x_2 v = u' x_1 x_2 v'$ but 
$u x_1 x_2 x_3 v \neq u' x_1 x_2 x_3 v'$. 

By assuming without loss of generality that $u = u' u''$ and $v' = v''v$, 
we obtain that 
$u'' x_1 = x_1 v''$ and 
$u'' x_1 x_2 = x_1 x_2 v''$, 
thus $v'' x_2 = x_2 v''$, 
and then $x_2$ commutes with $v''$.
Since $x_2$ also commutes with $x_3$, we deduce $v''$ commutes with $x_3$, and 
then $u'' x_1 x_2 x_3 = x_1 x_2 x_3 v''$, which is a contradiction.
\end{proof}

\section{Tree with Values}\label{sec:treewithvalues}
\addtocounter{subsection}{1}

Until now, we have considered a set of trees $\TreeA{\Sigma}$ which contained only other trees as subtrees, and with a test set of size $O(n^3)$, although we have a linear learning time if we have interactivity.
However, in practice, data structures such as XML are usually trees containing \textit{values}. Values are typically of type \symbolstring or \symbolint{}, and may be used instead of subtrees.
For convenience, we will suppose that we only have \symbolstring{} elements, and that \symbolstring{} elements are rendered \textit{raw}.
We will demonstrate how we can directly obtain a test set of size $O(n)$.

\newcommand{\dstring}[1]{\ensuremath{v_{#1}}}
Formally, let us add a special symbol $\dstring{} \in \Sigma$, of arity 0, which has another version which can have a parameter.
For each string $s\in\Output^*$ we can thus define the symbol $\dstring{s}$ and extend the notion of trees and domains as follows.

For a set of trees $\TreeA{}$, we define the extended set $\TreeA{}'$ by:
$$\TreeA{}' = \{t'\ |\ \exists t\in \TreeA{}, t'\text{ is obtained from }t\text{ by replacing each }\dstring{}\text{ by a }\dstring{s}\text{ for some }s\in\Output^* \}$$

Note that given a domain $\domain$ and a height $h$, there is an infinite number 
of trees of height $h$ in $\domain'$, while only a finite number in $\domain$.
Fortunately, thanks to the semantics of the transducers on \dstring{s} we 
define below, finding the tree test sets is easier in this setting.

For any transducer $\trans$ we extend the definition of $\semantics{\trans}$ to $\TreeA{\Sigma}'$ by defining $\semantics{\trans}(\dstring{s}) = s$.
We naturally extend the definition of tree test set of an extended domain $\domain'$ to be a set $T' \subset \domain'$ such that for all \stw{s} $\trans_1$ and $\trans_2$, 
$\restrict{\semantics{\trans_1}}{T'} = 
  \restrict{\semantics{\trans_2}}{T'}$ implies 
$\restrict{\semantics{\trans_1}}{\domain'} = 
  \restrict{\semantics{\trans_2}}{\domain'}$.
After proving the following lemma, we will state and prove the theorem on linear test sets.

\begin{lem}\label{lemma:annihilation}
For $a, b, x, y \in \Output^*$, $c \neq d$ in $\Output$, if $a c x = b c y$ and $a d x = b d y$, then $a = b$.
\end{lem}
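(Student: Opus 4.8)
The plan is to reason about the positions where the words on the two sides can differ, exploiting the fact that $c \neq d$ are single letters. Write $n = |a|$ and $m = |b|$, and without loss of generality assume $n \leq m$. From $acx = bcy$ the two sides are equal as words, so in particular the prefix of length $m$ on each side must coincide; the same holds for $adx = bdy$. The key observation is that $a$ sits at the very front of both equations, so the letter immediately following $a$ is $c$ in the first equation and $d$ in the second, whereas on the right-hand sides $b$ is followed by $c$ and $d$ respectively as well. I would compare these two equations position by position to force $n = m$, which then immediately gives $a = b$ since both are the common prefix of length $n$ of the (equal) left-hand sides.

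The main step is to rule out $n < m$. Suppose $n < m$. In the equation $acx = bcy$, the letter at position $n+1$ (counting from $1$) of the left-hand side is $c$, the first letter of the inserted $c$. On the right-hand side $bcy$, since $|b| = m > n$, the letter at position $n+1$ still lies inside the block $b$; call this letter $\beta$, the $(n+1)$-th letter of $b$. Hence $\beta = c$. Applying the identical argument to $adx = bdy$: the left-hand side has $d$ at position $n+1$, while the right-hand side again has the $(n+1)$-th letter of $b$, namely the same $\beta$, at that position. Hence $\beta = d$. This yields $c = \beta = d$, contradicting the hypothesis $c \neq d$.

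Therefore $n = m$, i.e.\ $|a| = |b|$. Since $acx = bcy$ holds as an equality of words and $a$, $b$ are both prefixes of this common word of the same length $n$, they must be identical, so $a = b$, which is exactly what we wanted.

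The only genuine obstacle is bookkeeping the indices carefully so that the comparison at position $n+1$ is valid: one must confirm that position $n+1$ really falls inside the $b$-block on the right (which holds precisely because $m > n$) and inside the single-letter separator on the left (which holds because it is the letter right after $a$). The role of $c \neq d$ being \emph{letters} of $\Output$ rather than arbitrary words is essential here, as it is what lets us read off a single determined symbol at a fixed position and derive the contradiction $c = d$; the same statement would fail if $c$ and $d$ were words of length greater than one.
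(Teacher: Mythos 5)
Your proof is correct and is essentially the paper's argument in positional form: the paper writes the longer of $a,b$ as the shorter one times a suffix $k$ and observes that a nonempty $k$ would have to begin with both $c$ and $d$, which is exactly your comparison of the single letter at position $\min(|a|,|b|)+1$ in the two equations. No gap; the WLOG is justified by the symmetry of the hypotheses under swapping $(a,x)$ with $(b,y)$.
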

\begin{proof}
Either $a$ or $b$ is a prefix of the other. Let us suppose that $a = b k$ for some suffix $k \in \Output^*$. It follows that $k c x = c y$ and $k d x = d y$. If $k$ is not empty, then $k$ starts with $c$ and with $d$, which is not possible. Hence $k$ is empty and $a = b$.
\end{proof}

\newcommand{\ta}{\ensuremath{\text{``\#''}}}
\newcommand{\tb}{\ensuremath{\text{``?''}}}
\begin{restatable}{thm}{theoremwithvalues}
\label{theorem:withvalues}
If the domain $\domain = (\Sigma, Q, I, \delta)$ is such that for every $f \in \Sigma$ of arity $k > 0$, there exist trees in $t_1,\ldots,t_k \in \domain$ such that $f(t_1,\ldots,t_k) \in \domain$ and each $t_i$ contains at least one $\dstring{}$,
then there exists a tree test set of $\domain'$ of linear size $O(|\Sigma|\cdot A)$ where $A$ is the maximal arity of a symbol of $\Sigma$.
\end{restatable}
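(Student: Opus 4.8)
The plan is to reduce equivalence on $\domain'$ to the agreement of the finitely many constants of the two transducers, and then to show that value leaves, together with Lemma~\ref{lemma:annihilation}, let us pin down each such constant with a constant number of test trees. First I would establish the reduction. For a transducer $\trans$ and a tree $t'\in\domain'$, reading $t'$ from left to right the output $\semantics{\trans}(t')$ is a fixed interleaving $C_0\,s_1\,C_1\cdots s_m\,C_m$ of the raw value-leaf strings $s_1,\dots,s_m$ (recall $\semantics{\trans}(\dstring{s})=s$) with constant blocks $C_0,\dots,C_m$, where each $C_j$ is a concatenation of constants $\getmorph{\trans}(g,\ell)$ determined only by the skeleton of $t'$. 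Hence, if $\getmorph{\trans_1}(g,\ell)=\getmorph{\trans_2}(g,\ell)$ for every pair $(g,\ell)$ occurring in some tree of $\domain$, then $\restrict{\semantics{\trans_1}}{\domain'}=\restrict{\semantics{\trans_2}}{\domain'}$. Conversely, if some occurring constant differs, I would embed $g$ in a tree of $\domain$ and fill the surrounding value leaves with a fresh separator to expose the difference in the output; thus equivalence on $\domain'$ is exactly agreement of all occurring constants.

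Next I would build the test set $T'$. For every symbol $f$ of arity $k>0$, the hypothesis provides a witness $f(t_1,\dots,t_k)\in\domain$ in which each $t_i$ carries a value leaf $\ell_i$; I distinguish one such leaf per subtree. For each boundary $i$ I add to $T'$ the two trees obtained from this witness by setting $\ell_i$ to the marker $\ta$, resp.\ $\tb$, and all other value leaves to $\varepsilon$, together with the all-$\varepsilon$ witness; arity-$0$ symbols are handled by analogous single trees. This is $O(k)$ trees per symbol $f$, hence $O(|\Sigma|\cdot A)$ trees in total, which gives the claimed size bound. The hypothesis that every $t_i$ contains a $\dstring{}$ is used here precisely to guarantee that a marker can be placed inside every subtree.

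Now I would show that agreement on $T'$ forces all occurring constants to agree. Fix $f$ and its witness. Writing $\semantics{\trans}(t_i)=P_i\,s_i\,Q_i$ when only $\ell_i$ varies and $r_i$ for $\semantics{\trans}(t_i)$ with all its leaves empty, the two marked witnesses at boundary $i$ have outputs of the shape $A_i\,\ta\,B_i$ and $A_i\,\tb\,B_i$, where $A_i=\getmorph{\trans}(f,0)\,r_1\cdots r_{i-1}\,\getmorph{\trans}(f,i-1)\,P_i$ and $B_i$ is the corresponding suffix. If $\trans_1$ and $\trans_2$ agree on both of these trees, Lemma~\ref{lemma:annihilation} (applied with $c=\ta\neq\tb=d$) yields that the left parts $A_i$ agree, and left-cancellation then forces the suffixes $B_i$ to agree as well. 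Comparing the now-determined prefixes $A_i$ and $A_{i+1}$ shows that each block $Q_i\,\getmorph{\trans}(f,i)\,P_{i+1}$ agrees across the two transducers.

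The hard part will be the passage from these blocks to the individual constants $\getmorph{\trans}(f,i)$: each block is contaminated on both sides by the boundary outputs $Q_i,P_{i+1}$ of the neighbouring witness subtrees, which are themselves concatenations of constants of other symbols, and possibly of $f$ under recursion. I would resolve this by taking $T'$ closed under subtree, so that the subtrees $t_i$ of each witness are themselves probed, and then arguing by a minimal-counterexample: let $t'$ be a smallest tree of $\domain'$ on which $\trans_1$ and $\trans_2$ disagree. A value leaf and (by the construction above) every arity-$0$ symbol already agree, so $t'=f(t'_1,\dots,t'_k)$ is an internal node; by minimality all $\semantics{\trans_1}(t'_j)=\semantics{\trans_2}(t'_j)$, so the disagreement localizes to a single constant $\getmorph{\trans}(f,i)$. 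Since the subtree outputs neighbouring that position agree, $Q_i$ and $P_{i+1}$ are equal for the two transducers, and left- and right-cancellation in the block equality secured by $f$'s witness isolate $\getmorph{\trans}(f,i)$, contradicting the assumed disagreement; Lemma~\ref{lemma:commute} is available should the cancellation leave commuting words to reconcile. This establishes that agreement on $T'$ implies agreement of all occurring constants, and hence, by the first paragraph, equivalence on $\domain'$, completing the proof.
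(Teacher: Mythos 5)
Your construction and first two steps match the paper's: per symbol $f^{(k)}$ you take the hypothesis's witness, plant the two distinct single-character markers $\ta$ and $\tb$ at one value leaf per child position, get $O(|\Sigma|\cdot A)$ trees, and use Lemma~\ref{lemma:annihilation} to split each output equality at the marker. The divergence, and the gap, is in the step you yourself flag as ``the hard part'': isolating $\getmorph{\trans}(f,i)$ from the contaminated block $Q_i\,\getmorph{\trans}(f,i)\,P_{i+1}$. Your resolution via a minimal counterexample does not deliver the equalities you need: minimality of $t'=f(t'_1,\dots,t'_k)$ gives $\semantics{\trans_1}(t'_j)=\semantics{\trans_2}(t'_j)$ for the subtrees of the \emph{counterexample}, whereas $P_i$ and $Q_i$ are pieces of the outputs of the \emph{witness} subtrees $t_i$ from the hypothesis, split at the marker position --- entirely different trees. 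The sentence ``since the subtree outputs neighbouring that position agree, $Q_i$ and $P_{i+1}$ are equal for the two transducers'' is therefore a non sequitur, and ``the disagreement localizes to a single constant'' is also not justified (several constants may differ simultaneously). Separately, taking $T'$ literally closed under subtree adds a number of trees proportional to the sizes of the witness trees, which are not bounded by $|\Sigma|\cdot A$, so that move endangers the claimed linear bound.

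The paper's proof needs neither induction nor a minimal counterexample, because it proves the stronger statement that $\trans_1$ and $\trans_2$ have \emph{identical constants}. The order of operations is the key: first probe the marked witness subtrees themselves (the trees $\treec{f}{i}{\ta}$ and $\treec{f}{i}{\tb}$, which is the only ``closure'' needed and costs just $O(|\Sigma|\cdot A)$ further trees), and apply Lemma~\ref{lemma:annihilation} there to conclude $w_i=w'_i$ and $z_i=z'_i$ --- i.e.\ your $P_i=P'_i$ and $Q_i=Q'_i$ --- \emph{before} touching the full trees. Then the full-tree equality reads $u_0 w_1 \ta z_1 u_1 \cdots = u'_0 w_1 \ta z_1 u'_1\cdots$ with the same $w_i,z_i$ on both sides; annihilation at the first marker gives $u_0 w_1=u'_0 w_1$, hence $u_0=u'_0$, and cancelling the common prefix lets you peel off $u_1,\dots,u_k$ one by one. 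No commutation argument (Lemma~\ref{lemma:commute}) is needed anywhere. Your ingredients suffice to run exactly this argument; you just need to reorder them and drop the minimal-counterexample scaffolding.
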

\newcommand{\exts}{\ensuremath{\text{T}'}}
\newcommand{\treec}[3]{\ensuremath{{#1}^{#2}\!\!\left[{#3}\right]}}
\begin{proof}
(Intuition) 
Using the trees provided in the theorem's hypothesis, we build a
linear set of trees of $\domain'$ where the $\dstring{}$ nodes are replaced 
successively by two different symbols $\dstring{\ta}$ and $\dstring{\tb}$. 
Then, we prove that any two \stw{s} which are equal on this set of trees, 
are syntactically equal. 
\end{proof}

\section{Implementation}\label{sec:implementation}

Our tool (walkthrough in Section~\ref{section:walkthrough}) is open-source
and available at \url{https://github.com/epfl-lara/prosy}.
It takes as input an ADT represented by case class definitions written in a 
Scala-like syntax, and outputs a recursive printer for this ADT.
For the automata constructions of Algorithm~\ref{algo3}, we used the 
{\tt brics} Java library\footnote{http://www.brics.dk/automaton/}.

In the walkthrough, notice that our tool gives propositions to the user so 
that the user does not have to enter the answers manually.
The user may choose how many propositions are to be displayed (default is 9).
To obtain these propositions, we use the following procedure.
Remember that for each tree $t$ for which we need to obtain the output, 
Algorithm~\ref{algo3} builds an NFA $\alts$ that recognizes the set of all 
possible outputs for $t$ (see Section~\ref{sec:learningstwinteractively}).
We check for the existence of an accepted word $w_0$ in $\alts$, and
compute the intersection $A_1$ between $\alts$ and an automaton recognizing 
all words except $w_0$. We then have two cases. Either $A_1$ is empty, and 
therefore we know the output for tree $t$ is $w_0$. In that case, we do not 
need to interact with the user, and can continue on to the next tree. 
Otherwise, $A_1$ recognizes some word $w_1 \neq w_0$, which we display as a 
proposition to the user (alongside $w_0$). We then obtain $A_2$ as the 
intersection between $A$ and an automaton recognizing all words except 
$w_0$ and $w_1$. We continue this procedure until we have $9$ propositions
(or whichever number the user entered), or when the intersected automaton 
becomes empty.

Concerning support for the String data type, 
we use ideas from Section~\ref{sec:treewithvalues} and reused our 
code from Algorithm~\ref{algo3} to infer outputs.
Technically, we replace the String data type with an abstract class with 
two case classes, \textsf{foo}, and \textsf{bar}, that must be printed as 
``foo'' and ``bar'' respectively.
We then obtain an ADT without Strings, on which we apply the 
implementation of Algorithm~\ref{algo3} described above. We handle the 
Int and Boolean data types similarly, each with two different values \textit{which are not prefix of each other}~(we refer to the proof of theorem~\ref{theorem:linearinteraction}).
\section{Evaluation}\label{sec:evaluation}
Although this work is mostly theoretical, we now depict through some benchmarks 
how many and which kind of questions our system is able to ask 
(Figure \ref{fig:evaluation}).

\begin{figure}
\begin{tabular}{lcccccc}
& Test set & \multicolumn{5}{c}{The output was} \\
\cline{3-7}
& size & inferred & asked & \multicolumn{3}{c}{asked with\ldots} \\
\cline{5-7}
Name & \textit{Total} & \textit{total} & \textit{total} & nothing & 
a hint & suggestions \\
\hline\hline
Grammar (Sec. \ref{sec:walkthrough}) &
                            116   & 102   & 14  & 6   & 6   & 2 \\
\hline
Html tags (Ex. \ref{example:transducer}, \ref{example:equations}, \ref{example:ambiguous})  &  35   & 28    & 7   & 4   & 2   & 1 \\
Html tags+attributes      &  60   & 52    & 8   & 2   & 4   & 2 \\
Html xml+attributes       &  193  & 179   & 14  & 5   & 3   & 6 \\
\hline
Binary (01001x)   &  15   & 12    & 3   & 1   & 2   & 0 \\
Binary (11x)      &  15   & 12    & 3   & 3   & 0   & 0 \\
Binary (ababx)    &  15   & 11    & 4   & 3   & 0   & 1 \\
Binary (01001)    &  15   & 10    & 5   & 3   & 0   & 2 \\
Binary (aabababbab)& 15   & 9     & 6   & 3   & 0   & 3 \\
\hline
$A_x(B_y(F_z))$ 1         & 3     & 0     & 3   & 1   & 2   & 0 \\
$A_x(B_y(F_z))$ 2         & 14    & 8     & 6   & 3   & 3   & 0 \\
$A_x(B_y(F_z))$ 4         & 84    & 67    & 17  & 8   & 4   & 5 \\
$A_x(B_y(F_z))$ 8         & 584   & 552   & 32  & 19  & 5   & 8 \\
$A_x(B_y(F_z))$ 16        & 4368  & 4305  & 63  & 32  & 16  & 15
\end{tabular}
\caption{Comparison of the number of questions asked for different benchmarks.\label{fig:evaluation}}
\end{figure}

The first column is the name of the benchmark. The first two appear in Section \ref{sec:walkthrough} and in the examples. The third is a variation of the second where we add attributes as well, rendered ``\^{}.foo := "bar"''. The fourth is the same but rendered in XML instead of tags. Note that because we do not support duplication, we need to have a finite number of tags for XML.

The four rows ``binary'' illustrate how the number and type of questions may vary only depending on the user's answers.
We represent binary numbers as either \texttt{Empty} or \texttt{Zero(x)} or \texttt{One(x)} where \texttt{x} is a binary number.
We put in parenthesis what a user willing to print \texttt{Zero(One(Zero(Zero(One(Empty)))))} would have in mind.
The second and the third ``discard'' \texttt{Zero} when printing.
The fourth one prints \texttt{Empty} as empty, \texttt{Zero(x)} as \texttt{\{x\}ab} and \texttt{One(x)} as \texttt{a\{x\}b},
which result in an ambiguity not resolved until asking a 3-digit number.

The last five rows of Figure~\ref{fig:evaluation} also illustrate how the number of asked questions grows linearly, whereas the number of elements in the test set grows cubically. These five rows represent a set of classes of type A taking as argument a class of type B, which themselves take as argument a class of type F. We report on the statistics by varying the number of concrete classes between 1, 2, 4, 8 and 16 (see proof of Lemma~\ref{lemma:lowerbound})

The second column is the size of the test set. For the last five rows, 
the test set contains a cubic number of elements.
The third column is the number of answers our tool was able to ``infer'' based on 
previously ``asked'' questions, whose total number is in the fourth column. 
The fourth column plus the third one thus equal the second one.

Columns five, six and seven decompose the fourth column into the questions which 
were either asked 
without any indication, 
or with a hint of type ``[...]foo[...]'' (because the arguments were known),
or with explicit suggestions where the user just had to enter a number for the 
choice (see Section~\ref{sec:implementation}).
\section{Related Work}

Our approach of proactively learning transducers by example, or tree-to-string programs, can be viewed as a particular case of Programming-by-Example.
Programming-by-example, also named inductive programming~\cite{polozov_flashmeta:_2015} or test-driven synthesis~\cite{perelman_test-driven_2014}, is gaining more and more attention, notably thanks to Flash Fill in Excel 2013~\cite{gulwani_synthesis_2012}. Subsequent work demonstrated that these techniques could widely be applicable not only to strings, but when extracting documents~\cite{le_flashextract:_2014},
normalizing text~\cite{kini_flashnormalize:_????} and 
number transformations~\cite{singh_synthesizing_2012}.
However, most state-of-the-art programming-by-example techniques rely on the fact that examples are unambiguous and/or that the example provider can check the validity of the final program~\cite{angluin_learning_1987}~\cite{yessenov_colorful_2013}~\cite{feser_synthesizing_2015}.
The scope of their algorithms may be larger but they do not guarantee formal result such as polynomial time or non-ambiguity, and often require the user to come up with the examples by himself.
More generally, synthesizing recursive functions has recently gained an interest among computer scientists from repairing fragments~\cite{koukoutos_update_2016} to very precise types~\cite{polikarpova_program_2016}, even by formalizing programming-by-example ~\cite{frankle_example_2016}.

Recently, research has pointed out that solving ambiguities is a key to make programming by example accessible, trustful and reduce the number of errors~\cite{mayer_user_2015}\cite{hottelier_programming_2014}. The power of interaction is already well known in more statistical approaches, e.g. machine learning~\cite{zhang_active_2015},
although recent machine-learning based formatting techniques could benefit from more interaction, because they acknowledge some anomalies~\cite{parr_towards_2016}.
In%
~\cite{gvero_interactive_2011}
and even%
~\cite{gvero_complete_2013}, the authors solve ambiguities by presenting different code snippets, obtained from synthesizing expressions of an expected type and from other sources of information. Nonetheless, the user has to choose between hard-to-read \textit{code snippets}. Instead of asking which transducer is correct, we ask for what is the right output.
Asking sub-examples at run-time proved to be a successful strategy when synthesizing recursive functions~\cite{albarghouthi_recursive_2013}.
To deal with ambiguous samples, they developed a \textsc{Saturate} rule to ask for inputs covering the inferred program.
In our case, however,
such coverage rule still yield the ambiguity raised in example~\ref{example:ambiguous}, leaving the chance of finding the right program to heuristics.

Researchers have investigated
fundamental properties of tree-to-string or tree-to-word transducers \cite{alur_expressiveness_2010}, including expressiveness of even more complex classes than we consider~\cite{alur_streaming_2012}, but none of them proposed a practical learning algorithm for such transducers.
The situation is analogous for Macro Tree Transducers~\cite{bahr_programming_2013}~\cite{engelfriet_output_2002}.
Lemay~\cite{lemay_learning_2010} explores the synthesis of top-down tree-to-tree transducers using an algorithm similar to $L^*$ for automata~\cite{angluin_learning_1987} and tree automata~\cite{besombes_learning_2004}.
These learning algorithms require the user to be in possession of a set of examples that uniquely defines the top-down tree transducer.
We instead are able to incrementally ask for examples which resolve ambiguities, although our transducers are single-state.
There are also probabilistic tree-to-string transducers~\cite{graehl_training_2004}, but they require the use of a corpus and are not adapted to synthesizing small-size code portions with a few examples.

A Gold-style learning algorithm~\cite{laurence_learning_2014,laurence_phd_2014,lemay_learning_2010}
was created for sequential tree-to-string transducers.
It runs in polynomial-time, but has a drawback:
it requires the input/output examples to form a 
\emph{characteristic sample} for the transducer which is 
being learned.
The transducer which is being learned is however not known in advance.
As such, it is not clear in practice how to construct such a characteristic
sample.
When the input/output examples do not form a characteristic sample, 
the algorithm might fail, and the user of the algorithm has no indication
on which input/output examples should be added to obtain a characteristic
sample.

In the case when trees to be printed are programming abstract syntax trees, our work is the dual of the mixfix parsing problem~\cite{jouannaud_programming_1992}. 
Mixfix parsing takes strings to parse and the wrapping constants to print the trees, and produces the shape of the tree for each string.
Our approach requires the shape of the trees and strings of some trees, and produces the wrapping constants to print the trees.

\subsection{Equivalence of top-down tree-to-string transducers}

Since tree test sets uniquely define the behavior of tree-to-string transducers, they can be used for checking tree-to-word transducers equivalence.
Checking equivalence of sequential (order-preserving, non-duplicating)
tree-to-string transducers can already be solved in polynomial 
time~\cite{staworko_printers_2009}, even when they are duplicating, and not necessarily order-preserving~\cite{maneth_xml_2007}.

It was also shown~\cite{helmut_seidl_equivalence_2015}
that checking equivalence of 
deterministic top-down macro tree-to-string 
transducers (duplication is allowed, storing strings in registers
to output them later is allowed) is decidable.
Complexity-wise, this result gives a co-randomized polynomial time algorithm 
for linear (non-duplicating) tree-to-string transducers.
This complexity result was recently improved in \cite{boiret_linear_2016},
where it was proved that checking equivalence of linear 
tree-to-string transducers can be done in polynomial time.

\subsection{Test sets}

The polynomial time algorithms of
\cite{staworko_printers_2009,boiret_linear_2016}
exploit a connection between the problem of checking equivalence 
of sequential top-down tree-to-string transducers and the problem
of checking equivalence of morphisms over context-free 
languages~\cite{staworko_printers_2009}.

This latter problem was shown to be solvable in polynomial 
time~\cite{plandowski_testset_1994,plandowski_testset_1995}
using test sets.
More specifically, 
this work shows that each context-free language $L$ has a (finite) test set 
whose size is $O(n^6)$ (originally ``finite'' in \cite{albert_ehrenfeucht_1985,guba_ehrenfeucht_1986}
and then ``exponential'' in \cite{albert_testset_1982}), where $n$ is the size of the grammar.
They also provide a lower bound on the sizes of the test sets of 
context-free languages, by exposing a family of grammars for which the size of 
the smallest test is $O(n^3)$.

As a result, when checking the equivalence of two morphisms $f$ and $g$
over a context-free language $L$, it is enough to check the equivalence on 
the test set of $L$ whose size is polynomial.
This result translates (as described in \cite{staworko_printers_2009})
to checking equivalence between sequential top-down tree-to-string transducers 
in the following sense. When checking the equivalence of two such 
transducers $P_1$ and $P_2$, it is enough to do so for a finite 
number of trees, which correspond to the test set of a particular context-free 
language. This language can be constructed from $P_1$ and $P_2$ 
in time $|P_1||P_2|$.

\begin{remark}
Theorem~\ref{theorem:cftestset} also helps improve the bound for checking
equivalence of $\stw$ with states, using the known
reduction from equivalence of $\stw$ with states to morphisms
equivalence over a context-free language (reduction similar to 
Lemma~\ref{lemma:stwtomorphism}, 
see \cite{staworko_printers_2009,laurence_phd_2014}).
\end{remark}

\section{Conclusion}

We have presented a synthesis algorithms that can learn from examples tree-to-string functions with the input tree as the only argument. This includes functions such as pretty printers. Crucially, our algorithm can
automatically construct a sufficient finite set of input trees, resulting in an interactive synthesis approach
that in which the user needs to answer only a linear number of questions in the grammar
size. Furthermore, the interaction process driven by our algorithm guarantees that there is no ambiguity: the recursive function of the expected form is unique for a given set of input-output examples.
Moreover, we have analyzed the structure of word equations that the algorithm needs to solve and
shown that they have a special structure allowing them to be solved in deterministic polynomial time, which results
in overall polynomial running time of our synthesizer.
Our results make a case that providing tests for tree-to-string functions is a viable alternative to writing the recursive programs directly, an alternative that is particularly appealing for non-expert users.
\bibliography{references} 

\appendix
\section{Injectivity of $\default{\Sigma}$}

\injective*

\begin{proof}
Assume that $\semantics{\default{\Sigma}}$ is not injective,
and let 
$t = f(t_1,\dots,t_n)$ and 
$t'= f'(t_1',\dots,t_m')$ be two trees with 
$t \neq t'$, such that 
$\default{\Sigma}(t) =
\default{\Sigma}(t')$.
We pick $t$ and $t'$ satisfying those conditions such that
$\default{\Sigma}(t)$ has the smallest possible
length.

By definition of $\default{\Sigma}$, we have
\[
    \default{\Sigma}(t) = (f,0) 
    \default{\Sigma}(t_1) (f,1) \cdots
    \default{\Sigma}(t_n) (f,n)
\]
and
\[
    \default{\Sigma}(t') = (f',0) 
    \default{\Sigma}(t_1') (f',1) \cdots
    \default{\Sigma}(t_m') (f',m).
\]

Since 
$\default{\Sigma}(t) =
\default{\Sigma}(t')$, we deduce
that $(f,0) = (f',0)$ and $f = f'$, meaning that $t$
and $t'$ have the same root.

Thus, $t = f(t_1,\dots,t_n)$, and $t'= f(t_1',\dots,t_n')$.

To conclude, we consider two cases.
If for all $i \in \set{1,\dots,n}$, 
$\default{\Sigma}(t_i) = \default{\Sigma}(t_i')$, 
we have $t_i = t_i'$, as the length of
$\default{\Sigma}(t_i)$ is strictly
smaller than $\default{\Sigma}(t)$.
This ensures that $t = t'$, and we obtain a contradiction.

On the other hand, if there exists 
$i \in \set{1,\dots,n}$,
$\default{\Sigma}(t_i) \neq \default{\Sigma}(t_i')$,
consider the smallest such $i$.
We then have:
\[
    \default{\Sigma}(t) = (f,0) 
    \default{\Sigma}(t_1) (f,1) \cdots
    \default{\Sigma}(t_{i-1}) (f,i-1) 
    \default{\Sigma}(t_i) (f,i) \cdots
    \default{\Sigma}(t_n) (f,n)
\]
and
\[
    \default{\Sigma}(t') = (f,0) 
    \default{\Sigma}(t_1) (f,1) \cdots
    \default{\Sigma}(t_{i-1}) (f,i-1) 
    \default{\Sigma}(t_i') (f,i) \cdots
    \default{\Sigma}(t_n') (f,n).
\]

Since $\default{\Sigma}(t) =
\default{\Sigma}(t')$, and the prefixes
are identical up until $(f,i-1)$, we deduce
\[
\default{\Sigma}(t_i) (f,i) \cdots
\default{\Sigma}(t_n) (f,n) = 
\default{\Sigma}(t_i') (f,i) \cdots
\default{\Sigma}(t_n') (f,n).
\]

We finally consider three subcases, with respect to this
last equation.
\begin{itemize}
\item If $\default{\Sigma}(t_i)$ and 
$\default{\Sigma}(t_i')$ have the same
length, we deduce 
$\default{\Sigma}(t_i) = \default{\Sigma}(t_i')$, contradicting 
our assumption.
\item If $\default{\Sigma}(t_i)$ is strictly
shorter than $\default{\Sigma}(t_i')$, we
deduce that
$\default{\Sigma}(t_i) (f,i)$ is a prefix
of $\default{\Sigma}(t_i')$. This is not 
possible as 
$\default{\Sigma}(t_i')$ must be well 
parenthesized 
if $(f,0)$ is seen as an open 
parenthesis, and $(f,i)$ as a closing parenthesis
(by definition of $\default{\Sigma}$).

\item The case where $\default{\Sigma}(t_i)$
is strictly longer than $\default{\Sigma}(t_i')$
is symmetrical to the previous one.
\end{itemize}

\end{proof}

\section{Proof of NP-completeness}\label{proof:npcomplete}

\npcomplete*

\begin{proof}

In general, we can check for the existence of $\trans$ in $\NP$
using the following idea. 
Every input/output example from the sample gives constraints on the 
constants of $\trans$.
Therefore, to check for the existence of $\trans$, it is sufficient to 
non-deterministically guess constants which are subwords of the given output 
examples.
We can then verify in polynomial-time whether the guessed constants form 
a \stw{} $\trans$ which is consistent with the sample $\sample$.

To prove $\NP$-hardness, we consider a formula $\varphi$, 
instance of the one-in-three positive SAT.
The formula $\varphi$ has no negated variables, and is satisfiable if there 
exists an assignment to the boolean variables such that for each clause 
of $\varphi$, exactly one variable evaluates to true.

Formally, let $\mathbb{X}$ be a set of variables and 
let $\varphi \equiv \cl_1 \land \dots \land \cl_n$
such that for every $i \in \set{1,\dots,n}$,  
$\cl_i \equiv \oneinthree(x^i_1,x^i_2,x^i_3)$
with $x^i_1,x^i_2,x^i_3 \in \mathbb{X}$.

Let $\Sigma = \ranked{\lf}{0} \cup \set{\ranked{x}{1}\ |\ x \in \mathbb{X}}$.
Let $\Output = \set{ a, \# }$.
Then, for every clause $\oneinthree(x^i_1,x^i_2,x^i_3)$,
we define $\sample(x^i_1(x^i_2(x^i_3(\lf)))) = a \#$.
Finally, we define $\sample(\lf) = \#$.

We now prove the following equivalence.
There exists a \stw{} $\trans$ such that for all 
$(t,w) \in \sample$, $\trans(t) = w$ if and only if 
$\varphi$ is satisfiable.

$(\Rightarrow)$
Let $\trans = (\Alphabet, \Output, \delta)$ be a \stw{} such that for all 
$(t,w) \in \sample$, $\trans(t) = w$
By definition of $\sample$, we know $\trans(\lf) = \#$ and
and for all $i \in \set{1,\dots,n}$, 
$\trans(x^i_1(x^i_2(x^i_3(\lf)))) = a \#$.

Moreover, if for all $x \in \Var$, we denote 
$\delta(x) = (\before{x}, \after{x})$, with 
$\before{x}, \after{x} \in \Output^*$.
Then, by definition of $\trans$, we have, for $i \in \set{1,\dots,n}$:
\begin{flalign*}
  \trans(x^i_1(x^i_2(x^i_3(\lf)))) &= 
  \before{x^i_1} 
  \before{x^i_2} 
  \before{x^i_3}
  \trans(\lf)
  \after{x^i_3}
  \after{x^i_2}
  \after{x^i_1} \\&= 
  \before{x^i_1} 
  \before{x^i_2} 
  \before{x^i_3}
  \#
  \after{x^i_3}
  \after{x^i_2}
  \after{x^i_1}
\end{flalign*}
We deduce that 
  $\before{x^i_1} 
  \before{x^i_2} 
  \before{x^i_3} = a$ and 
  $\after{x^i_3}
  \after{x^i_2}
  \after{x^i_1} = \varepsilon$.
  
Thus, exactly one of 
$\before{x^i_1}$, 
$\before{x^i_2}$,
$\before{x^i_3}$ must be equal to $a \in \Output$, 
while the other two must be equal to
$\varepsilon$.
Then, $\varphi$ is satisfiable using the boolean assignment that maps 
a variable $x \in \Var$ to $\top$ if $\before{x} = a$, and 
to $\bot$ if $\before{x} = \varepsilon$.

$(\Leftarrow)$
Conversely, assume there exists a satisfying assignment 
$\mu: \Var \to \set{\bot,\top}$ for $\varphi$.
Then, we define the \stw{} $\trans = (\Alphabet, \Output, \delta)$ where 
$\delta(\lf) = \#$ and for all $x \in \Var$
$\delta(x) = (a,\varepsilon)$ if $\mu(x) = \top$, 
and $\delta(x) = (\varepsilon,\varepsilon)$ if $\mu(x) = \bot$.
We then have $\trans(t) = w$ for all $(t,w) \in \sample$.
\end{proof}
\begin{remark}
The NP-completeness proof of~\cite{laurence_phd_2014} could not apply here, because the transducers are more general. Namely, they are allowed to have multiple states in their setting.
\end{remark}

\section{Solving Sequential Word Equation in Polynomial Time}

\rplemma*

\begin{proof}
By definition of \regular{}, $\form$ can be written as 
$\form_1 \land \dots \land \form_l$ for some $l \in \Nat$,
where for $i \neq j$, $\form_i$ and $\form_j$ do not have variables
in common. We can thus check for satisfiability of $\form$ by checking
satisfiability of each $\form_i$ independently.
Let $\psi$ be one of $\form_i$ for $i \in \set{1,\dots,l}$.

By definition of \regular{}, we know there exists $n \in \Nat$ with
$\psi \equiv \y 1 = w_1\, \land \dots \land\, \y n = w_n$, 
and there exist $k \in \Nat$ and 
$\x 0,\dots,\x k \in \Var$, such that 
for all $i \in \set{1,\dots,n}$, 
$w_i \in \Output^*$, and 
$\restrict{\rp_i}{\Var} = \x 0 \cdots \x k$.

The outline of the proof is the following.
For $i \in \set{1,\dots,n}$, we build an 
acyclic DFA, 
denoted $A_i = \getautomaton{\rp_i}{w_i}$,
that recognizes the set
\begin{flalign*}
\{
  \assignment(\x 0) &\spec \assignment(\x 1) \cdots \spec \assignment(\x k)\ |\ 
  \assignment: \Var \rightarrow \Output^* \land
  \assignment(\rp_i) = w_i
\}
\end{flalign*}
where $\spec$ is a special character we introduce, used as a separator.

Then, there exists an assignment
$\assignment: \Var \rightarrow \Output^*$ such that 
for all $i \in \set{1,\dots,n}$,
$\assignment(\rp_i) = w_i$ if and only if 
$A_1 \cap \dots \cap A_n \neq \emptyset$.
We then show that the emptiness of this intersection can be checked
in polynomial time, due to the particular form of the automata.
(In general, checking the emptiness of the intersection of $n$ automata
is a $\PSPACE$-complete problem, and thus takes exponential time to 
check.)

We now give the formal details of the proof.
Let $i \in \set{1,\dots,n}$, 
and $\rp_i = \x 0 u_1 \x 1 \cdots u_k \x k$.
We define 
$A_i = \getautomaton{\rp_i}{w_i} = (Q_i,q_i,\delta_i)$ as follows:
\begin{itemize}
\item 
    $Q_i = \set{0,\dots,k} \times \set{0,\dots,|\rp_i|}$ is the set of states,
\item 
    $q_i = (0,0)$ is the initial state,
\item 
    for $a \in \set{0,\dots,k}$, $b \in \set{0,\dots,|\rp_i|}$,
    \begin{itemize}
    \item 
        $\delta((a,b), \#) = (a+1,b+|u_{a+1}|)$ \\
        if $a < k$ and 
        $\substring{\rp_i}{b}{|u_{a+1}|} = u_{a+1}$,
    \item $\delta((a,b), \sigma) = (a,b+1)$ if 
        the $(b+1)$th letter of $|\rp_i|$ is $\sigma$.
    \end{itemize}
\end{itemize}

We now prove that the intersection 
$A_1 \cap \dots \cap A_n$ can be represented by 
an automaton which has as many states as the smallest $A_i$.
We first compute the intersection between $A_1$ and $A_2$, and show 
that the resulting automaton can be obtained from $A_1$ by deleting 
transitions (see Figure~\ref{figure:intersectionexample}).

We denote the states of $A_1$ by 
  $P = \set{\mkstate{p}{i}{j}\ |\ i \in \set{0,\dots,k}, j \in \set{0,\dots,|w_1|}}$
 and the states of $A_2$ by 
 $Q = \set{\mkstate{q}{i}{j}\ |\ i \in \set{0,\dots,k}, 
  j \in \set{0,\dots,|w_2|}}$.

Let $\rp_1 = \x 0  u_1  \x 1 \cdots u_k  \x k$,
and $\rp_2 = \x 0  v_1  \x 1 \cdots v_k  \x k$,
where $u_1,\dots,u_k,v_1,\dots,v_k \in \Output^*$

We know that whenever there is a transition from 
a state $\mkstate{p}{i}{j}$ to $\mkstate{p}{i'}{j'}$ in $A_1$ then either:
\begin{itemize}
  \item $i' = i$ and $j' = j+1$ ($\Output$-transitions), or 
  \item $i' = i+1$ and $j' = j+|u_{i+1}|$ ($\spec$-transitions)
\end{itemize}
The same property holds for $A_2$, by replacing $u_{i+1}$ with 
$v_{i+1}$.

We compute the cartesian product $B_2$ of $A_1$ and $A_2$.
The states of $B_2$ are pairs from $P \times Q$.
Consider such a state 
$(\mkstate{p}{i}{j}, \mkstate{q}{i'}{j'})$
which is reachable in $B_2$
from the initial state $(\mkstate{p}{0}{0}, \mkstate{q}{0}{0})$.

First, we can show that $i = i'$.
The only transitions that increase $i$ and $i'$ in $A_1$ and $A_2$
are $\spec$-transitions, and they increase $i$ and $i'$ by $1$.
Thus, in the cartesian product $B_2$, we always have $i = i'$.

Similarly, the following invariant holds for the reachable states 
$(\mkstate{p}{i}{j}, \mkstate{q}{i}{j'})$ of $B_2$:
\[
  j-j' = \sum_{k = 1}^i |u_k| - |v_k|
\]

In particular, this means that each state $p \in P$ can be paired 
with at most one state $q \in Q$ in $B_2$ (and each state of $Q$
can be paired with at most one state of $P$).
This entails that $B_2$ can be obtained from $A_1$ or $A_2$ by 
erasing transition, and that it has at most as many reachable states as 
$min(|P|,|Q|)$.

For $3 \leq i \leq n$, we then compute $B_i = A_i \cap B_{i-1}$
similarly, and obtain that $B_n = A_1 \cap \dots \cap A_n$ has at
most as many reachable states as the smallest $A_i$.

\end{proof}

\section{Test Sets for Linear Context-Free Grammars}

\cftestset*

\begin{proof}

Before building the test set, we introduce some notation.

\subparagraph{Graph of $\gram$.}

Define the labeled graph $\getgraph(\gram) = (V,E)$ where 
$V = \NTerm \cup \set{\final}$,
$\final$ is a new symbol, and
$E \subseteq V \times \Prod \times V$ such that:
\begin{itemize}
\item 
  for non-terminals $A,B \in \NTerm$ and a rule $\rul \in \Prod$,  let
  $(A,\rul,B) \in E$ iff
  $\rul$ is of the form $\mkrule{A}{u B v}$ where $u,v \in \Sigma^*$ (i.e., $B$ is the only non-terminal occurring in
  $\rhs$).
\item 
  for a non-terminal $A \in \NTerm$ and $\rul \in \Prod$, 
  $(A,\rul,\final) \in E$ if and only if
  $\rul = \mkrule{A}{\rhs}$ for some $\rhs \in \Sigma^*$.
\end{itemize}

A \emph{path} of $\getgraph(\gram)$ is a (possibly cyclic) sequence of 
edges of $E$, of the form:
$(A_1,\rul_1,A_2) \cdot
(A_2,\rul_2,A_3) 
\cdots
(A_n,\rul_n,A_{n+1})$.
A path is \emph{accepting} if $A_1 = \Start$ and $A_{n+1} = \final$.

\subparagraph{Link between  $\getgraph(\gram)$ and $\gram$.}

Given a rule $\mkrule{A}{u B v} \in \Prod$,
where $A,B \in \NTerm$ and $u,v \in \Term^*$, 
we denote $\west{\rul} = u$ and $\east{\rul} = v$.
For a rule of the form $\mkrule{A}{u}$ where $u \in \Term^*$
we denote $\west{\rul} = u$ and $\east{\rul} = \varepsilon$.
For a path $P = 
(A_1,\rul_1,A_2) \cdot
(A_2,\rul_2,A_3) \cdot
\cdots
(A_n,\rul_n,A_{n+1})$ 
we define $\west{P} = \west{\rul_1} \cdots \west{\rul_n}$,
and $\east{P} = \east{\rul_n} \cdots \east{\rul_1}$.

Each accepting path $P$ in $\getgraph(\gram)$ corresponds to a word 
$\west{P} \cdot \east{P}$ in $\gram$, and
conversely, for any word $w \in \gram$, there exists an
accepting path (not necessarily unique) in $\getgraph(\gram)$
corresponding to $w$.

\subparagraph{Total order on paths.}

We fix an arbitrary total order $<$ on $\Prod$, and extend it
to sequence of edges in $\Prod^*$ as follows.
Given paths $P_1,P_2 \in \Prod^*$, we have 
$P_1 < P_2$ iff 
\begin{itemize}
\item $|P_1| < |P_2|$ (length of $P_1$ is smaller than length of $P_2$), or
\item $|P_1| = |P_2|$ and 
  $P_1$ is smaller lexicographically than $P_2$.
\end{itemize}

A path $P$ is called \emph{optimal} if it is the minimal path from the first
vertex of $P$ to the last vertex of $P$.

\subparagraph{Test set for $\gram$.}

\begin{figure}
    \centering

    \begin{tikzpicture}[
        pnt/.style={->,circle, fill=black, inner sep=0pt, minimum size=4pt},
        arr/.style={->,dashed, thick},
        edd/.style={->},
        inv/.style={draw=none,opacity=0},
        x=0.9cm,
        scale=1
    ]
    
    \node[pnt] at (0,0) (B0) { };
    \node[pnt] at (2,0) (A1) {  };
    \node[pnt] at (3,0) (B1) {  };
    \node[pnt] at (5,0) (A2) {  };
    \node[pnt] at (6,0) (B2) {  };
    \node[pnt] at (8,0) (A3) {  };
    \node[pnt] at (9,0) (B3) {  };
    \node[pnt] at (11,0) (A4) {  };
    \node[pnt] at (12,0) (B4) {  };
    \node[pnt] at (14,0) (A5) {  };
    
    \path[arr] (B0) edge[bend left=65] node[above] { $Q_1$ } (B1);
    \path[arr] (B1) edge[bend left=65] node[above] { $Q_2$ } (B2);
    \path[arr] (B2) edge[bend left=65] node[above] { $Q_3$ } (B3);
    \path[arr] (B3) edge[bend left=35] node[above] { $Q_4$ } (A5);
    
    \path[arr] (B0) edge[bend left=0] node[below] { $P_1$ } (A1);
    \path[arr] (B1) edge[bend left=0] node[below] { $P_2$ } (A2);
    \path[arr] (B2) edge[bend left=0] node[below] { $P_3$ } (A3);
    \path[arr] (B3) edge[bend left=0] node[below] { $P_4$ } (A4);
    \path[arr] (B4) edge[bend left=0] node[below] { $W_5$ } (A5);

    \path[edd] (A1) edge[bend left=0] node[below] { $e_1$ } (B1);
    \path[edd] (A2) edge[bend left=0] node[below] { $e_2$ } (B2);
    \path[edd] (A3) edge[bend left=0] node[below] { $e_3$ } (B3);
    \path[edd] (A4) edge[bend left=0] node[below] { $e_4$ } (B4);
    
    \node[below] at (B0) { $\Start$ };
    \node[below] at (A5) { $\final$ };
    
    \end{tikzpicture}
    \caption{
        The four optimal subpaths $Q_1$, $Q_2$, $Q_3$, and $Q_4$ define
        $15$ alternative paths from $\Start$ to $\final$ which are all
        strictly smaller (with respect to order $<$) than
        $P_1 e_1 P_2 e_2 P_3 e_3 P_4 e_4 W_5$.
    }
    \label{figure:cftestset}
\end{figure}

Let $\optimalset_k(\gram)$ be the set of words of $\gram$ corresponding to accepting
paths of the form
$P_1 e_1 P_2 \cdots P_n e_n P_{n+1}$, $n \leq k$,
with $P_i \in \Prod^*$, $e_i \in \Prod$, and 
where for $i \in \set{1,\dots,n+1}$, 
$P_i$ is optimal,
and for $i \in \set{1,\dots,n}$,
$P_i e_i$ is not optimal.
By construction, a path in $\optimalset_k(\gram)$ is uniquely
determined (when it exists) by the choice 
of edges $e_1,\dots,e_n$, as optimal paths between
two vertices are unique.
Therefore, $\optimalset_k(\gram)$
contains at most $\sum_{i = 0}^k |\Prod|^i \leq 
2|\Prod|^k$ words.

We now show that $\optimalset_3(\gram)$ is a test set for $\gram$
(which gives us the desired bound of the theorem: $2|\Prod|^k$).
Assume there exist two morphisms $f, g: \Term^* \to \Gamma^*$ such that 
$\restrict{f}{\optimalset_3(\gram)} = \restrict{g}{\optimalset_3(\gram)}$ and 
there exists $w \in G$ such that $f(w) \neq g(w)$.

By assumption, $w$ does not belong to $\optimalset_3(\gram)$, and must correspond 
to a path $P = P_1 e_1 P_2 \cdots P_n e_n P_{n+1}$ for $n \geq 4$, such that
for $i \in \set{1,\dots,n+1}$, 
$P_i$ is optimal, and $P_i e_i$ is not optimal.
We pick $w$ having the property $f(w) \neq g(w)$ such that the path $P$ is the 
smallest possible (according to the order $<$ defined above).

The path $P$ can be 
written $P_1 e_1 P_2 e_2 P_3 e_3 P_4 e_4 W_5$ where 
for $i \in \set{1,2,3,4}$, $P_i$ is optimal, and 
$P_i e_i$ is not optimal ($W_5$ is not necessarily optimal).
For $i \in \set{1,2,3}$,
we define $Q_i$ to be the optimal path from the source of
$P_i e_i$ to its target; hence $Q_i < P_i e_i$.
Moreover, $Q_4$ is defined to be the optimal path
from the source of $P_4 e_4 W_5$ to its target, with $Q_4 < P_4 e_4 W_5$.
Effectively, as shown in Figure~\ref{figure:cftestset},
this defines $15$ paths that can be derived from $P$
by replacing subpaths by their corresponding optimal path 
($Q_1$, $Q_2$, $Q_3$, $Q_4$).

Let $P'$ be one of those $15$ paths (where at least one subpath 
has been replaced by its optimal counterpart
$Q_1$, $Q_2$, $Q_3$, or $Q_4$), and let $w' \in \gram$
be the word corresponding to $P'$.
By construction of $P'$, and by definition of the order $<$, 
we have $P' < P$.
Since we have chosen $P$ to be the smallest possible path such that $f$ and $g$
are not equal on the corresponding word, we deduce that
$f(w') = g(w')$.

To conclude, 
we show that we obtain a contradiction, thanks to 
Lemma~\ref{lemma:t4l4}.
For this, 
we construct two morphisms $f', g': \Sigma_4 \to \Gamma$ as follows
($i$ ranges over $\set{1,2,3,4}$ and $j$ over $\set{1,2,3}$):
\begin{itemize}
\item $f'(a_i) = f(\west{Q_i})$,
\item $f'(\closing{a_i}) = f(\east{Q_i})$,
\item $f'(b_j) = f(\west{P_j e_j})$,
\item $f'(\closing{b_j}) = f(\east{P_j e_j})$.
\item $f'(b_4) = f(\west{P_4 e_4 W_5})$,
\item $f'(\closing{b_4}) = f(\east{P_4 e_4 W_5})$.
\end{itemize}
The morphism $g'$ is defined similarly, using $g$ instead of $f$.
We can then verify that $f'$ and $g'$
coincide on $T_4$, but are not equal on the word 
$b_4 \,b_3\,b_2\,b_1\,
    \closing{b_1}\,\closing{b_2}\,\closing{b_3}\,\closing{b_4}
    \in L_4$, thus contradicting Lemma~\ref{lemma:t4l4}.
\end{proof}

\section{Lower Bound Proof for the Tree Test Sets}\label{proof:lowerbound}

\lowerbound*

\begin{proof}
Our proof is inspired by the lower bound proof for test sets of context-free
languages~\cite{plandowski_testset_1994,plandowski_testset_1995}. 
However, that lower bound did not work for context-free grammars with the 
extra assumption that all non-terminal symbols are starting symbols.
Therefore, their proof cannot be applied for domains where all states are 
initial.
Our contribution is a variant which shows that, 
even when all states of the domain are initial, 
the lower bound still holds and the minimal test-set has a cubic size.

For $n \geq 1$, we first define the domain 
$\domain_{n} = (\Sigma,\State,I=\State,\delta)$ containing
linear trees (lists) of depth $1$ to $3$, and using $n$ different symbols of each
level.
Formally, we have
(we use a functional notation for $\delta$, as $\delta$ is
here deterministic):
\begin{itemize}
\item 
    $\Sigma = \set{  
        \ranked{\noda_j}{1}, \ranked{\nodb_j}{1},
        \ranked{\leaf_j}{0}\ 
            |\ 1 \leq j \leq n}$,
\item $\State = \set{ q_2, q_1, q_0 }$,
\item $\delta(\ranked{\noda_j}{1}, q_2) = (q_1)$,
\item $\delta(\ranked{\nodb_j}{1}, q_1) = (q_0)$,
\item $\delta(\ranked{\leaf_j}{0}, q_0) = ()$.
\end{itemize}

$\domain_n$ recognizes $n^3+n^2+n$ trees.
Our goal is to prove, by contradiction, that $\domain_n$ does not have a 
\treetestset{} $T \subset \domain_n$ of size less than $n^3$.
Let $t = \noda_x(\nodb_y(\leaf_z)) \in \domain_n \setminus T$ for some arbitrary $x, y, z \in [1, n]$. $t$ exists when the size of $T$ is strictly less than $n^3$.
We construct two \stw{s} $\trans_1$ and $\trans_2$ such 
that
$\restrict{\semantics{\trans_1}}{T} = 
  \restrict{\semantics{\trans_2}}{T}$ but 
$
    \semantics{\trans_1}(t) \neq  
    \semantics{\trans_2}(t)$, contradicting the fact that 
    $T$ is a \treetestset.
    
Let $\Output = \set{p,q}$ be an alphabet, and 
$\trans_1 = (\Sigma, \Output, \delta_1)$,
$\trans_2 = (\Sigma, \Output, \delta_2)$, where
\begin{itemize}
\item $\delta_1(\noda_j) = (\varepsilon,pq)$ if $j = x$,
\item $\delta_1(\noda_j) = (\varepsilon,\varepsilon)$ otherwise,
\item $\delta_1(\nodb_j) = (\varepsilon,\varepsilon)$ if $j = y$,
\item $\delta_1(\nodb_j) = (p,q)$ otherwise,
\item $\delta_1(\leaf_j) = (qp)$ if $j = z$,
\item $\delta_1(\leaf_j) = (\varepsilon)$ otherwise,
\end{itemize}
and
\begin{itemize}
\item $\delta_2(\noda_j) = (pq,\varepsilon)$ if $j = x$,
\item $\delta_2(\noda_j) = (\varepsilon,\varepsilon)$ otherwise,
\item $\delta_2(\nodb_j) = (\varepsilon,\varepsilon)$ if $j = y$,
\item $\delta_2(\nodb_j) = (p,q)$ otherwise,
\item $\delta_2(\leaf_j) = (qp)$ if $j = z$,
\item $\delta_2(\leaf_j) = (\varepsilon)$ otherwise.
\end{itemize}
    
We can verify that 
$\restrict{\semantics{\trans_1}}{T} = 
  \restrict{\semantics{\trans_2}}{T}$, but
  $\trans_1(t) \neq  
    \trans_2(t)$, as
 $\trans_1(t) = pqqp$ and 
 $\trans_2(t) = qppq$.
 
We conclude that the only \treetestset{} of $\domain_n$ is $\domain_n$ itself,
which contains $n^3 = (\frac{|\Sigma|}{3})^3$ words. Moreover, the (syntactic)
size of $\domain_n$ is $O(n)$.
\end{proof}

We note that the above transducers $\delta_1$ and $\delta_2$ have the same output on all  $\leaf_k$ and on all  $\nodb_j(\leaf_k)$. Therefore, even if we interactively ask questions as for Theorem~\ref{theorem:linearinteraction}, these questions will not be able to resolve the ambiguity which will appear only at the specific $\noda_x(\nodb_y(\leaf_z))$.

\section{Construction of $\optimalset_3(\gram)$}

To construct $\optimalset_3(\gram)$ for a linear context-free 
grammar $\gram = (\NTerm,\Term,\Prod,\Start)$,
we precompute in time
$O(|\NTerm|^2 |\Prod|)$, for each pair of vertices 
$(A,B)$, the optimal path from $A$ to $B$ in 
$\getgraph(\gram)$.
Then for each possible choice of at most $3$ edges 
$e_1 = (A_1,r_1,B_1)$,
\dots
$e_n = (A_n,r_n,B_n)$,
with $0 \leq n \leq 3$, 
we construct the path 
$P = P_1 e_1 \dots P_n e_n P_{n+1}$ where 
each $P_i$ is the optimal path from $A_{i-1}$ to $B_i$
(if it exists) with $A_0 = \Start$ and $B_{n+1} = \final$
by convention.
We then add 
the word corresponding to $P$ to our result.

To conclude, since the length of each optimal path is bounded by
$|\NTerm|$, we can construct 
$\optimalset_3(\gram)$ in time $O(|\NTerm|  \cdot |\Prod|^3)$.

\section{Proof of Running Time of Algorithm~\ref{algo3}}

\linearinteraction*

\begin{proof}

The correctness and the polynomial running time of Algorithm~\ref{algo3} can
be proved similarly to Algorithm~\ref{algo2}.

Note that we can check whether the NFA $A$ recognizes only one word using the 
following polynomial time procedure.
First, check if there exists a word $w$ recognized by $A$.
If there is, pick a minimal word $w \in \A$, and 
compute the automaton $A \cap B$, where $B$ recognizes all words different than 
$w$ (the size of $B$ is roughly $|w|$).
If the automaton $A \cap B$ is empty, then $A$ recognizes only $w$,
otherwise $A$ recognizes more than one word.

The crucial part of Algorithm~\ref{algo3} is that it invokes the oracle 
$\utrans$ at most $O(|\domain|)$ times. More precisely, we show that 
Algorithm~\ref{algo3} invokes $\utrans$ at most 
$|Q| + 3 \sum_{(q,\ranked{f}{k},(q_1,\dots,q_k) \in \delta} 1 + k$ times,
which is $|Q| + 3|\domain|$, and in $O(|\domain|)$.

Let $\invoked \subseteq T$ be the set of trees for which 
Algorithm~\ref{algo3} invokes the oracle.
Let $\tmin \subseteq T$ be the set of trees which are of the 
form $\gettree(w_\nterm)$ for some $\nterm \in \gram$
(where $w_\nterm$ is the minimal word that can be produced from 
$\nterm$, see Lemma~\ref{lemma:trees}).
Note that, by construction of $\gram$, $|\tmin| \leq |Q|$.

Consider the set $\invoked' = \invoked \setminus \tmin$.
Let $\winvoked' = \set{\default{\Sigma}(t)\ |\ t \in \invoked'}$
(or equivalently, $\invoked' = \set{\gettree(w) \ |\ w \in \winvoked'}$).
We want to prove that $|\invoked'| \leq 3|\domain|$, thus implying that 
$|\invoked| \leq 3|\domain| + |Q|$.
Assume by contradiction $|\invoked'| > 3|\domain|$.

Remember that, by construction of $T$, we have 
$\winvoked' \subseteq \Lin{\gram}$.
For $w \in \winvoked$, consider the first non-epsilon rule in some derivation 
of $w$ in $\Lin{\gram}$. By construction, $\Lin{\gram}$ has at most 
$|\domain|$ rules.

Moreover, since $\winvoked$ contains strictly more than $3|\domain|$ words,
$\winvoked$ must contain at least four words that share the same first 
non-epsilon rule in their derivation.
Let $w_a,w_b,w_c,w_d$ be four such words, and 
$t_a,t_b,t_c,t_d$ their corresponding trees 
(with $t_l = \gettree(w_l)$ for $l \in \set{a,b,c,d}$).

Without loss of generality, assume that Algorithm~\ref{algo3} invoked 
$\utrans$ on the trees $t_a$, $t_b$, $t_c$, and $t_d$ in that order.
By construction of $\Lin{\gram}$, and from the fact that 
$w_a$, $w_b$, $w_c$, and $w_d$ share the first non-epsilon rule, 
we know there exists 
$\ranked{f}{k} \in \Sigma$, $i \in \set{1,\dots,k}$, 
$t_1,\dots,t_{i-1},t_{i+1},\dots,t_k \in T$, and 
$t_i^a,t_i^b,t_i^c,t_i^d \in T$
such that:
\begin{flalign*}
t_a = &f(t_1,\dots,t_{i-1},t_i^a,t_{i+1},\dots,t_k)\\
t_b = &f(t_1,\dots,t_{i-1},t_i^b,t_{i+1},\dots,t_k)\\
t_c = &f(t_1,\dots,t_{i-1},t_i^c,t_{i+1},\dots,t_k)\\
t_d = &f(t_1,\dots,t_{i-1},t_i^d,t_{i+1},\dots,t_k).
\end{flalign*}

Said otherwise $t_a$, $t_b$, $t_c$, and $t_d$ only differ on their $i$th
subtree.
Let $x_i^l = \utrans(t_i^l)$
and $x_l = \utrans(t_l)$, for $l \in \set{a,b,c,d}$.

Given an assignment $\morph: \SigmaPairs \to \Output^*$, 
we define $u_\morph = \morph(f,0)\cdot \utrans(t_1) \cdot \morph(f,1) \cdots \morph(f,i-1)$, and $v_\morph =  \morph(f,i+1) \cdots \utrans(t_k)\cdot \morph(f,k)$.

We consider several cases, all of them leading to a contradiction.
\begin{itemize}
\item 
    There exist $x_i^k,x_i^l\in\{x_i^a, x_i^b, x_i^c\}$ such that
    $x_1$ is not a prefix of $x_2$ and 
    $x_2$ is not a prefix of $x_1$.
    Assume without loss of generality $x_1 = x_i^a$ and $x_2 = x_i^b$.
    Since 
    Algorithm~\ref{algo3} invoked the oracle
    $\utrans$ on the tree $t_d$, there are two
    assignments $\morph,\morph': \SigmaPairs \to \Output^*$ such 
    that $u_\morph x_i^d v_\morph \in A$,
    $u_{\morph'} x_i^d v_{\morph'} \in A$,
    and
    $u_\morph x_i^d v_\morph \neq u_{\morph'} x_i^d v_{\morph'}$.
    Using the invariant of Algorithm~\ref{algo3} on trees $t_a$ and $t_b$, 
    we know:
\begin{align*}
x_a = u_\morph x_i^a v_\morph &= u_{\morph'} x_i^a v_{\morph'}\\
x_b = u_\morph x_i^b v_\morph &= u_{\morph'} x_i^b v_{\morph'}
\end{align*}
    Finally, since $x_i^a$ is not a prefix of $x_i^b$, and
    $x_i^b$ is not a prefix of $x_i^a$, we have
        $\lcp(x_a,x_b) = u_\morph\, \lcp(x_i^a,x_i^b)$ and
        $\lcp(x_a,x_b) = u_{\morph'}\, \lcp(x_i^a,x_i^b)$.
        Thus, 
        $u_{\morph} = u_{\morph'}$ and we can deduce
        $v_{\morph} = v_{\morph'}$.
        Thus, 
        $u_\morph\, x_i^d\, v_\morph = u_{\morph'}\, x_i^d\, v_{\morph'}$, and
        we have a contradiction.
\item
    Two elements of $\{x_i^a, x_i^b, x_i^c\}$ are equal.
    For instance if $x_i^a = x_i^b$, then 
    Algorithm~\ref{algo3} could not have invoked the oracle $\utrans$
    on the tree $t_b$, as the only possible solution for 
    $x_b$ is $x_a$.
\item 
    If we are not in one of the previous cases, we know that
    $x_i^a, x_i^b, x_i^c$ are of the 
    form $x_1$, $x_1 x_2$, $x_1 x_2 x_3$ (not necessarily in that order)
    for some $x_1,x_2,x_3 \in \Output^*$,
    and $x_2 \neq \varepsilon$ and $x_3 \neq \varepsilon$.
    Consider the case where $x_2 x_3 \neq x_3 x_2$.
            
    Since Algorithm~\ref{algo3} invoked $\utrans$ on the tree $t_d$, 
    there are two assignments $\morph,\morph': \SigmaPairs \to \Output^*$ such 
    that $u_\morph x_i^d v_\morph \in A$,
    $u_{\morph'} x_i^d v_{\morph'} \in A$,
    and $u_\morph x_i^d v_\morph \neq u_{\morph'} x_i^d v_{\morph'}$.
    Using the invariant of Algorithm~\ref{algo3}, we know:
\begin{align*}
u_\morph x_1 v_\morph &= u_{\morph'}  x_1  v_{\morph'}\\
u_\morph x_1 x_2 v_\morph &= u_{\morph'}  x_1 x_2  v_{\morph'}\\
u_\morph x_1 x_2 x_3 v_\morph &= u_{\morph'} x_1 x_2 x_3 v_{\morph'}
\end{align*}
    Without loss of generality, assume that $u_{\morph'}$ is a prefix of 
    $u_\morph$ and $v_\morph$ is a suffix of $v_{\morph'}$.
    So $u_\morph = u_{\morph'} u''$ and 
    $v_{\morph'} = v'' v_\morph$ for some $u'',v'' \in \Output^*$.
    Then, we have
    $u'' \, x_1 = x_1 \, v''$,
    $u'' \, x_1 x_2 = x_1 x_2 \, v''$, and 
    $u'' \, x_1 x_2 x_3 = x_1 x_2 x_3 \, v''$.
    We deduce,
    $x_1 \, v'' x_2 = x_1 x_2 \, v''$, and 
    $v''$ commutes with $x_2$.
    
    Similarly, $v''$ commutes with $x_2x_3$.
    Assume by contradiction that $v'' \neq \varepsilon$.
    Then, by Lemma~\ref{lemma:commute}, 
    $x_2$ must commute with $x_2x_3$, and 
    $x_2 x_2 x_3 = x_2 x_3 x_2$, which implies 
    $x_2 x_3 = x_3 x_2$, and $x_2$ commutes with $x_3$,
    contradicting our assumption.
    This means that $v'' = \varepsilon$, 
    $v_\morph = v_{\morph'}$, and 
    $u_\morph  = u_{\morph'}$.
    We thus conclude 
    $u_\morph x_i^d v_\morph = u_{\morph'} x_i^d v_{\morph'}$,
    contradicting the fact that Algorithm~\ref{algo3} invoked 
    $\utrans$ on tree $t_d$.
\item  
    The last case is when the $x_i^a, x_i^b, x_i^c$ are of the 
    form $x_1$, $x_1 x_2$, $x_1 x_2 x_3$ (not necessarily in that order)
    for some $x_1,x_2,x_3 \in \Output^*$,
    with $x_2 \neq \varepsilon$ and $x_3 \neq \varepsilon$,
    and $x_2 x_3 = x_3 x_2$.
    
    Since Algorithm~\ref{algo3} invoked $\utrans$ on the tree $t_c$, 
    there are two assignments $\morph,\morph': \SigmaPairs \to \Output^*$ such 
    that $u_\morph x_i^c v_\morph \in A$,
    $u_{\morph'} x_i^c v_{\morph'} \in A$, 
    and $u_\morph x_i^c v_\morph \neq u_{\morph'} x_i^c v_{\morph'}$,
    where $A$ is the automaton
    constructed in Algorithm~\ref{algo3} at the iteration where 
    $\utrans(t_c)$ was invoked.

    Without loss of generality, assume that $u_{\morph'}$ is a prefix of 
    $u_\morph$ and $v_\morph$ is a suffix of $v_{\morph'}$.
    So $u_\morph = u_{\morph'} u''$ and 
    $v_{\morph'} = v'' v_\morph$ for some $u'',v'' \in \Output^*$.
    We consider three subcases:
    \begin{itemize}
    \item Case $x_i^c = x_1 x_2 x_3$.
    Using the invariant of Algorithm~\ref{algo3} for trees $t_a$ and $t_b$, 
    we know:
\begin{align*}
u_\morph x_1 v_\morph &= u_{\morph'}  x_1  v_{\morph'}\\
u_\morph x_1 x_2 v_\morph &= u_{\morph'}  x_1 x_2  v_{\morph'}
\end{align*}
    Then, we have
    $u'' \, x_1 = x_1 \, v''$,
    $u'' \, x_1 x_2 = x_1 x_2 \, v''$.
    We deduce,
    $x_1 \, v'' x_2 = x_1 x_2 \, v''$, and 
    $v''$ commutes with $x_2$.
    
    Thus, $x_2$ commutes both with $v''$ and $x_3$.
    By Lemma~\ref{lemma:commute}, $v''$ commutes with $x_3$.
    We deduce that 
    $x_1 x_2 x_3 v'' = x_1 v'' x_2 x_3 = u'' x_1 x_2 x_3$, 
    and finally that 
    $u_{\morph} x_1 x_2 x_3 v_{\morph} = u_{\morph'} x_1 x_2 x_3 v_{\morph'}$,
    contradicting $u_\morph x_i^c v_\morph \neq u_{\morph'} x_i^c v_{\morph'}$.
    
    \item Case $x_i^c = x_1 x_2$.
    Using the invariant of Algorithm~\ref{algo3} for the trees $t_a$ and $t_b$, 
    we know:
\begin{align*}
u_\morph x_1 v_\morph &= u_{\morph'}  x_1  v_{\morph'}\\
u_\morph x_1 x_2 x_3 v_\morph &= u_{\morph'}  x_1 x_2 x_3 v_{\morph'}
\end{align*}
    Then, we have
    $u'' \, x_1 = x_1 \, v''$,
    $u'' \, x_1 x_2 x_3 = x_1 x_2 x_3 \, v''$.
    We deduce,
    $x_1 \, v'' x_2 x_3 = x_1 x_2 x_3 \, v''$, and 
    $v''$ commutes with $x_2 x_3$.

    Since $x_2 x_3$ commutes both with $v''$ and $x_2$ (as $x_2$ and 
    $x_3$ commute), we know by Lemma~\ref{lemma:commute} that 
    $x_2$ and $v''$ commute.
    We deduce that 
    $x_1 x_2 v'' = x_1 v'' x_2 = u'' x_1 x_2$, 
    and finally that 
    $u_{\morph} x_1 x_2 v_{\morph} = u_{\morph'} x_1 x_2 v_{\morph'}$,
    contradicting $u_\morph x_i^c v_\morph \neq u_{\morph'} x_i^c v_{\morph'}$.
    
    \item Case $x_i^c = x_1$.
    Using the invariant of Algorithm~\ref{algo3} for the trees $t_a$ and $t_b$, 
    we know:
\begin{align*}
u_\morph x_1 x_2 v_\morph &= u_{\morph'}  x_1 x_2 v_{\morph'}\\
u_\morph x_1 x_2 x_3 v_\morph &= u_{\morph'}  x_1 x_2 x_3  v_{\morph'}
\end{align*}
    Then, we have
    $u'' \, x_1 x_2 = x_1 x_2 \, v''$,
    $u'' \, x_1 x_2 x_3 = x_1 x_2 x_3 \, v''$.
    We deduce,
    $x_1 x_2 \, v'' x_3 = x_1 x_2 x_3 \, v''$, and 
    $v''$ commutes with $x_3$.
    
    Thus, $x_3$ commutes both with $v''$ and $x_2$.
    By Lemma~\ref{lemma:commute}, $v''$ commutes with $x_2$.
    Moreover, since 
    $u'' \, x_1 x_2 = x_1 x_2 \, v''$,
    we have 
    $u'' \, x_1 x_2 = x_1 \, v'' x_2$,
    and $u'' \, x_1 = x_1 \, v''$.
    We conclude that 
    $u_{\morph} x_1 v_{\morph} = u_{\morph'} x_1 v_{\morph'}$,
    contradicting $u_\morph x_i^c v_\morph \neq u_{\morph'} x_i^c v_{\morph'}$.
    \end{itemize}
\end{itemize}

\end{proof}

\section{Proof of Theorem~\ref{theorem:withvalues}}

\theoremwithvalues*

\begin{proof}
Let $\domain$ be a domain with the property above.
First, remark that for some $s, u\in \Output^*$, if $t'\in \domain'$ and $t'$ contains $\dstring{s}$, then all the trees obtained from $t'$ by replacing $\dstring{s}$ by $\dstring{u}$ are also in $\domain'$.

We build the linear tree test set $\exts$ as follows.

Let us associate to every symbol $f^{(k)} \in \Sigma$ and to every position in the arity $i \in [1,k]$ the tree $t_i$ as provided by the hypothesis. Let us take one of its $t'_i \in \domain'$ (denoted $\treec{f}{i}{s}$) such that $t'_i$ contains at least one $\dstring{s}$ for some $s\in\Output^*$, the other $\dstring{}$ being mapped to any other constant.
For all other $u\in\Output^*$, remark that $\treec{f}{i}{u}\in \domain'$.

Define $\exts$ containing all ``default'' trees and all their variations, changing one \ta{} to \tb{}:

$$\exts = \bigcup_{f^{(k)}\in \Sigma} \left( \{f^{(k)}(\treec{f}{i}{\ta}_{i\in[1,k]})\} \cup \{ f^{(k)}(\treec{f}{i}{\begin{cases} \tb & \mbox{if } i = j \\ \ta & \mbox{else} \end{cases}}_{i\in[1,k]}) | j\in[1,k]\}\right)$$.

Note that $\exts$ is included in $\domain'$, since the original tree was in $\domain'$.

The size of the set $\exts$ is at most $2|\Sigma|\cdot A$.
We will now prove that $\exts$ is a tree test set for $\domain'$. Indeed,
suppose that we have two transducers $\trans_1$, and $\trans_2$, and that they are equal on $\exts$.
We will show the strong result that they have the \textit{same constants}.

Let $f^{(k)} \in \Sigma$.
Since $\treec{f}{i}{\ta},\treec{f}{i}{\tb} \in \exts$ for $i \in [1, k]$, we can write $$\semantics{\trans_1}(\treec{f}{i}{\ta}) = \semantics{\trans_2}(\treec{f}{i}{\ta})\quad\wedge\quad \semantics{\trans_1}(\treec{f}{i}{\tb}) = \semantics{\trans_2}(\treec{f}{i}{\tb})$$
which, for some $w_i$ and $z_i$ can be rewritten to:
$$w_i \cdot \ta \cdot z_i =  w'_i \cdot \ta \cdot z'_i\quad\wedge\quad
w_i \cdot \tb \cdot z_i =  w'_i \cdot \tb \cdot z'_i$$
We apply Lemma~\ref{lemma:annihilation} to conclude that $w_i = w'_i$ and $z_i = z'_i$, so we can remove the primes.

By hypothesis, we have that:
$$
\semantics{\trans_1}(f^{(k)}(\treec{f}{i}{\ta}_{i\in[1,k]})) = \semantics{\trans_2}(f^{(k)}(\treec{f}{i}{\ta}_{i\in[1,k]})) $$
which we can rewrite to:
$$u_0 w_1 \ta z_1 \cdot u_1 \cdots 
w_k \ta z_k \cdot u_k =
u'_0 w_1 \ta z_1 \cdot u'_1 \cdots 
w_k \ta z_k \cdot u'_k
$$
We also have $k$ other similar equalities by changing any of the \ta{} by a \tb{} at the same place in the two sides of the equation.
Using Lemma~\ref{lemma:annihilation}, and by changing the first \ta{} to \tb{}, we obtain that $u_0 w_1 = u'_0 w_1$ so $u_0 = u'_0$. After simplifying the equations, it remains that:
$$u_1 \cdots 
w_k \ta z_k \cdot u_k =
u'_1 \cdots 
w_k \ta z_k \cdot u'_k
$$
so we can continuously apply the lemma to obtain that $u_1 = u'_1, \ldots u_k = u'_k$.
Hence $\exts$ is a tree test set of linear size.
\end{proof}

\end{document}